\newcommand{\inshort}[1]{\iftoggle{shortversion}{#1}{}} 
\newcommand{\inlong}[1]{\iftoggle{shortversion}{}{#1}}  
\newtheorem{theorem}{Theorem}[section]
\newtheorem{observation}[theorem]{Observation}
\newtheorem{lemma}[theorem]{Lemma}
\newtheorem{corollary}[theorem]{Corollary}
\newtheorem{claim}[theorem]{Claim}
\newtheorem{proposition}[theorem]{Proposition}
\newtheorem{definition}[theorem]{Definition}
\newtheorem*{rep@theorem}{\rep@title}
\newcommand{\newreptheorem}[2]{%
\newenvironment{rep#1}[1]{%
 \def\rep@title{#2 \ref{##1}}%
 \begin{rep@theorem}}%
 {\end{rep@theorem}}}
\newcommand{\defcal}[1]{\expandafter\newcommand\csname c#1\endcsname{{\mathcal{#1}}}}
\newcounter{calBbCounter}
    \edef\letter{\Alph{calBbCounter}}
\newcommand{\R}{{\mathbb{R}}}
\newcommand{\eps}{\varepsilon}
\newcommand{\smax}{\max \nolimits^{(2)}}
\newcommand{\G}{\mathcal{G}}
\newcommand{\DSP}{\ensuremath{\mathbf{DSP}}}
\newcommand{\cappart}{\times}
\begin{document}

\title{Distributed Signaling Games\inshort{\\[-2mm]{\large (Extended Abstract)}}}

\author{Moran Feldman\thanks{Microsoft Research, Israel, and EPFL, {\tt moran.feldman@epfl.ch}.}
\and Moshe Tennenholtz\thanks{Microsoft Research, Israel, and Technion-IIT, {\tt moshet@ie.technion.ac.il}. }
\and Omri Weinstein\thanks{Microsoft Research, Israel, and Princeton University, {\tt oweinste@cs.princeton.edu}.
Research supported by a Simons fellowship in Theoretical Computer Science.}}

\maketitle

\begin{abstract}

A recurring theme in recent computer science literature is that proper design of signaling schemes is a crucial aspect 
of effective mechanisms aiming to optimize social welfare or revenue. One of the research endeavors of this line of work is understanding the algorithmic and computational complexity of
designing efficient signaling schemes. In reality, however, information is typically 
not held by a central authority, but is distributed among multiple sources (third-party ``mediators"), a fact that dramatically changes the strategic and combinatorial nature of the signaling problem, making it a game between information providers, as opposed to a traditional mechanism design problem.

In this paper we introduce {\em distributed signaling games},  while using display advertising as a canonical example for introducing this 
foundational framework. A distributed signaling game may be a pure coordination game (i.e., a distributed optimization task), or a non-cooperative game. 
In the context of pure coordination games, 
we show a wide gap between the computational complexity of the centralized and distributed signaling problems, proving that 
distributed coordination on revenue-optimal signaling is a much harder problem than 
its ``centralized" counterpart.
On the other hand, we show that if the information structure of each mediator is assumed to be ``local", then 
there is an efficient algorithm that finds a near-optimal ($5$-approximation) distributed signaling scheme.

In the context of non-cooperative games, the outcome generated by the mediators' signals may have different value to each. 
The reason for that is typically the desire of the auctioneer to align the incentives of the mediators with his own by a compensation relative to the marginal benefit from their signals. We design a mechanism for this problem via a novel application of Shapley's value, and show that it possesses some interesting properties; in particular, it always admits a pure Nash equilibrium, and it never decreases the revenue of the auctioneer.

\end{abstract}

\thispagestyle{empty} \setcounter{page}{0}
\newpage

\section{Introduction}

The topic of signaling has recently received much attention in the computer science literature on mechanism design
\cite{CCDT14, DIR14, Dughmi14, Emek14}. A recurring theme of this literature is that proper design of a signaling scheme is crucial for 
obtaining efficient outcomes,
such as social welfare maximization or revenue maximization. 
In reality, however, sources of information
are typically not held by a central authority, but are rather distributed among third party mediators/information providers, a fact which dramatically changes the setup to be studied, making it a game between information providers rather than a more classic mechanism design problem. Such a game is in the spirit of work on the theory of teams in economics \cite{MarschakRadner},  whose computational complexity remained largely unexplored. 
The goal of this paper is to initiate an algorithmic study of such games, which we term {\em distributed signaling games}, 
via what we view as a canonical example: Bayesian auctions; and more specifically, display advertising in the presence of third party external 
mediators (information provides), capturing the (multi-billion) ad exchanges industry.   

Consider a web-site owner that auctions each user's visit to its site, a.k.a. impression. The impression types are assumed to arrive from a commonly known distribution. The bidders are advertisers who know that distribution, but only the web site owner knows the impression type instantiation, consisting of identifiers such as age, origin, gender and salary of the web-site visitor. As is the practice in existing ad exchanges, we assume the auction is a second price auction. The web-site owner decides on the information (i.e., signal) about the instantiation to be provided to the bidders, which then bid their expected valuations for the impression given the information provided. The selection of the proper signaling by the web-site is a central mechanism design problem. Assume, for example, an impression associated with two attributes: whether the user is male or female on one side, and whether he is located in the US or out of the US on the other side. This gives 4 types of possible users. Assume for simplicity that the probability of arrival of each user type is 1/4, and that there are four advertisers each one of them has value of \$100 for a distinguished user type and \$0 for the other types, where these values are common-knowledge. One can verify that an auctioneer who reveals no information receives an expected payoff of \$25, an auctioneer who reveals all information gets no payoff, while partitioning the impression types into two pairs, revealing only the pair of the impression which was materialized (rather than the exact instantiation) will yield a payoff of \$50, which is much higher revenue.

While the above example illustrates the importance of signaling, and its natural fit to mechanism design, its major drawback is 
in the unrealistic manner in which information is manipulated:
while some information about the auctioned item is typically published by the ad network \cite{YWZ13} (such information 
is modeled here as a public prior), 
and despite the advertisers' effort to perform ``behavioral targeting" by clever data analysis (e.g., utilizing  the browsing history of a specific user to 
infer her interests), the quantity of available contextual information and market expertise is often way beyond the capabilities of both advertisers and 
auctioneers.  This reality gave rise to ``third-party'' companies which develop technologies for 
collecting data and online statistics used to infer the contexts of auctioned impressions (see, e.g., \cite{MayerMitchell} and references therein).
Consequently, a new \emph{distributed} ecosystem has emerged, in which many third-party companies operate within the market
aiming at maximizing their own utility (royalties or other compensations), while significantly increasing the effectiveness of display advertising, as 
suggested by the following article recently published by 
Facebook:

\blockquote{\emph{``Many businesses today work with third parties such as Acxiom, Datalogix, and Epsilon to help manage and understand their marketing efforts. 
For example, an auto dealer may want to customize an offer to people who are likely to be in the market for a new car. The dealer also might want to 
send offers, like discounts for service, to customers that have purchased a car from them. To do this, the auto dealer works with a third-party company to 
identify and reach those customers with the right offer".}\\
\;\;\; \small{($\mathsf{www.facebook.com}$, ``Advertising and our Third-Party Partners", April 10, 2013.)}}

Hence, in reality sources of information are distributed. Typically, the information is distributed among several mediators or information providers/brokers, and is not held (or mostly not held) by a central authority/web-site owner.  In the display advertising example one information source may know the gender and one may know the location of the web-site visitor, while the web-site itself often lacks the capability to track such information.  The information sources need to decide on the communicated information. In this case the information sources become players in a game. To make the situation clearer, assume (as above) that the value of each impression type for each bidder is public-knowledge (as is typically the case in repeated interactions through ad exchanges which share their logs with the participants), and the only unknown entity is the instantiation of the impression type;  given the information learned from the information sources each bidder will bid his true expected valuation; hence, the results of this game are determined solely by the information providers. Notice that if, in the aforementioned example, the information provider who knows the gender reveals it while the other reveals nothing, then the auctioneer recieves a revenue of \$50 as in the centralized case, while the cases in which both information providers reveal their information or none of them do so result in lower revenues. This shows the subtlety of the situation. 

The above suggests that a major issue to tackle is the study of {\em distributed signaling games}, going beyond the realm of classical mechanism design. We use a model of the above display advertising setting, due to its centrality, as a tool to introduce this novel foundational topic.  The distributed signaling games  may be pure coordination games (a.k.a. distributed optimization), or non-cooperative games. In the context of pure coordination games each information source has the same utility from the output created by their joint signal. Namely, in the above example if the web-site owner pays each information source proportionally to the revenue obtained by the web-site owner then the aims of the information sources are identical. The main aim of the third parties/mediators is to choose their signals based on their privately observed information in a  distributed manner in order to optimize their own payoffs. Notice that in a typical embodiment, which we adapt, due to both technical and legal considerations, the auctioneer does not synthesize reported signals into new ones nor the information providers  are allowed to explicitly communicate among them about the signals, but can only broadcast information they individually gathered. The study of the computational complexity of this highly fundamental problem is the major technical challenge tackled in this paper. Interestingly, we show a wide gap between the computational complexity of the centralized and of the distributed signaling setups, proving that coordinating on optimal signaling is a much harder problem than the one discussed in the context of centralized mechanism design. On the other hand we also show a natural restriction on the way information is distributed among information providers, which allows for  an efficient constant approximation scheme.  

In the context of non-cooperative games the outcome generated by the information sources' reports may result in a different value for each of them. The reason for that is typically the desire of the auctioneer to align the incentives of the mediators with his own by a compensation relative to the marginal benefit from their signals. In the above example one may compare the revenue obtained without the additional information sources, to what is obtained through their help, and compensate relatively to the Shapley values of their contributions, which is a standard (and rigorously justified) tool to fully divide a gain yielded by the cooperation of several parties. Here we apply such division to distributed signaling games, and show that it possesses some interesting properties: in particular the corresponding game has a pure strategy equilibrium, a property of the Shapley value which is shown for the first time for signaling settings (and is vastly different from previous studies of Shapley mechanisms in non-cooperative settings such as cost-sharing games~\cite{Roughgarden}).

\subsection{Model} \label{sec_model}
Our model is a generalization of the one defined in \cite{GNS07}. 
There is a ground set $I = [n]$ of potential items (contexts) to be sold and a set $B = [k]$ of bidders. The value of 
item $j$ for bidder $i$ is given as $v_{ij}$. Following the above discussion (and the previous line of work, e.g., \cite{KEW07,GNS07}), 
we assume the valuation matrix $V=\{v_{i,j}\}$ is publicly known. 
An auctioneer is selling a single random item $j_R$, distributed according to some publicly known prior distribution $\mu$ over $I$, 
using a second price auction (a more detailed description of the auction follows).
There is an additional set  $M = [m]$ of ``third-party'' mediators. Following standard practice in game theoretic information models~\cite{Aumann76,Moses95,Emek14},
we assume each  mediator $t \in M$ is equipped with a {\it partition} (signal-set)  $\cP_t \in \Omega(I)$\footnote{For a set $S$, $\Omega(S) \triangleq \{\cA \subseteq 2^S \mid \bigcup_{A \in \cA} A = S, \forall_{A,B\in \cA} A\cap B = \varnothing\}$ is the collection of all partitions of $S$.}. Intuitively, $\cP_t$ captures the extra information $t$ has about the item which is about to be sold---he knows the set $S \in \cP_t$ to which the item $j_R$ belongs, but has no further 
knowledge about which item of $S$ it is (except for the a priori distribution $\mu$)---in other words, the distribution $t$ has in mind is $\mu|_S$.
For example, if the signal-set partition $\cP_t$ partitions the items of $I$ into pairs, then mediator $t$ knows to which pair $\{j_1, j_2\} \in \cP_t$ the item $j_R$ belongs, but 
he has no information whether it is $j_1$ or $j_2$, and therefore, from her point of view, $\Pr[j_R = j_1] = \mu(j_1) / \mu(\{j_1, j_2\})$. 

Mediators can signal some (or all) of the information they own to the network.
Formally, this is represented by allowing each mediator $t$ to report {\it any} super-partition $\cP_t'$, which is obtained
by  merging partitions in her signal-set partition $\cP_t$ (in other words $\cP_t$ must be a refinement of $\cP'_t$). In other words, a mediator may report any partition $\cP_t' $
for which there exists a set $\cQ'_t \in \Omega(\cP_t)$ such that $\cP'_t = \{\cup_{S \in A} S ~|~ A \in \cQ'_t\}$.
In particular, a mediator can always report $\{I\}$, in which case we say that he remains silent since he does not contribute any information.
The signals $\cP_1' ,\cP_2' ,\ldots , \cP_m '$ reported by the mediators  are \emph{broadcasted}\footnote{By saying that a mediator reports $\cP'_t$, we mean that he reports the bundle $S\in \cP'_t$ for which $j_R \in S$. The reader may wonder why our model is a broadcast model, and does not allow the mediators to report their information to the auctioneer through private channels, in which case the ad network will be able to manipulate and publish whichever information that best serves its interest. The primary reason for the broadcast assumption is that the online advertising market is highly dynamic and mediators often ``come and go'', so implementing such ``private contracts'' is infeasible. The second reason is that real-time bidding environments cannot afford the latency incurred by such a two-phase procedure in which the auctioneer first collects the information, and then selectively publishes it.
The auction process is usually treated as a ``black box", and modifying it harms the modularity of the system.} to the bidders,  inducing a combined partition $\cP \triangleq \times_{t=1}^m \cP'_t = \{ \cap_{i \in M} A_i \mid A_i \in \cP'_i \}$, which we call the {\em joint partition} (or {\em joint signal}).
$\cP$ splits the auction into separate ``restricted'' auctions. For each bundle $S\in \cP$, the item $j_R$ belongs to $S$ with probability $\mu(S)=\sum_{j\in S} \mu(j)$, in which case $S$ is signaled to 
the  bidders and a second-price auction is performed over $\mu|_S$. 
Notice that if the signaled bundle is 
$S\subseteq I$, then the (expected) value of bidder $i$ for $j_R \sim \mu|_S$
is $v_{i,S} = \frac{1}{\mu(S)}\sum_{j\in S} (\mu(j) \cdot v_{ij})$, and the truthfulness of the second price auction implies that this will also be bidder $i$'s bid for the restricted auction. The winner of the auction is the bidder with the maximum bid $\max_{i \in B} v_{i, S}$, and he is charged the second highest valuation for that bundle $\smax_{i \in B} v_{i, S}$. Therefore, the auctioneer's revenue with respect to $\cP$ is the expectation (over $S\in_R \cP$) of the price paid by 
the winning bidder:
\[ R(\cP) = \sum_{S\in\cP} \mu(S)\cdot \smax_{i \in B}(v_{i, S}) \enspace. \]

The joint partition $\cP$ signaled by the mediators can dramatically affect the revenue of the auctioneer.  
Consider, for example, the case where $V$ is the $4\times 4$ identity matrix, $\mu$ is the uniform distribution, and 
$M$ consists of two mediators associated with the partitions $\cP_1 = \{\{1,2\} \;,\;\{ 3,4 \}\}$ and $\cP_2 = \{\{1,3\} \;,\;\{ 2,4 \}\}$. 
If both mediators remain silent, the revenue is $R(\{I\}) = 1/4$ (as this is the average value of all $4$ bidders for a random item).
However, observe that $\cP_1\times\cP_2 = \{\{1\},\{2\},\{3\},\{4\}\}$, and the second highest value in every column of $V$ is $0$,
thus, if both report their partitions, the revenue drops to $R(\cP_1\times\cP_2) = 0$. Finally, if mediator
$1$ reports $\cP_1$, while meditor $2$ keeps silent, the revenue increases from $1/4$ to $R(\cP_1) = 1/2$, as the value of each pair of items is $1/2$ for two different bidders (thus, the second highest price for each pair is $1/2$). This example can be easily generalized to show that in general the intervention of mediators can
increase the revenue by a factor of $n/2$ !

Indeed, the purpose of this paper is to understand how mediators' (distributed) signals affect the revenue of the auctioneer.
We explore the following two aspects of this question: 
\begin{enumerate}
\item (Computational) Suppose the auctioneer has control over the signals reported by the mediators. We study the computational complexity 
of the following problem. {\it Given a $k\times n$ matrix $V$ of valuations and mediators' partitions $\cP_1 ,\cP_2 ,\ldots , \cP_m$, what is the revenue 
maximizing joint partition $\cP = \cP'_1\times\ldots \times \cP'_m$?} 
We call this problem the \emph{Distributed Signaling Problem}, and denote it by $\mathbf{DSP}(n,k,m)$.

We note that the problem studied in \cite{GNS07} is a special case of {\DSP}, in which there is a \emph{single} mediator ($m = 1$) with \emph{perfect knowledge} 
about the item sold and can report any desirable signal (partition).\footnote{In other words, $\cP_1$ is the partition of $I$ into singletons.}  

\item (Strategic) What if the auctioneer cannot control the signals reported by the mediators (as the reality of the problem usually entails)?
{\it Can the auctioneer introduce compensations 
that will incentivize mediators to report signals leading to increased revenue in the auction, when each mediator is acting selfishly?}

This is a mechanism design problem: 
Here the auctioneer's goal is to design a payment rule (i.e., a mechanism) 
for allocating (part of) his profit from the auction among the mediators, based on their reported signals and the auction's outcome, 
so that global efficiency (i.e., maximum revenue) emerges from their signals.
\end{enumerate}
Section~\ref{sc:our_results} summarizes our findings regarding the two above problems. 

\subsection{Our Results} \label{sc:our_results}

Ghosh et al.~\cite{GNS07} showed that computing the revenue-maximizing signal in their ``perfect-knowledge'' setup
is $NP$-hard, but present an efficient algorithm for computing a $2$-approximation of the optimal signal (partition). 
We show that when information is distributed, the problem becomes much harder.
More specifically, we present a gap-preserving reduction from the \emph{Maximum Independent Set} problem to {\DSP}.
\begin{theorem}[Hardness of approximating {\DSP}]\label{cor:hardness_DSP_IS}
If there exists an $O(m^{1/2-\eps})$ approximation (for some constant $\eps > 0$) for instances of $\DSP(2m, m + 1, m)$, 
then there exists a $O(N^{1-2\eps})$ approximation for Maximum Independent Set (MIS$_N$), where $N$ is the number of nodes in 
the underlying graph of the MIS instance.
\end{theorem}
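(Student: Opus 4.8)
The plan is to give a polynomial-time, gap-preserving reduction from Maximum Independent Set to $\DSP$. Given a graph $G=(V,E)$ with $|V|=N$, I will build a $\DSP$ instance with $m=\Theta(N^{2})$ mediators (so $2m$ items and $m+1$ bidders, with a uniform prior) whose optimal revenue is, up to a negligible additive term, a universal constant $c$ times the independence number $\alpha(G)$, and such that \emph{any} joint signal of revenue $r$ can be decoded in polynomial time into an independent set of size $\Omega(r/c)$. Granting this, running an $O(m^{1/2-\eps})$-approximation for $\DSP$ on the produced instance returns a signal of revenue $\Omega\!\bigl(c\,\alpha(G)/m^{1/2-\eps}\bigr)$, which decodes into an independent set of size $\Omega\!\bigl(\alpha(G)/m^{1/2-\eps}\bigr)=\Omega\!\bigl(\alpha(G)/N^{1-2\eps}\bigr)$, i.e.\ an $O(N^{1-2\eps})$-approximation for $\mathrm{MIS}_{N}$. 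The count $m=\Theta(N^{2})$ is precisely what turns the bound $m^{1/2-\eps}$ into $N^{1-2\eps}$, and is natural here in any case: the construction wants one ``conflict bidder'' per edge, so $k=m+1$ must be at least $\binom{N}{2}$.

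For the gadget I would follow the logic of the $4\times 4$ identity-matrix example in Section~\ref{sec_model}, where over-refining a bundle into singletons collapses its second-price contribution to zero, so that revealing \emph{too much} destroys revenue. Attach to each vertex $v$ a private ``profitable'' pair of items, and give a mediator the ability to carve this pair out of a large ``neutral'' bundle as a single block; balance the valuations --- using a per-vertex ``profit bidder'' and an auxiliary bidder --- so that a carved-out pair contributes exactly $c$ to the revenue, while each of its singletons, and the neutral bundle itself, contribute essentially $0$ at the second price. Then use the conflict bidders to couple adjacent vertices: $\beta_{\{u,v\}}$ has positive value on items of both $u$ and $v$, arranged so that carving out \emph{both} profitable pairs of an edge forces a shared region to be over-split, destroying strictly more than the $2c$ the two pairs would have contributed. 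Any mediators not used by the gadget are inert placeholders (partition $\{I\}$) that only serve to make $k=m+1$ large enough; all numbers are $\mathrm{poly}(N)$, and the instance has exactly $2m$ items, $m+1$ bidders and $m$ mediators, as required by $\DSP(2m,m+1,m)$.

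Completeness is routine: for an independent set $A$, carve out exactly the profitable pairs of the vertices in $A$; since $A$ is independent no conflict bidder is activated, and the revenue is $c|A|$ plus the negligible baseline. Soundness is the heart of the proof. From an arbitrary joint signal I would first argue a \emph{normalization}: in this instance, replacing a reported partition by a coarsening never decreases the revenue by more than a negligible amount --- an extra refinement either splits a profitable pair (losing $c$) or is (essentially) inert --- so up to a negligible loss a signal is described by the \emph{set} $S$ of profitable pairs that get carved out. Its revenue is then at most the baseline plus $c|S|$, minus the over-split penalties charged by the conflict bidders of the edges lying inside $S$; since each such penalty exceeds one unit of $c$, greedily deleting one endpoint of every such edge leaves an independent set $A\subseteq S$ with $c|A|\ge r-\text{baseline}=\Omega(r)$. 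This $A$ is the decoding, and together with completeness it yields the gap preservation.

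The step I expect to be the main obstacle is engineering the valuation matrix (together with the prior and the auxiliary bidder) so that three requirements hold \emph{simultaneously}: (a) coarsening never helps, so that the normalization is valid and the problem reduces to the combinatorial choice ``which profitable pairs to carve out''; (b) the per-edge over-split penalty is strictly larger than the $c$ gained from an extra vertex, so that the optimum is a genuine independent set rather than a merely near-independent set; and (c) the total baseline revenue --- the neutral bundle plus every shared region that remains intact --- is negligible compared with $c\,\alpha(G)$, so that the reduction is truly gap-preserving and not just order-of-magnitude. Requirements (b) and (c) pull against one another, since raising a per-conflict penalty tends to raise the baseline, and resolving this tension by a careful choice of the numerical values is the delicate point; once it is done, the two implications above together with the parameter bookkeeping ($n=2m$, $k=m+1$, $m=\Theta(N^2)$) complete the proof.
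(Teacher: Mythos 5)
Your high-level skeleton matches the paper's: a polynomial, gap-preserving reduction from MIS with $m=\Theta(N^2)$ mediators, completeness by activating the gadgets of an independent set, and soundness by decoding a high-revenue signal into a near-independent set, with the exponent arithmetic $m^{1/2-\eps}=\Theta(N^{1-2\eps})$ closing the argument. But the proof has a genuine gap: the gadget, which is the entire content of the theorem, is never constructed, and the mechanism you sketch for it does not appear to be realizable. In a second-price revenue objective every part of a partition contributes a \emph{nonnegative} amount, so the only way to make adjacency costly is to nullify gains, not to charge penalties. Your plan requires a per-edge ``over-split penalty'' strictly exceeding the per-vertex gain $c$, which forces each edge's shared region to carry an intact baseline contribution of more than $c$; summed over up to $\binom{N}{2}$ edges this baseline is $\Omega(N^2 c)$, swamping the target value $c\,\alpha(G)\le cN$ and destroying gap preservation. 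This is exactly the tension between your requirements (b) and (c) that you flag as ``the delicate point'' and leave unresolved; as stated (penalty $>$ gain \emph{and} negligible baseline) I do not see how it can be resolved, so the proposal does not yet constitute a proof.

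For comparison, the paper's construction avoids per-edge conflict bidders entirely and encodes the edge structure into the \emph{mediators' partitions} rather than into the valuations. Each node $v$ gets $\ell$ item pairs $(j_{v,k},j'_{v,k})$ and $\ell$ mediators; the informative part of mediator $m_{v,k}$ is $\{j_{v,k},j'_{v,k}\}\cup\bigcup_{u:uv\in E}H_u$, where $H_u$ is the set of helper items of $u$. A part contributes $1$ exactly when it contains a non-helper item and at least one other item, so the ``penalty'' for activating mediators of adjacent nodes is that the joint partition strands $j_{v,k}$ in a singleton (a neighbor's mediator separates it from its own helper $j'_{v,k}$, and a second active mediator of $v$ separates it from the neighbors' helpers), zeroing its contribution. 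This yields the one-sided bound $R\le |S'|+N+1$, where $S'$ is the independent core of the active set, and the additive $O(N)$ slack---which plays the role your baseline would play---is absorbed not by making it small in absolute terms but by \emph{amplification}: each node is replicated $\ell=N+1$ times so that the completeness value $\ell\cdot\alpha(G)$ dominates it. Note also that the reason $m=\Theta(N^2)$ in the paper is this amplification, not a need for $\binom{N}{2}$ conflict bidders. If you want to salvage your writeup, replacing the conflict-bidder mechanism with nullification-plus-amplification of this kind is the missing idea.
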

\noindent Since the Maximum Independent Set problem is NP-hard to approximate to within a factor of $n^{1-\rho}$ for any fixed $\rho>0$ \cite{Hastad01},  
Theorem~\ref{cor:hardness_DSP_IS} indicates that approximating the revenue-maximizing 
signal, even within a multiplicative factor of $O((\min\{n, k, m\})^{1/2-\eps})$, is NP-hard. 
In other words, one cannot expect a reasonable approximation ratio for $\DSP(n, k, m)$ when the three parameters of the problem are all ``large''. The next theorem shows that a ``small'' value for either one of the parameters $n$ or $k$ indeed implies a better approximation ratio.
\begin{theorem}[Approximation algorithm for small $n$ or $k$] \label{thm:simple_approximations}
There is an efficient $\max\{1, \min\{n, k - 1\}\}$-approximation algorithm for {\DSP}.
\end{theorem}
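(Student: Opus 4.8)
The plan is to show that the trivial solution in which every mediator stays silent --- so that the joint partition is $\{I\}$ --- already meets the stated guarantee; writing this partition down is obviously an efficient ``algorithm''. Its revenue is $R(\{I\}) = \smax_i v_{i,I}$, where $v_{i,I} = \sum_j \mu(j) v_{ij}$ is the publicly known unconditional expected value of bidder $i$; in other words $R(\{I\})$ is the second-largest unconditional value. Fix an optimal joint partition $\cP^*$ and let $b^*$ be a bidder attaining $\max_i v_{i,I}$. The one elementary fact I would use is that $\smax_i v_{i,S} \le \max_{i \ne b^*} v_{i,S}$ for every bundle $S$: when $b^*$ happens to be the top bidder on $S$ the two sides coincide, and otherwise the right-hand side equals the \emph{largest} value on $S$, which certainly dominates the second-largest. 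Consequently
\[
\mathrm{OPT} \;=\; \sum_{S \in \cP^*} \mu(S)\,\smax_i v_{i,S} \;\le\; \sum_{S \in \cP^*} \mu(S) \max_{i \ne b^*} v_{i,S} \enspace .
\]

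I would then extract the two halves of the bound from this single inequality. For the factor $k-1$: since valuations are non-negative, $\max_{i\ne b^*} v_{i,S} \le \sum_{i \ne b^*} v_{i,S}$; interchanging summations and using the identity $\sum_{S \in \cP^*}\mu(S) v_{i,S} = v_{i,I}$ (the conditional values average back to the unconditional one) gives $\mathrm{OPT} \le \sum_{i\ne b^*} v_{i,I} \le (k-1) R(\{I\})$, because each of these $k-1$ terms is at most the second-largest unconditional value. For the factor $n$: since $v_{i,S}$ is the $\mu$-weighted average of $\{v_{ij}\}_{j\in S}$, we have $\mu(S) \max_{i\ne b^*} v_{i,S} \le \sum_{j\in S}\mu(j)\max_{i\ne b^*}v_{ij}$; summing over $S\in\cP^*$ gives $\mathrm{OPT} \le \sum_j \mu(j)\max_{i\ne b^*} v_{ij}$. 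For each item $j$, picking a bidder $a_j \ne b^*$ that attains $\max_{i\ne b^*}v_{ij}$, we have $\mu(j)\, v_{a_j,j} \le \sum_{j'}\mu(j')v_{a_j,j'} = v_{a_j,I} \le R(\{I\})$, the last step again because $a_j\ne b^*$. Summing the $n$ items yields $\mathrm{OPT}\le n\,R(\{I\})$, and combining the two estimates gives $\mathrm{OPT}\le\min\{n,k-1\}\cdot R(\{I\})$; the outer $\max\{1,\cdot\}$ is needed only in the degenerate case $k=1$ (no second price), where all quantities vanish.

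There is no serious obstacle here: the argument is a short chain of averaging and monotonicity inequalities. The two points requiring a little care are (i) defining $b^*$ correctly and invoking ``the unconditional value of a non-top bidder is at most the second-largest unconditional value'' --- which is precisely why the guarantee is naturally phrased with $\smax$ rather than with a strict ranking of bidders --- and (ii) noting that the step $\mu(j)\, v_{a_j,j} \le v_{a_j,I}$ uses only $v_{ij} \ge 0$ (and that parts of probability zero, on which $v_{i,S}$ is undefined, contribute nothing and may be discarded). One could also have the algorithm return the best of $\{I\}$ and the single-mediator partitions $\cP_1,\ldots,\cP_m$, which never hurts, but the analysis above shows that staying silent already suffices for the claimed ratio.
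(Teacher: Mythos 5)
Your proposal is correct and follows essentially the same route as the paper: both show that the all-silent joint partition $\{I\}$ is already a $\min\{n,k-1\}$-approximation, obtaining the $(k-1)$-bound by charging the revenue of each part to the bidders other than the top unconditional bidder and the $n$-bound by splitting the optimum into at most $n$ pieces each worth at most $R(\{I\})$ (the paper charges parts, you charge items, but the counting is the same).
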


We leave open the problem of determining whether one can get an improved approximation ratio when the parameter $m$ is ``small''. For $m = 1$, the result of~\cite{GNS07} implies immediately a $2$-approximation algorithm. However, even for the case of $m = 2$ we are unable to find an algorithm having a non-trivial approximation ratio. We mitigate
the above results by proving that for a natural (and realistic) class of mediators called \emph{local experts}\inlong{ (defined in Section~\ref{sec_hardness_DkS})}, 
there exists an efficient $5$-approximation algorithm for \DSP{}.

\begin{theorem}[A $5$-approximation algorithm for Local Expert mediators] \label{thm:alg_local_experts}
If mediators are local experts, 
there exists an efficient $5$-approx\-imation algorithm for {\DSP}.
\end{theorem}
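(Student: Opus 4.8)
The plan is to exploit how strongly the local-experts assumption restricts the set of realizable joint signals. Recall that under this assumption each mediator $t$ is an expert on a \emph{local domain} $D_t\subseteq I$: it resolves the realized item perfectly inside $D_t$ but has no information outside it, so $\cP_t=\{\{j\}:j\in D_t\}\cup\{I\setminus D_t\}$. The first step is a structural lemma: in \emph{any} report profile, every bundle of the joint signal $\cP=\times_t\cP'_t$ is either contained in a single domain $D_t$, or equals one distinguished ``residual'' bundle $W$ consisting of the items left unresolved by every mediator. Hence, writing $\mathrm{loc}_t(\cP)=\sum_{S\in\cP,\,S\subseteq D_t}\mu(S)\,\smax_{i}v_{i,S}$, the revenue splits as
\[ R(\cP)\;=\;\sum_{t}\mathrm{loc}_t(\cP)\;+\;\mu(W)\,\smax_{i}v_{i,W}, \]
and the overall strategy is to approximate the ``local'' sum and the ``residual'' term by two separate feasible schemes, output the better of them, and lose only a constant in the merge.

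For the local term I would reduce to the single-mediator case. Fix an optimal joint signal $\cP^\star$ with residual $W^\star$. For each domain $D_t$, the bundles of $\cP^\star$ inside $D_t$, together with $W^\star\cap D_t$, form a genuine partition of $D_t$, and that partition is realizable by a \emph{single} mediator holding perfect information about $D_t$; hence $\mathrm{loc}_t(\cP^\star)$ is at most the optimum $\mathrm{OPT}_t$ of the single-mediator $\DSP$ instance obtained by restricting $V$ and $\mu$ to $D_t$. Invoking the $m=1$ case --- i.e.\ the $2$-approximation of Ghosh et al.~\cite{GNS07} --- I can compute, for every $t$ independently, a partition $\pi_t$ of $D_t$ whose restricted revenue is at least $\mathrm{OPT}_t/2$. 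Having every mediator $t$ simultaneously report $\pi_t\cup\{I\setminus D_t\}$ produces one feasible scheme whose revenue is $\sum_t(\text{restricted revenue of }\pi_t)\ge\tfrac12\sum_t\mathrm{loc}_t(\cP^\star)$.

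For the residual term the key point is monotonicity of the runner-up value. Since $\mu(S)\,\smax_i v_{i,S}=\smax_i\big(\sum_{j\in S}\mu(j)v_{ij}\big)$ and the summands are nonnegative, enlarging a bundle increases every bidder's contribution, hence the second largest of them; so $\mu(W^\star)\,\smax_i v_{i,W^\star}\le\mu(I)\,\smax_i v_{i,I}=R(\{I\})$. In other words the trivial ``all-silent'' scheme already dominates the residual term. Combining this with the previous bound gives $\mathrm{OPT}\le 2\cdot R(\text{refinement scheme})+R(\{I\})$, hence the better of the two candidate schemes is already within a small constant of $\mathrm{OPT}$, of the claimed type.

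The step I expect to be the real obstacle is the structural lemma, specifically once the domains $D_t$ are allowed to \emph{overlap} (the cleanly decomposable case of pairwise disjoint --- or laminar --- domains is the easy one). When domains overlap, a joint-signal bundle need not sit inside one domain and the ``residual'' need not be literally unique, so the additive split above survives only up to a bounded loss: each cross-domain bundle must be charged to one of the domains it meets, and the resulting double-counting --- together with the fact that the refinement scheme and the silent scheme cannot be executed simultaneously --- is what I expect to push the ratio from the naive bound up to the stated $5$. Showing that these losses compose tightly (so that $5$, and not a larger constant, is attainable) is the part that will require the most care; a secondary point is to verify that Ghosh et al.'s algorithm may be applied to the restricted instances without violating any of its assumptions.
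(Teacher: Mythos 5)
Your structural lemma and your treatment of the residual bundle are both fine and match what the paper does: since each reported $\cP'_t$ has at most one part not contained in $I_t$ (the one containing $I\setminus I_t$), the joint partition has at most one part $I_0$ not contained in any expert's domain, and monotonicity of $\mu(S)\cdot\smax_{i}v_{i,S}=\smax_{i}\sum_{j\in S}\mu(j)v_{ij}$ under set inclusion shows the all-silent profile $\{I\}$ dominates that part's contribution. The gap is in your handling of the ``local'' term. Your candidate scheme has every mediator $t$ simultaneously report a Ghosh-et-al partition $\pi_t$ of its own domain, and you credit it with revenue $\sum_t(\text{restricted revenue of }\pi_t)$. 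When domains overlap this is false on both sides of the ledger. On the algorithmic side, the realized joint signal is $\cappart_t(\pi_t\cup\{I\setminus D_t\})$, so a part $S\in\pi_t$ meeting another domain $D_{t'}$ gets further shattered into $S\cap(I\setminus D_{t'})$ and the pieces $S\cap S'$ for $S'\in\pi_{t'}$; each fragment can have zero second-highest bid even when $S$ had a large one, so the scheme need not recover any constant fraction of $\sum_t \mathrm{OPT}_t$. On the charging side, an item in $D_t\cap D_{t'}$ is counted in both restricted optima but can contribute to only one part of any realizable joint signal, so $\sum_t\mathrm{OPT}_t$ is not bounded by any fixed multiple of the revenue of a single feasible profile. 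You flag overlap as ``the real obstacle,'' but the proposal contains no mechanism for resolving it, and the per-domain black-box reduction to the $m=1$ case is precisely the step that breaks.

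The paper's resolution is genuinely different at this point. It does not run a separate single-mediator algorithm per domain; instead it (i) splits the optimum's local contribution into a per-item second-price term $\sum_{j\in\hat I}s_j$, which is captured exactly by the profile where \emph{every} mediator reports everything (full refinement cannot hurt here because each $s_j$ is a property of the singleton $\{j\}$), and (ii) a cross-item term $\sum_t\phi(A_t)$ defined on a \emph{disjoint} assignment $\{A_t\}$ of the optimum's parts to mediators, which is captured up to a factor $3$ by a single global greedy covering procedure over $\hat I$ that builds one realizable partition directly (each constructed part lies inside one $I_t\cap H_{i'}$, and covered items are removed from the pool so nothing is double-counted), analyzed via the potential $R(\cP)+\frac13\sum_t\phi(A_t\cap I')$. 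The final bound $R(\{I\})+R(\cP_S)+3R(\cP_A)\ge R(\cP^*)$ then gives $5$. To salvage your outline you would need to replace the per-domain reduction with some such globally coordinated construction; as written, the argument does not yield a constant-factor guarantee for overlapping domains.
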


In the strategic setup, we design a fair (symmetric) payment rule $\cS: (\cP'_1, \cP'_2,...,\allowbreak \cP'_m) \rightarrow \mathbb{R}_+^m$ 
for incentivizing mediators to report useful information they own, and refrain from reporting information with negative impact on the revenue. 
This mechanism is inspired by the  
\emph{Shapley Value}---it distributes part of the auctioneer's surplus among the mediators according to their expected relative marginal 
contribution to the revenue, when ordered randomly.\footnote{Shapley's value was originally introduced in the context of cooperative games,
where there is a well defined notion of a coalition's value. In order to apply this notation to a non-cooperative game, we assume the game has
some underlying global function ($v(\cdot)$) assigning a value to every strategy profile of the players, and the Shapley value of each player is defined with 
respect to $v(\cdot)$. In this setting, a ``central planner" (the auctioneer in our case) is the one making the utility transfer to the ``coalised" players.
For the formal axiomatic definition of a value function and Shapley's value function, see \cite{Shapley53}.} We first show that this mechanism 
always admits a \emph{pure} Nash equilibrium, a property we discovered to hold for arbitrary games where the value of the game is distributed 
among players according to Shapley's value function.

\begin{theorem} \label{thm_pure_eq}
Let $\G_m$ be a non-cooperative $m$-player game in which 
the payoff of each player is set according to $\cS$. Then
$\G_m$ admits a pure Nash equilibrium. Moreover, best response dynamics are guaranteed to converge to such an equilibrium.
\end{theorem}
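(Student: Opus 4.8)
The plan is to show that $\G_m$ is a \emph{finite exact potential game}; both assertions then follow from the standard facts that such a game has a pure Nash equilibrium (any maximiser of the potential is one) and that every strictly-improving unilateral move raises the potential, so best-response dynamics terminate at an equilibrium after finitely many steps. The set-up: each mediator $t$ has a distinguished ``null'' strategy---remaining silent, i.e.\ reporting $\{I\}$---so a profile $s=(\cP'_1,\dots,\cP'_m)$ induces a coalitional game $w_s$ on $M$, where $w_s(C)$ is the value the underlying global function assigns to the profile in which the mediators of $C$ play according to $s$ and the rest stay silent; since silence is the identity for $\times$, in the auction application $w_s(C)=R\bigl(\times_{t\in C}\cP'_t\bigr)$. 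Normalising so that the all-silent profile has value $0$ (this changes no marginal contribution, hence no Shapley value) we have $w_s(\varnothing)=0$, and by definition the payoff of mediator $t$ under $\cS$ is her Shapley value $\phi_t(w_s)$ in this game, up to a fixed positive scaling that is immaterial here.

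The potential function comes from the Hart--Mas-Colell potential. For a coalitional game $w$ on a player set $N$, put $P(N,w)=\sum_{\varnothing\neq C\subseteq N}\frac{(|C|-1)!\,(|N|-|C|)!}{|N|!}\,w(C)$; the classical identity is $\phi_t(w)=P(N,w)-P(N\setminus\{t\},w)$ for every $t\in N$. Taking $N=M$ and $w=w_s$ gives $\phi_t(s)=P(M,w_s)-P(M\setminus\{t\},w_s)$. The crucial observation is that $P(M\setminus\{t\},w_s)$ depends only on the numbers $w_s(C)$ with $C\subseteq M\setminus\{t\}$, none of which involves mediator $t$'s strategy---so $P(M\setminus\{t\},w_s)$ is a function of $s_{-t}$ alone. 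Hence, for every $t$ and all $s_t,s_t'$,
\[
\phi_t(s_t,s_{-t})-\phi_t(s_t',s_{-t})=P\bigl(M,w_{(s_t,s_{-t})}\bigr)-P\bigl(M,w_{(s_t',s_{-t})}\bigr),
\]
which says precisely that $\Phi(s):=P(M,w_s)$ is an exact potential for $\G_m$. Note that this derivation uses nothing about the particular global function $v$, which is exactly why the property holds for arbitrary games in which the value is split among players by the Shapley rule.

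To finish: each mediator has only finitely many strategies (the super-partitions of her signal-set partition $\cP_t$), so $\G_m$ is a finite exact potential game. Therefore $\Phi$ attains a maximum, every maximiser is a pure Nash equilibrium, and along any best-response path each step strictly increases $\Phi$ while only finitely many profiles exist, so the path reaches a pure Nash equilibrium. The only non-routine part of the argument is the conceptual step: realising that the Hart--Mas-Colell potential of the \emph{induced} coalitional game doubles as an exact potential of the \emph{strategic} game---obtained by pairing the identity $\phi_t=P(M,\cdot)-P(M\setminus\{t\},\cdot)$ with the remark that the ``leave-$t$-out'' term cannot depend on $t$'s own strategy. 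Everything after that is bookkeeping, modulo a small sanity check of the $w_s(\varnothing)$ normalisation (equivalently, replacing $v$ by $v$ minus its value at the all-silent profile, which leaves all marginal contributions, and hence all Shapley values, unchanged).
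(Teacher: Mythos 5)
Your proof is correct and follows essentially the same route as the paper: both arguments establish that $\G_m$ is a finite exact potential game with the identical potential $\Phi(a)=\sum_{\varnothing\neq J\subseteq[m]}\frac{(|J|-1)!\,(m-|J|)!}{m!}\,v(a_J)$, from which existence of a pure equilibrium and convergence of best-response dynamics are standard. The only difference is that the paper verifies the exact-potential identity by direct manipulation of the Shapley coefficients, whereas you obtain it by quoting the Hart--Mas-Colell identity $\phi_t(w)=P(M,w)-P(M\setminus\{t\},w)$ and observing that the leave-$t$-out term is independent of $t$'s strategy---a clean shortcut, and your handling of the $w_s(\varnothing)$ normalisation is the right sanity check.
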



We then turn to analyze the revenue guarantees of our mechanism $\cS$. Our first theorem shows that using the mechanism $\cS$
never decreases the revenue of the auctioneer compared to the initial state (i.e., when all mediators are silent).
\begin{theorem} \label{th:better_then_silence}
For every Nash equilibrium $(\cP'_1, \cP'_2, \ldots, \cP'_m)$ of $\cS$, $R(\cappart_{t \in M} \cP'_t) \geq R(\{I\})$.
\end{theorem}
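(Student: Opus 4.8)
The plan is to deduce the inequality from two structural properties of the Shapley-value mechanism $\cS$: that remaining silent guarantees a mediator a payoff of exactly zero, and that $\cS$ is budget-balanced (efficient), so that the mediators' payoffs sum to the ``surplus'' $R(\cappart_{t \in M}\cP'_t) - R(\{I\})$. Together these two facts force the surplus to be non-negative at any equilibrium.

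First I would observe that silence is a \emph{null} move. Fix the equilibrium profile $(\cP'_1,\dots,\cP'_m)$, and for a coalition $T \subseteq M$ let $v(T) = R(\cappart_{t \in T}\cP'_t)$ be the revenue when exactly the mediators of $T$ broadcast their equilibrium signals (with $\cappart_{t\in\varnothing}\cP'_t = \{I\}$, so $v(\varnothing) = R(\{I\})$). If a mediator $t$ reports $\{I\}$, then since $\{I\}$ is the identity element of the operation $\cappart$ we have $v(T \cup \{t\}) = v(T)$ for every $T \subseteq M \setminus \{t\}$; hence the marginal contribution of $t$ is $0$ in every ordering of $M$, and therefore $\cS$ assigns $t$ a payoff of exactly $0$ no matter what the other mediators report.

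Next I would invoke the equilibrium condition. Because ``remain silent'' (i.e., report $\{I\}$) is always an admissible strategy, and by the previous step it yields mediator $t$ a payoff of $0$, any Nash equilibrium must give each mediator $t$ a payoff that is at least $0$. Summing over all $m$ mediators and using the efficiency of the Shapley value --- which telescopes the marginal contributions along every random ordering --- the total payoff of the mediators equals $v(M) - v(\varnothing) = R(\cappart_{t\in M}\cP'_t) - R(\{I\})$. Being a sum of non-negative terms, this quantity is non-negative, which is precisely the claim.

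The main obstacle I anticipate is a matter of care rather than of difficulty: one must handle the precise bookkeeping of how $\cS$ attaches a coalitional value $v(\cdot)$ to the sub-profiles of a given strategy profile, and verify that silence is genuinely a null player for \emph{every} coalition, not merely for the grand coalition. This is also exactly why the theorem is not trivial --- since the revenue function $R$ is not monotone (additional signals can strictly decrease revenue, as the $4 \times 4$ identity example shows), the inequality $R(\cappart_{t\in M}\cP'_t) \ge R(\{I\})$ genuinely fails for general profiles and must be extracted from the equilibrium property together with the availability of the silent action.
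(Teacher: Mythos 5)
Your proof is correct and follows essentially the same route as the paper's: the paper likewise observes that the null (silent) strategy yields a payoff of exactly zero, deduces that every mediator's equilibrium payoff is non-negative, and concludes via the efficiency property of the Shapley mechanism (Proposition~\ref{prop_revenue_payoffs}) that $R(\cappart_{t \in M} \cP'_t) - R(\{I\}) = \sum_{t} \Pi_t \geq 0$. Your extra verification that silence is a null move for \emph{every} coalition is a worthwhile detail the paper leaves implicit.
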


\noindent The next two theorems provide tight bounds on the \emph{price of anarchy} and \emph{price of stability} of $\cS$.%
\footnote{The price of anarchy (stability) is the ratio between the revenue of the optimum and the worst (best) Nash equilibrium.}
Unlike in the computational setup, restricting the mediators to be local experts does enable us to get improved results here.

\begin{theorem} \label{thm_poa_ub_combined} 
The price of anarchy of $\cS$ under any instance $\mathbf{DSP}(n,k,m)$ is no more than $\max\{1, \min \{k-1, n\}\}$.
\end{theorem}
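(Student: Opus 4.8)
The plan is to bound, for any Nash equilibrium $(\cP'_1,\dots,\cP'_m)$ of $\cS$ with joint partition $\cP=\cappart_{t\in M}\cP'_t$, the revenue $R(\cP^*)$ of the optimal joint partition $\cP^*$ in terms of $R(\cP)$, obtaining $R(\cP^*)\le \max\{1,\min\{k-1,n\}\}\cdot R(\cP)$. The key idea is a deviation argument: if some mediator could unilaterally switch to reporting the part of $\cP^*$ that is consistent with its own private partition $\cP_t$, the resulting revenue would be at least $R(\cP^*)$ (adding information only refines the auction, and revenue of the fully-revealed optimal partition is a lower bound once the relevant information is on the table). Since at equilibrium no mediator's Shapley payoff strictly increases under any deviation, and the Shapley payoff of a mediator is a fixed fraction of the marginal revenue contributions averaged over orderings, a large gap between $R(\cP^*)$ and $R(\cP)$ would translate into a profitable deviation for at least one mediator, a contradiction — unless the per-mediator ``dilution'' of that gap is itself large, which is exactly where the factor $\max\{1,\min\{k-1,n\}\}$ enters.

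Concretely, I would proceed in the following steps. First, recall from Theorem~\ref{thm:simple_approximations} that for \emph{any} joint partition $\cP'$ one has $R(\cP')\le \max\{1,\min\{n,k-1\}\}\cdot R(\cP'')$ whenever $\cP''$ coarsens $\cP'$ appropriately; more usefully, the approximation algorithm's analysis shows that the optimal revenue $R(\cP^*)$ is at most $\max\{1,\min\{n,k-1\}\}$ times the revenue of the no-information partition $R(\{I\})$ — i.e. the total ``value of signaling'' is bounded by this factor. Second, invoke Theorem~\ref{th:better_then_silence}: at any Nash equilibrium of $\cS$, $R(\cP)\ge R(\{I\})$. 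Chaining these two inequalities gives
\[
R(\cP^*)\;\le\;\max\{1,\min\{n,k-1\}\}\cdot R(\{I\})\;\le\;\max\{1,\min\{n,k-1\}\}\cdot R(\cP),
\]
which is precisely the claimed price-of-anarchy bound. Third, I would double-check that Theorem~\ref{thm:simple_approximations}'s guarantee really does bound $R(\cP^*)$ against $R(\{I\})$ and not merely against the algorithm's output; if the stated form is weaker, the fallback is to re-run the column-by-column / top-two-bidders counting argument underlying that $\min\{n,k-1\}$ factor directly on $\cP^*$ versus $\{I\}$, using that $\smax$ over a bundle is dominated by a sum of at most $\min\{n,k-1\}$ scaled single-item or single-bidder contributions, each of which is itself revenue one could have collected with no signaling.

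The main obstacle I anticipate is the second step above: Theorem~\ref{th:better_then_silence} guarantees $R(\cP)\ge R(\{I\})$, but this alone does not obviously interact with $R(\cP^*)$ unless we know $R(\cP^*)/R(\{I\})$ is itself bounded — so the real content is establishing that the \emph{entire} benefit of optimal signaling over silence is at most $\max\{1,\min\{k-1,n\}\}$. That bound should follow from the same structural fact driving Theorem~\ref{thm:simple_approximations} (namely that revealing information can only increase the winner's value, while the second-highest bid — hence revenue — on any bundle is controlled, across a refinement, by at most $\min\{n,k-1\}$ ``elementary'' auctions), but I would need to verify the constant is not lost to slack in that reduction. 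If it is clean, the theorem is essentially a two-line corollary of Theorems~\ref{thm:simple_approximations} and~\ref{th:better_then_silence}; the tightness (matching lower bound, deferred to the next theorem on price of stability) would be exhibited by a generalization of the $4\times 4$ identity-matrix example in Section~\ref{sec_model}, scaled to force the worst equilibrium to coincide with silence while the optimum revealing partition achieves the full $\min\{k-1,n\}$ factor.
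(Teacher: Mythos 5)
Your concrete argument---chaining the fact (established in the proof of Theorem~\ref{thm:simple_approximations}, which analyzes an \emph{arbitrary} strategy profile against $R(\{I\})$, so it does bound $R(\cP^*)$) that $R(\cP^*)\le\max\{1,\min\{n,k-1\}\}\cdot R(\{I\})$ with Theorem~\ref{th:better_then_silence}'s guarantee $R(\cP)\ge R(\{I\})$ at any equilibrium---is exactly the paper's proof. The preliminary musings about a deviation argument are unnecessary, but the two-line corollary you settle on is correct and is the intended route.
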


\begin{theorem} \label{thm_unbounded_pos} 
For every $n \geq 1$, there is a $\DSP(3n + 1, n + 2, 2)$ instance for which the price of stability of $\cS$ is at least $n$. Moreover, all the mediators in this instance are local experts.
\end{theorem}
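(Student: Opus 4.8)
The plan is to construct, for each $n\ge 1$, an explicit $\DSP(3n+1,n+2,2)$ instance --- a valuation matrix, a prior $\mu$, and two mediator partitions $\cP_1,\cP_2$ --- with three properties: (P1) all the objects are legal and both mediators are local experts (arranged by how we lay out their two areas of expertise); (P2) there is a joint partition $\cP^\star$ that the mediators can realize with $R(\cP^\star)\ge n\cdot R(\{I\})$; and (P3) every Nash equilibrium $(\cP'_1,\cP'_2)$ of $\cS$ satisfies $R(\cP'_1\cappart\cP'_2)\le \tfrac1n\,R(\cP^\star)$ (indeed only a constant factor above $R(\{I\})$). Together (P2) and (P3) give a price of stability of at least $n$.

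The analytic tool is an explicit description of $\cS$ on $m=2$ instances. Unrolling the Shapley formula, the payoff of mediator $t$ at a profile $(\cP'_1,\cP'_2)$ equals an affine function of the pair $\bigl(R(\cP'_t),\,R(\cP'_1\cappart\cP'_2)\bigr)$ plus terms depending only on the other mediator's report; in particular, with $\cP'_{-t}$ held fixed, $t$'s payoff is non-decreasing in both coordinates, so every best response of $t$ maximizes $R(\cP'_t)+R(\cP'_t\cappart\cP'_{-t})$ over the legal reports (coarsenings) of $\cP_t$ --- assuming, as the construction will ensure in the relevant profiles, that this maximum yields a positive payoff. Hence a profile is an equilibrium iff each mediator's report is such a maximizer against the other's; comparing with the ``stay silent'' alternative reproduces the bound $R(\cP'_1\cappart\cP'_2)\ge R(\{I\})$ of Theorem~\ref{th:better_then_silence} and, more usefully, bounds the gap between the two solo revenues $R(\cP'_1),R(\cP'_2)$ by the equilibrium revenue minus $R(\{I\})$. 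Because both mediators are local experts their strategy spaces are combinatorially structured, which, together with the structure of the valuations, turns the verification of (P3) into a concrete case analysis over profiles.

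The design idea, which is the heart of the argument, is to make the target profile \emph{fragile}. The partition $\cP^\star=\cP^\star_1\cappart\cP^\star_2$ achieves $R(\cP^\star)=\Theta(1)=\Theta(n)\cdot R(\{I\})$ through a delicate bundling of the $3n+1$ items into $\Theta(n)$ bundles, each having exactly two bidders of equal, large expected value; this bundling is destroyed --- second-highest bids collapsing to the $R(\{I\})$ scale --- the moment either mediator refines its report on the items where the two experts overlap. Crucially, neither mediator on its own can reconstruct anything close to $R(\cP^\star)$: this is precisely the obstruction we must rule out, since a ``one speaks, one is silent'' profile with high revenue would itself be a high-revenue equilibrium and would destroy the bound. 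Since $\cS$ weights a mediator's own solo revenue and the joint revenue equally, from $\cP^\star$ each mediator is tempted toward a report that raises its solo revenue at the cost of collapsing the joint revenue; the valuations are tuned so that this temptation is strict --- hence $\cP^\star$ is not an equilibrium --- and, more generally, so that every profile of revenue at least a fixed constant times $R(\{I\})$ is likewise destabilized, leaving only low-revenue equilibria. The main obstacle, and the bulk of the work, is engineering a single instance that meets all of these competing demands at once: a $\Theta(1)$-revenue bundling that is fragile in exactly the right way, both mediators provably weak in isolation, and yet a genuine unilateral incentive to defect from every high-revenue profile; after that, one discharges the (finite but delicate) equilibrium case analysis permitted by the local-expert structure, and checks that the nonnegativity normalization inside $\cS$ does not disturb the monotone-best-response reduction used above.
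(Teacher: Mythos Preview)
Your proposal is a plan rather than a proof: no instance is actually specified, and the ``bulk of the work'' you yourself identify --- engineering the valuations and expert sets --- is left entirely undone. More importantly, one of your guiding design principles rests on a mistaken inference. You claim that ``a `one speaks, one is silent' profile with high revenue would itself be a high-revenue equilibrium,'' and therefore build in the constraint that neither mediator alone can approach $R(\cP^\star)$. This implication is false: when $t_1$ plays some $\cP'_1$ and $t_2$ is silent, $t_2$'s payoff is $0$, and $t_2$ will deviate to any $\cP'_2$ with $R(\cP'_2)+R(\cP'_1\cappart\cP'_2)>R(\{I\})+R(\cP'_1)$. Nothing prevents such a deviation from existing even when $R(\cP'_1)$ is near-optimal. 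Worse, your constraint that solo revenues be uniformly low is in direct tension with your other requirement that from $\cP^\star$ each mediator has a strict incentive to ``raise its solo revenue at the cost of collapsing the joint revenue'': if solo revenues are capped near $R(\{I\})$, the solo gain from any deviation is tiny, so for the deviation to be profitable the joint revenue must stay essentially as high as $R(\cP^\star)$ --- which means you have not destabilized the high-revenue profile at all, only moved to another one.

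The paper's construction takes the opposite tack and avoids these tensions. It \emph{allows} each mediator to achieve the optimal revenue alone (e.g.\ $t_1$ pairs each $b_\ell$ with $a_\ell$ while $t_2$ is silent), and instead proves a structural lemma about non-dominated strategies: any strategy of $t_1$ (resp.\ $t_2$) that fails to isolate the items $\{b_\ell\}_{\ell=1}^n\cup\{d\}$ from one another is strictly dominated. Hence in every equilibrium both mediators isolate these items; but since every remaining item ($a_\ell$, $c_\ell$) lies outside one of the two expert regions and therefore shares a part with $d$ in one of $\cP'_1,\cP'_2$, the joint partition makes every $b_\ell$ a singleton. This pins the equilibrium revenue at $(n\eps+1)/(3n+1)$ against an optimum of $(n+1)/(3n+1)$, and taking $\eps=1/n^2$ gives the factor-$n$ gap. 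The argument is a direct characterization of equilibria via a domination lemma, not a case-by-case destabilization of high-revenue profiles; you should try to find a concrete instance first and then prove such a structural statement about its equilibria, rather than reasoning abstractly about which profiles ``must'' be unstable.
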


Interestingly, an adaptation of Shapley's uniqueness theorem \cite{Shapley53} to our non-cooperative setting asserts that the price of anarchy of our mechanism is inevitable if one insists on a few natural requirements---essentially anonymity and efficiency\footnote{I.e., the sum of payments is equal to the total surplus of the auctioneer.} 
of the payment rule---and assuming the auctioneer alone can introduce payments. 
\inlong{We discuss this further in Section~\ref{ssc:inevitable}.}


\subsection{Additional Related Work}

The formal study of internet auctions with contexts was introduced by
\cite{KEW07} where the authors studied the impact of contexts in the related Sponsored Search model, 
and showed that \emph{bundling contexts} may have a significant impact on the revenue
of the auctioneer. 
The subsequent work of Ghosh et. al.~\cite{GNS07} considered the computational algorithmic problem of computing the revenue maximizing partition 
of items into bundles, under a second price auction in the full information setting. Recently, Emek et al.~\cite{Emek14} studied signaling (which generalizes bundling) in the context of 
display advertising. They explore the computational complexity of computing a signaling scheme that maximizes the auctioneer's 
revenue in a Bayesian setting. Unlike our distributed setup, both models of~\cite{GNS07} and~\cite{Emek14}
are \emph{centralized}, in the sense that the auctioneer has full control over the bundling process (which in our terms corresponds to 
having a single mediator with a perfect knowledge about the item sold).


A different model with knowledgeable third parties was recently considered by Cavallo et al.~\cite{CMV15}. However, the focus of this model is completely different then ours. More specifically, third parties in this model use their information to estimate the clicks-per-impression ratio, and then use this estimate to bridge between advertisers who would like to pay-by-click and ad networks which use a pay-by-impression payment scheme.


\section{Preliminaries}\label{sec_prelims}

Throughout the paper we use capital letters for sets and calligraphic letters for set families. For example, the partition $\cP_t$ representing the knowledge 
of mediator $t$ is a set of sets, and therefore, should indeed be calligraphic according to this notation.
\inlong{%
A mechanism $\cM$ is a tuple of payment functions $(\Pi_1, \Pi_2, \ldots, \Pi_m)$ determining the compensation of every mediator given a strategy profile (i.e., $\Pi_t : \Omega(\cP_1)\times\Omega(\cP_2)\times\ldots\times\Omega(\cP_m) \longrightarrow \R^+$). Every mechanism $\cM$ induces the following game between mediators.

\begin{definition}[\DSP{} game] \label{def_gm}
Given a mechanism $\cM = (\Pi_1, \Pi_2, \ldots, \Pi_m)$ and a $\DSP(n, k, m)$ instance, the $\DSP_{\cM}(n, k, m)$ game is defined as follows.
Every mediator $t \in M$ is a player whose strategy space consists of all partitions $\cP_t'$ for which $\cP_t$ is a refinement. Given a strategy profile $\cP_1' ,\cP_2' ,\ldots , \cP_m'$, the payoff of mediator $t$ is
$\Pi_t (\cP_1' ,\cP_2' ,\ldots , \cP_m')$.
\end{definition}

%
} 
Given a $\DSP$ instance and a set $S \subseteq I$, we use the shorthand $v(S) := \smax_{i \in B}(v_{i, S})$
to denote the second highest bid in the restricted auction $\mu|_S$. Using this notation, the expected revenue of the auctioneer
under the (joint) partition $\cP$ of the mediators can be written as
\[
	R(\cP)
	=
	\sum_{S \in \cP} \mu(S) \cdot v(S)
	\enspace.
\]

\inlong{
For a $\DSP_\cM$ game, let $\cE(\cM)$ denote the set of Nash equilibria of this game and let $\cP^*$ be a maximum revenue strategy profile. 
The \emph{Price of Anarchy} and \emph{Price of Stability}  of $\DSP_\cM$ are defined as:
\begin{align*}
	&
	PoA := \max_{\cP \in \cE(\cM)} \frac{R(\cP^*)}{R(\cP)}
	\enspace,
	\qquad \text{and} \qquad
	PoS := \min_{\cP \in \cE(\cM)} \frac{R(\cP^*)}{R(\cP)}
	\enspace,
\end{align*}
respectively. Notice that our definition of the price of anarchy and price of stability differs from the standard one by using revenue instead of social welfare.
} 

\inshort{
\paragraph{Paper Organization.}
Due to space constraints we only give the proof for Theorem~\ref{cor:hardness_DSP_IS} (in Section~\ref{sec:hardness_DSP_IS}). Section~\ref{sec:discussion} summarizes our contributions and discuss possible avenues for future research.
}


\inlong{
\section{The Computational Problem} \label{sec_hardness_DkS}

This section explores {\DSP} from a pure combinatorial optimization viewpoint. In other words, 
we assume the auctioneer can control the signals produced by each mediator. The objective of the auctioneer 
is then to choose a distributed strategy profile $\cP_1' ,\cP_2' ,\ldots , \cP_m '$ whose combination $\cappart_{t} \cP_t'$
yields maximum revenue in the resulting auction. We begin with negative results (proving Theorem \ref{cor:hardness_DSP_IS} in 
Subsection~\ref{sec_hardness_dsp}), and then proceed with a few approximation algorithms for the problem (Subsection~\ref{sec_gen_algorithms}), including a
$5$-approximation algorithm for the case of \emph{local expert} mediators (Theorem \ref{thm:alg_local_experts}).

\subsection{Hardness of approximating \DSP} \label{sec_hardness_dsp}
} 

\inshort{
\section{Proof of Theorem~\ref{cor:hardness_DSP_IS}} \label{sec:hardness_DSP_IS}
}

It is not hard to show that solving {\DSP} exactly, i.e., finding the maximum revenue strategy profile, is
NP-hard. In fact, this statement directly follows from the NP-hardness result of \cite{GNS07}
by observing that the special case of a single mediator with perfect knowledge 
($\cP_1 = \{\{j\} \mid j \in I\}$) is equivalent to the centralized signaling model of \cite{GNS07}.

%

\begin{proposition} \label{thm_dob_nphard}
Solving {\DSP} exactly is NP-Hard. 
\end{proposition}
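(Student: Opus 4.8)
The plan is to derive the NP-hardness of \DSP{} directly from the NP-hardness result of Ghosh et al.~\cite{GNS07}, by exhibiting their model as the special case of \DSP{} with a single omniscient mediator. Concretely, \cite{GNS07} proves that, given a valuation matrix $V$ and a prior $\mu$ over $I$, it is NP-hard to compute the partition $\cP$ of $I$ maximizing $R(\cP) = \sum_{S \in \cP} \mu(S) \cdot \smax_{i \in B}(v_{i,S})$. I would first observe that this is literally an instance of \DSP{} with $m = 1$: take a single mediator whose signal-set partition is the partition of $I$ into singletons, $\cP_1 = \{\{j\} \mid j \in I\}$.

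The key step is to check that the reported-signal mechanics of \DSP{} coincide, in this special case, with arbitrary bundling in the \cite{GNS07} model. Since $\cP_1$ is the finest possible partition, every partition $\cP'$ of $I$ is a super-partition of $\cP_1$ obtained by merging singletons — i.e., for any $\cQ' \in \Omega(\cP_1) \cong \Omega(I)$ we have $\cP'_1 = \{\cup_{S \in A} S \mid A \in \cQ'\}$ ranging over all partitions of $I$. With $m = 1$ the joint partition is simply $\cP = \cP'_1$, so the auctioneer's revenue $R(\cP)$ in the \DSP{} instance is exactly the objective of the \cite{GNS07} signaling problem. Hence a polynomial-time algorithm solving \DSP{} exactly would in particular solve the revenue-maximizing bundling problem of \cite{GNS07}, contradicting their NP-hardness result unless $P = NP$. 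This reduction is trivially polynomial-time (the \DSP{} instance is essentially identical to the given \cite{GNS07} instance, with one extra gadget specifying the singleton partition).

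I do not anticipate a genuine obstacle here: the entire content is the observation, already flagged in the excerpt, that the centralized single-mediator-with-perfect-knowledge case of \DSP{} is the \cite{GNS07} model verbatim. The only point requiring a sentence of care is confirming that ``$\cP_1$ must be a refinement of $\cP'_1$'' imposes no restriction when $\cP_1$ is the singleton partition, so that the strategy space of the lone mediator is the full set $\Omega(I)$ of partitions, matching the unrestricted bundling allowed in \cite{GNS07}. Once that is noted, the proposition follows immediately.
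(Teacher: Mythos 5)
Your proposal is correct and follows exactly the paper's own argument: the proposition is obtained by observing that the single-mediator case with $\cP_1 = \{\{j\} \mid j \in I\}$ makes the mediator's strategy space the full set of partitions of $I$, so {\DSP} specializes verbatim to the NP-hard bundling problem of~\cite{GNS07}. No gaps.
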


The main result of this section is that approximating the revenue-maximizing signals, even within a multiplicative factor of 
$O((\min\{n, k, m\})^{1/2-\eps})$, remains NP-hard. This result is achieved by a 
gap preserving reduction from Maximum Independent Set to \DSP.

\begin{theorem} \label{thm:general_model_reduction}
For every integer $\ell \geq 1$, an $\alpha \geq 1$ approximation for $\mathbf{DSP}(2\ell N, \allowbreak \ell N + 1, \ell N)$ 
induces a $\alpha(1 + \ell^{-1}(N + 1))$ approximation for the Maximum Independent Set problem (where $N$ is the number of nodes in the graph).
\end{theorem}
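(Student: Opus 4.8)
The plan is to build, from a graph $G=(V,E)$ with $N = |V|$ nodes, an instance of $\mathbf{DSP}(2\ell N, \ell N + 1, \ell N)$ whose revenue-maximizing joint partition encodes a maximum independent set of $G$, with the revenue being (up to the claimed additive/multiplicative slack) proportional to the size of that set. The parameter $\ell$ is a "blow-up" factor whose only role is to make the slack $\ell^{-1}(N+1)$ small; morally one should first digest the $\ell = 1$ construction and then see that replacing every gadget by $\ell$ parallel copies multiplies both the optimum and the noise terms in a controlled way.

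**Construction ($\ell = 1$ case).** I would associate to each vertex $u \in V$ a dedicated mediator $t_u$ and a pair of items $\{a_u, b_u\}$, so that $n = 2N$ items and $m = N$ mediators. Mediator $t_u$'s partition $\cP_{t_u}$ is the finest partition that lets him either reveal or hide whether $j_R$ lies in $\{a_u,b_u\}$ versus elsewhere — i.e.\ $\cP_{t_u}$ separates $\{a_u, b_u\}$ from its complement (and keeps $a_u$, $b_u$ together, so he cannot split the pair). Thus $t_u$'s only real decision is binary: "activate" (report $\{\{a_u,b_u\}, \text{rest}\}$) or "stay silent" (report $\{I\}$). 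I would put one "vertex bidder" per $u$ with value $1$ on exactly one of $a_u, b_u$ (say on $a_u$) and $0$ elsewhere, plus one extra "background bidder" (the $+1$ in $k = N+1$) that has a small uniform value on everything so that the second-highest bid in the un-split auction is controlled. Under the uniform prior $\mu$, if $t_u$ activates and no conflicting mediator does, the bundle $\{a_u,b_u\}$ gets signalled, two bidders (the $a_u$-bidder and the background bidder) have positive value, and the second-highest price is bounded below by a constant — contributing a fixed amount of revenue; the complement of $\{a_u,b_u\}$ contributes essentially nothing extra. The edges of $G$ are encoded by making the item-pairs of adjacent vertices overlap or interfere: I would arrange that if both $t_u$ and $t_w$ activate for an edge $uw \in E$, the joint partition refines some bundle so that the relevant second-highest bid collapses to $0$ (the same cancellation phenomenon as in the $4\times4$ identity example in the introduction). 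Hence a set of simultaneously "profitable" activations corresponds exactly to an independent set, and $R(\cP) \approx |{\rm activated independent set}|$ up to the background-bidder noise of order $N+1$ coming from the at most $N$ complement-bundles and the uniform background bidder.

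**Two directions.** (i) Soundness/completeness of the encoding: an independent set $S$ of size $s$ yields a strategy profile with revenue at least (roughly) $s - O(1)$, because activating exactly $\{t_u : u \in S\}$ creates $s$ non-interfering profitable bundles; conversely any joint partition $\cP$ has revenue at most ${\rm MIS}(G) + (N+1)$, because the set of vertices whose gadget actually yields positive marginal revenue must be independent (any interfering pair kills its own contribution) and everything else is absorbed into the additive $(N+1)$ slack. (ii) The $\ell$-fold amplification: replace each vertex $u$ by $\ell$ identical mediator/item/bidder gadgets, giving $2\ell N$ items, $\ell N$ mediators, and $\ell N + 1$ bidders (one shared background bidder); now ${\rm OPT}_{\mathbf{DSP}} \in [\ell\cdot{\rm MIS}(G),\ \ell\cdot{\rm MIS}(G) + (N+1)]$, so an $\alpha$-approximate $\mathbf{DSP}$ solution gives revenue $\ge \ell\cdot{\rm MIS}(G)/\alpha$, from which — after stripping the $(N+1)$ additive term and dividing by $\ell$ — one recovers an independent set of size $\ge {\rm MIS}(G)/\bigl(\alpha(1 + \ell^{-1}(N+1))\bigr)$. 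To turn a good partition back into an actual independent set I would greedily extract the set of "clean" activated gadgets (drop one endpoint of every surviving conflict, losing at most a constant factor that can be folded into the slack, or argue no conflicts survive in a revenue-near-optimal solution).

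**Main obstacle.** The delicate part is getting the gadget arithmetic exactly right so that (a) a single activated gadget contributes a clean, normalization-independent unit of revenue, (b) the complement-bundles and the background bidder never contribute more than a total of $N+1$ (so the additive slack is genuinely $N+1$ and not something depending on $\ell$), and (c) an interfering pair of activated gadgets provably destroys its own revenue rather than merely reducing it. Balancing these three constraints with a single shared background bidder and the uniform prior — while keeping the item/bidder/mediator counts at exactly $2\ell N$, $\ell N+1$, $\ell N$ — is the real content of the proof; the reduction from ${\rm MIS}$ hardness and the amplification bookkeeping are then routine. I expect the cleanest route is to first fix the $\ell=1$ revenue identities as a lemma ("activating an independent set $S$ gives revenue in $[|S|, |S| + (N+1)/\text{(something)}]$, and every profile gives at most ${\rm MIS}(G)+(N+1)$"), and then obtain Theorem~\ref{thm:general_model_reduction} by plugging the amplified instance into that lemma and chasing the inequality
\[
{\rm MIS}(G) \;\le\; \frac{{\rm OPT}_{\mathbf{DSP}}}{\ell} \;\le\; \frac{\alpha\cdot R(\cP_{\rm alg})}{\ell} \;\le\; \alpha\Bigl(\tfrac{|S_{\rm alg}|}{\ell} + \tfrac{N+1}{\ell}\Bigr) \;\le\; \alpha\bigl(1 + \ell^{-1}(N+1)\bigr)\,|S_{\rm alg}|,
\]
where $S_{\rm alg}$ is the independent set extracted from the algorithm's output.
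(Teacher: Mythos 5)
Your high-level architecture matches the paper's: vertex gadgets built from item pairs, one of which is "valuable" and one of which is a helper, activation of a gadget corresponding to membership in an independent set, an $\ell$-fold blow-up to shrink the additive slack relative to the optimum, and the same final inequality chain. However, there is a genuine gap at the heart of the construction: you never specify \emph{how the edges of $G$ are encoded}, and you explicitly defer this ("I would arrange that if both $t_u$ and $t_w$ activate\dots the relevant second-highest bid collapses to $0$"; "\dots is the real content of the proof"). In the paper this is done by putting, into the distinguished part of mediator $m_{v,k}$, not only the pair $(j_{v,k},j'_{v,k})$ but also \emph{all helper items of all neighbors of $v$}; activating a neighbor of $v$ then separates $j_{v,k}$ from its own helper $j'_{v,k}$ and kills the pairing. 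Your sketch, in which $t_u$'s partition merely separates $\{a_u,b_u\}$ from the complement and a "background bidder" has small value on everything, does not interact with the edge set at all, and the background bidder as described would give every singleton a nonzero second price, so an interfering pair would only "reduce" rather than destroy its contribution --- exactly the failure mode you flag in (c).

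A second, related gap is your accounting of the additive $N+1$ term and of the role of $\ell$. You attribute the slack to "complement-bundles and the background bidder," but in the actual argument the $N$ comes from a different place: a node $v$ that has exactly \emph{one} activated mediator and an activated neighbor can still contribute $1$, because its distinguished part retains helper items of \emph{other} (inactive, non-adjacent-to-the-conflicting-neighbor) nodes. Stripping those residual helpers requires a \emph{second} activated mediator of the same node $v$ --- this is precisely why the construction needs $\ell$ mediators per node rather than one, and why the upper bound takes the form $R \le |S'| + N + 1$ with $|S'|$ counting conflict-free \emph{mediators} (so the extracted independent set has size $\ge |S'|/\ell$). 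Your plan of "first digest $\ell=1$, then copy the gadget $\ell$ times" therefore does not go through: the $\ell=1$ soundness claim ("any profile gives at most $\mathrm{MIS}(G)+(N+1)$ because interfering pairs kill their own contribution") is false for the single-copy gadget, and the amplification is not a routine bookkeeping step layered on top of it but is load-bearing for the upper bound itself.
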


Observe that Theorem~\ref{cor:hardness_DSP_IS} follows easy from Theorem~\ref{thm:general_model_reduction}.

\begin{reptheorem}{cor:hardness_DSP_IS}
If there exists an $O(m^{1/2-\eps})$ approximation (for some constant $\eps > 0$) for instances of $\DSP(2m, m + 1, m)$, 
then there exists a $O(N^{1-2\eps})$ approximation for Maximum Independent Set (MIS$_N$), where $N$ is the number of nodes in 
the underlying graph of the MIS instance.
\end{reptheorem}
\begin{proof}
Let $\ell = N + 1$. By Theorem~\ref{thm:general_model_reduction}, there exists an approximation algorithm for Maximum Independent Set whose approximation ratio is $2\alpha$, where $\alpha$ is the approximation ratio that can be achieved for instances of $\DSP(2\ell N, \ell N + 1, \ell N)$. On the other hand, by our assumption:
\[
	\alpha
	=
	O((\ell N)^{1/2 - \eps})
	=
	O(N^{2(1/2 - \eps)})
	=
	O(N^{1 - 2\eps})
	\enspace,
\]
which completes the proof.
\end{proof}	

In the rest of this section we prove Theorem~\ref{thm:general_model_reduction}. The high-level idea of the reduction is as follows. Given a graph $G= (V,E)$, we map it to a {\DSP} instance by 
associating $\ell$ 
pairs of (equi-probbable) items $\{(j_{v,k}, j'_{v,k})\}_{k=1}^\ell$ with each node of the graph $v\in V$, where the items
$j'_{v,k}$ are auxiliary items called the ``helper items" of node $v$.  
Additionally, for each node we have $\ell$ single-minded bidders, each of which is interested (exclusively) in a specific item $j_{v,k}$. An
additional bidder ($i_h$) is interested in \emph{all} helper-items of all nodes. 
Finally, each node $v \in V$ has a corresponding set of $\ell$ mediators $\{m_{v, k}\}_{k=1}^\ell$ (one for each items pair). Each mediator $m_{v,k}$ has a single bit of information, corresponding to whether or not the auctioned item belongs to the union set of \emph{the $k$-th pair $(j_{v,k}, j'_{v,k})$ together with all helper elements of 
the neighbor nodes of $v$}. 

As the helper elements are valuable only to a \emph{single} bidder $i_h$, a part in a partition can have a non-zero contribution to the revenue only if it contains at least two non-helper elements or at least
one helper element and at least one non-helper element. However, our construction ensures that, whenever a strategy profile $\cP$ involves at least \emph{two ``active" mediators of neighboring nodes} $(u,v)\in E$, the resulting joint partition \emph{isolates the non-helper elements of $u$ from helper elements} (except for maybe one part that might contain elements of both types). Thus, in a high revenue strategy profile the set of nodes associated with many ``speaking" (active) mediators must be close to an independent set. 

We now proceed with the formal proof. Given a graph $G = (V, E)$ of $N$ nodes, consider an instance of {\DSP} consisting of the following:
\begin{itemize}
	\item A set $\{j_{v,k}, j'_{v,k} \mid v \in V, 1 \leq k \leq \ell\}$ of $2\ell N$ items having equal probabilities to appear. For every node $v$ and integer $1 \leq k \leq \ell$, the element $j'_{v,k}$ is referred to as the ``helper'' item of $j_{v,k}$. For notational convenience, we define the following sets for every node $v \in V$: $J_v = \{j_{v,k} \mid 1 \leq k \leq \ell\}$ and $H_v = \{j'_{v,k} \mid 1 \leq k \leq \ell\}$.
	\item A set $\{i_{v,k}, \mid v \in V, 1 \leq k \leq \ell\} \cup \{i_h\}$ of $\ell N + 1$ bidders. For every node $v$ and integer $1 \leq k \leq \ell$, the bidder $i_{v,k}$ has a value of $2\ell N$ for the item $j_{v,k}$ and a value of $0$ for all other item. The remaining bidder $i_h$ has a value of $2\ell N$ for each helper item (i.e., each item from the set $\{j'_{v,k} \mid v \in V, 1 \leq k \leq \ell\}$) and a value of $0$ for the non-helper item.
	\item There are $\ell N$ mediators $\{m_{v,k}, \mid v \in V, 1 \leq k \leq \ell\}$. For every node $v$ and integer $1 \leq k \leq \ell$ the partition of mediator $m_{v,k}$ is defined as follows:
	\[
		\cP_{v,k} = \left\{\{j_{v,k}, j'_{v,k}\} \cup \bigcup_{u \mid uv \in E} H_u\right\} \cup \{R_v\}
		\enspace,
	\]
	where $R_v$ is the set of remaining elements that do not belong to the first part of the partition. Informally, $\cP_{v,k}$ is a binary partition where on one side we have the two elements $j_{v,k}$ and $j'_{v,k}$ and all the helper elements corresponding to neighbor nodes of $v$, and on the other side we have the rest of the elements.
\end{itemize}

\inlong{Let us}\inshort{We} begin the analysis of the above {\DSP} instance by determining the contribution of each part in an arbitrary partition $\cP$ to $R(\cP)$.\inshort{ Due to space constraints, the (technical) proof of the next claim has been omitted from this extended abstract.}

\begin{claim} \label{cl_rev_contribution}
 Let $\cP$ be an arbitrary partition. Then, for every part $S \in \cP$, 
 the contribution of $S$ to $R(\cP)$ is $0$, unless $|S|\geq 2$ and $S$ contains at least one \emph{non-helper} item. 
 In the last case, the contribution of $S$ to the total revenue is:
 \[\mu(S) \cdot v(S) = 1 \enspace.\]
\end{claim}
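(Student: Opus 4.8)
The plan is to compute the second-highest bid $v(S) = \smax_{i \in B} v_{i,S}$ directly from the structure of the constructed valuation matrix, and read off the contribution $\mu(S)\cdot v(S)$ from it. Since the $2\ell N$ items are equiprobable, $\mu(j) = 1/(2\ell N)$ for every item, so $\mu(S) = |S|/(2\ell N)$ and $v_{i,S} = |S|^{-1}\sum_{j\in S} v_{ij}$ for every bidder $i$ and every nonempty $S$. Throughout I would write $a := |S\cap\bigcup_{v\in V} J_v|$ for the number of non-helper items in $S$ and $b := |S\cap\bigcup_{v\in V} H_v|$ for the number of helper items, so that $|S| = a+b$.

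First I would record which bidders bid positively on $S$. A single-minded bidder $i_{v,k}$ values only the item $j_{v,k}$, hence $v_{i_{v,k},S} = 2\ell N/|S|$ if $j_{v,k}\in S$ and $v_{i_{v,k},S} = 0$ otherwise; distinct non-helper items of $S$ correspond to distinct such bidders, so exactly $a$ bidders bid the value $2\ell N/|S|$. The bidder $i_h$ values precisely the helper items, so $v_{i_h,S} = 2\ell N\cdot b/|S|$. Every remaining bidder bids $0$. In particular the number of bidders with a strictly positive bid on $S$ is exactly $a + \mathbf{1}[b\ge 1]$.

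Next I would pin down $v(S)$. The second-highest bid is positive iff at least two bidders bid positively, i.e. iff $a + \mathbf{1}[b\ge 1]\ge 2$; a one-line case check ($a\ge 2$, $a=1$, $a=0$) shows this is equivalent to ``$a\ge 1$ and $a+b\ge 2$'', i.e. to ``$S$ contains at least one non-helper item and $|S|\ge 2$'', which gives the first half of the claim. For the second half, assume $a\ge 1$ and $|S|\ge 2$. The positive bids form the multiset consisting of $2\ell N/|S|$ with multiplicity $a$ together with $2\ell N\, b/|S|$ when $b\ge 1$. If $a\ge 2$ there are at least two bids equal to $2\ell N/|S|$, while the only possibly larger bid ($2\ell N\, b/|S|$ with $b\ge 2$) occurs once, so the second-highest bid equals $2\ell N/|S|$. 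If $a=1$ then $b\ge 1$, the two positive bids are $2\ell N/|S|$ and $2\ell N\, b/|S| \ge 2\ell N/|S|$, and the second-highest of the two is their minimum $2\ell N/|S|$. In both sub-cases $v(S) = 2\ell N/|S|$, so the contribution of $S$ is $\mu(S)\cdot v(S) = \big(|S|/(2\ell N)\big)\cdot\big(2\ell N/|S|\big) = 1$.

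I do not expect any genuine difficulty here; the only thing to be careful with is the bookkeeping of the boundary cases --- in particular, that $S$ with a single non-helper item already contributes $1$ as soon as it also contains a helper item, and that when $b=1$ the bid of $i_h$ ties (rather than strictly exceeds) the bids of the single-minded bidders. The multiset description above is designed to handle these uniformly, so that the whole argument reduces to the elementary computation sketched above.
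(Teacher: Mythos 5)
Your proof is correct and follows essentially the same route as the paper's: identify the positive bidders (the $a$ single-minded bidders each bidding $2\ell N/|S|$ plus $i_h$ bidding $2\ell N b/|S|$ when $b\ge 1$), observe that a positive contribution requires at least two positive bidders, and conclude $v(S)=2\ell N/|S|$ so the contribution is $1$. Your multiset bookkeeping is just a slightly more explicit version of the paper's three-case analysis, with the same handling of the fact that $i_h$'s bid is the only one that can exceed $2\ell N/|S|$.
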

\inlong{
\begin{proof}
There are a few cases to consider.
\begin{itemize}
	\item If $S$ contains only helper elements, then it is valuable only to $i_h$, and thus, has a $0$ contribution to $R(\cP)$.
	\item If $S$ contains only one element, then it is valuable only to one bidder because each element is valuable only to one element. Thus, it again contributes $0$ to $R(\cP)$.
	\item If $S$ contains multiple elements, at least one of which is \emph{non-helper}, then it is valuable to at least two bidders. Specifically, for every element $j_{v,k} \in S$, $v_{i_{v,k}, S} = 2\ell N / |S|$. Additionally, if there are helper elements in $S$, then:
	\[
		v_{i_h, S}
		=
		\frac{2\ell N \cdot |S \cap \{j'_{v,k} \mid v \in V, 1 \leq k \leq \ell\}|}{|S|}
		\geq
		\frac{2\ell N}{|S|}
		\enspace.
	\]
	Hence, for any such part $S$ we get $v(S) = \smax_{i \in B} v_{i, S} = 2\ell N / |S|$, and the contribution of the part to $R(\cP)$ is $\mu(S) \cdot v(S) = 1$. \qedhere
\end{itemize}
\end{proof}
} 

Let $A$ be an arbitrary independent set of $G$, and let $S_A := \{m_{v, k} \mid v \in A, 1 \leq k \leq \ell\}$. The following claim lower bounds the revenue of the joint partition arising when the mediators of $S_A$ are the only speaking mediators.

\begin{claim} \label{cl_rev_S_i}
$R(\cappart_{m_{v,k} \in S_A} \cP_{v,k}) \geq |S_A| = \ell \cdot |A|.$ 
\end{claim}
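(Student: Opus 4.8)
The plan is to show that when exactly the mediators in $S_A = \{m_{v,k} \mid v \in A,\ 1\le k\le \ell\}$ speak (and $A$ is independent), the joint partition $\cappart_{m_{v,k}\in S_A}\cP_{v,k}$ contains at least $\ell\cdot|A| = |S_A|$ parts each of which satisfies the hypotheses of Claim~\ref{cl_rev_contribution} (size $\ge 2$ and containing a non-helper item), and hence each contributes exactly $1$ to the revenue. Summing over these parts gives $R(\cappart_{m_{v,k}\in S_A}\cP_{v,k})\ge \ell\cdot|A|$.

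Concretely, I would exhibit, for each node $v\in A$ and each index $1\le k\le\ell$, a distinct part of the joint partition that contains the non-helper item $j_{v,k}$ together with at least one more item. Fix such a $v$ and $k$. Since $A$ is independent, no neighbor $u$ of $v$ lies in $A$, so no mediator $m_{u,k'}$ with $uv\in E$ is speaking; consequently none of the reported partitions $\cP_{u',k'}$ with $m_{u',k'}\in S_A$ has $j_{v,k}$ and $j'_{v,k}$ in different sides. For $u'=v$: the mediator $m_{v,k}$ puts $j_{v,k}$ and $j'_{v,k}$ on the same side (the first part), while every other speaking mediator $m_{v,k''}$ with $k''\ne k$ has both $j_{v,k}$ and $j'_{v,k}$ in its set $R_v$ (since the "special pair" of $m_{v,k''}$ is $(j_{v,k''},j'_{v,k''})$, and $v\notin N(v)$ rules out the helper-of-neighbors clause). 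For $u'\ne v$ non-adjacent to $v$: since $v\notin\{u'\}\cup N(u')$ and the only non-helper and helper items appearing on the "first side" of $\cP_{u',k'}$ are those of $u'$ itself and helpers of $u'$'s neighbors, both $j_{v,k}$ and $j'_{v,k}$ fall in $R_{u'}$. Thus in the joint partition $j_{v,k}$ and $j'_{v,k}$ are never separated, so they lie in a common part $S_{v,k}$; this part has size $\ge 2$ and contains the non-helper item $j_{v,k}$, so by Claim~\ref{cl_rev_contribution} it contributes $1$.

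It then remains to check that the parts $S_{v,k}$ for distinct pairs $(v,k)$ are distinct parts of the joint partition. This is immediate because $S_{v,k}$ contains $j_{v,k}$, the $j_{v,k}$'s are pairwise distinct items, and distinct parts of a partition are disjoint: if $S_{v,k}=S_{v',k'}$ then it would contain both $j_{v,k}$ and $j_{v',k'}$, forcing $(v,k)=(v',k')$. Hence we have $\ell\cdot|A|=|S_A|$ distinct revenue-$1$ parts, giving $R(\cappart_{m_{v,k}\in S_A}\cP_{v,k})\ge |S_A|$.

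The only mildly delicate point — and the one I would write out most carefully — is the bookkeeping in the second paragraph verifying that $j_{v,k}$ and $j'_{v,k}$ land on the same side of every speaking mediator's reported partition; this uses both the independence of $A$ (to kill the "helpers of neighbors" clause whenever a neighbor's mediator could speak) and the fact that mediator $m_{v,k''}$ for $k''\ne k$ treats the pair $(j_{v,k},j'_{v,k})$ as ordinary elements of $R_v$. Everything else (distinctness of parts, applying Claim~\ref{cl_rev_contribution}) is routine.
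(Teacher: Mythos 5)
Your proof is correct and follows essentially the same route as the paper's: exhibit $|S_A|$ distinct parts, one containing each $j_{v,k}$, and use the independence of $A$ to show that only mediators of neighbors of $v$ can separate $j_{v,k}$ from $j'_{v,k}$, so each such part has size at least $2$ and contributes $1$ via Claim~\ref{cl_rev_contribution}. The case analysis you flag as the delicate point is exactly the observation the paper states (more tersely) as ``each pair $(j_{v,k}, j'_{v,k})$ is separated only by the partitions of mediators corresponding to neighbors of $v$.''
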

\begin{proof}
Observe that $\cP_{v,k}$ separates $j_{v,k}$ from every other item of the set $\{j_{v,k} \mid v \in V, 1 \leq k \leq \ell\}$. Hence, $\cappart_{m_{v,k} \in S_A} \cP_{v,k}$ contains $|S_A|$ different parts $\{T_{v,k} \mid m_{v,k} \in S_A\}$, where each part $T_{v,k}$ contains $j_{v,k}$. On the other hand, each pair $(j_{v,k}, j'_{v,k})$ of items is separated only by the partitions of mediators corresponding to neighbors of $v$. Since $A$ is independent, this implies that $j_{v,k}$ and $j'_{v,k}$ share part in $\cappart_{m_{v,k} \in S_A} \cP_{v,k}$ for every mediator $m_{v,k} \in S_A$. In other words, for every $m_{v,k} \in S_A$, the part $T_{v,k}$ contains $j'_{v,k}$ in addition to $j_{v,k}$, and thus by Claim \ref{cl_rev_contribution}, contributes $1$ to $R(\cappart_{m_{v,k} \in S_A} \cP_{v,k})$. Therefore, $R(\cappart_{m_{v,k} \in S_A} \cP_{v,k}) \geq |S_A| = \ell \cdot |A|$, as claimed. 
\end{proof}

Claim \ref{cl_rev_S_i} asserts that there exists a solution for the above {\DSP} instance whose value is at least $\ell \cdot OPT$, where $OPT$ is the size of the maximum independent set in $G$.

Consider now an arbitrary set $S$ of mediators, and let $S' = \{m_{v,k} \in S \mid \forall_{m_{u,k'} \in S}\, uv \not \in E\}$. Informally, a mediator $m_{v,k}$ is in $S'$ if it belongs to $S$ and no neighbor node $u$ of $v$ has a mediator in $S$. The following lemma upper bounds in terms of $|S'|$ the revenue of the joint partition of the mediators in $S$.
\begin{lemma} \label{lemma_rev_upper_bound} 
$R(\cappart_{m_{v,k} \in S} \cP_{v,k}) \leq |S'| + N + 1$.
\end{lemma}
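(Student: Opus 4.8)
The plan is to classify the parts $T$ of the joint partition $\cP := \times_{m_{v,k}\in S}\cP_{v,k}$ that have nonzero contribution to $R(\cP)$, and bound the number of such parts. By Claim~4.5, every contributing part $T$ satisfies $|T|\ge 2$, contains at least one non-helper item, and contributes exactly $1$ to $R(\cP)$; so it suffices to show that $\cP$ has at most $|S'|+N+1$ parts that contain a non-helper item and have size $\ge 2$. I would split these into two groups: (a) parts containing at least two distinct non-helper items, and (b) parts containing exactly one non-helper item (necessarily $j_{v,k}$ for some $v,k$) together with at least one helper item.

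For group (b), the key observation is that a part containing $j_{v,k}$ and some helper item can contribute only if $j_{v,k}$ has \emph{not} been separated from all the helper items by the speaking mediators. A mediator $m_{u,k'}\in S$ with $u$ a neighbor of $v$ has $j_{v,k}\notin\{j_{u,k'},j'_{u,k'}\}\cup\bigcup_{w: wu\in E}H_w$ only if $j_{v,k}$ lies outside the first block of $\cP_{u,k'}$ — wait, actually $j_{v,k}$ is a non-helper item and is never a helper element of a neighbor of $u$, so $j_{v,k}\in R_u$, the second block of $\cP_{u,k'}$; meanwhile every helper item $j'_{w,k''}$ with $w$ a neighbor of $u$ lies in the first block. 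Thus if $v$ has a neighbor $u$ with a speaking mediator (i.e. $m_{v,k}\notin S'$), then $\cP_{u,k'}$ places $j_{v,k}$ in a different block from every helper item of the neighbors of $u$; one then has to check this already separates $j_{v,k}$ from \emph{enough} helper items — more precisely, from all helper items except possibly those of $v$ itself and except possibly via one residual block — so that at most one part of $\cP$ can pair $j_{v,k}$ with a helper item. I expect the honest accounting here to be: the number of group-(b) parts is at most $|S'|$ plus a small additive slack (the $+1$ residual block), which we fold into the $N+1$ term. Group (a): a part with two non-helper items $j_{v,k},j_{v',k'}$ — since every speaking mediator's partition separates distinct non-helper items that belong to different blocks, and each node $v$ contributes at most the block structure that keeps its own $J_v$ together only when no mediator of $v$ speaks... the cleanest bound is that the non-helper items surviving un-separated group into at most $N+1$ classes (at most one "merged" block per node that is not hit by its own speaking mediators, plus one global residual block). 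Counting these two groups gives $|S'| + (N+1)$.

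The main obstacle I anticipate is the precise bookkeeping of the "$+1$" slack terms: the residual blocks $R_v$ overlap across mediators, and I must make sure that the intersection $\times$ of many such binary partitions does not create more than one "leftover" contributing part beyond what the neighbor-structure forces. I would handle this by fixing a reference: note that $\cP$ is a common refinement of the $|S|$ binary partitions, so every part is determined by a sign vector (which side of each $\cP_{v,k}$); contributing parts either (i) contain some $j_{v,k}$ with $m_{v,k}\in S'$, of which there are at most $|S'|$ since each such $j_{v,k}$ sits in a unique part, or (ii) are "absorbed" into a bounded set of parts determined solely by the $R_v$'s and by nodes with no speaking mediator, which I bound by $N+1$ using that there are $N$ nodes plus one all-residual block. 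Assembling (i) and (ii) and invoking Claim~4.5 to value each contributing part at exactly $1$ finishes the proof.
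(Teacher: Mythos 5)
Your high-level architecture matches the paper's (every contributing part is worth exactly $1$ by Claim~\ref{cl_rev_contribution}, so it suffices to count contributing parts, and the count should decompose as $|S'|$ plus $N+1$), but the decomposition you propose does not track where the $N$ actually comes from, and your bound for group (b) is false. First, group (a) is essentially trivial: every speaking mediator's first block contains exactly one non-helper item (its own $j_{v,k}$), so the joint partition isolates $j_{v,k}$ from all other non-helper items for every $m_{v,k}\in S$; hence at most \emph{one} part (the all-residual part $T$) can contain two or more non-helper items. Reserving $N+1$ for this group is where your accounting loses the budget that is genuinely needed elsewhere. Second, and more seriously, group (b) can have as many as $|S'|+N$ contributing parts, not $|S'|+1$: take $G$ to be a perfect matching and let $S=\{m_{v,1}\mid v\in V\}$, so $S'=\varnothing$; then for each $v$ with partner $u$, the part containing $j_{v,1}$ still contains $j'_{u,2},\dots,j'_{u,\ell}$ (these lie in the first block of $\cP_{v,1}$ and in $R_u$ together with $j_{v,1}$, and no other mediator speaks), so all $N$ of these parts contribute $1$ while your claimed bound for group (b) is $0+1$. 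Relatedly, your separation claim is backwards: a speaking neighbor $m_{u,k'}$ separates $j_{v,k}$ \emph{from} $H_v$ (since $v$ is a neighbor of $u$, $H_v$ sits in the first block of $\cP_{u,k'}$ while $j_{v,k}\in R_u$), and what it fails to separate $j_{v,k}$ from are the helper items $H_w$ of $v$'s \emph{other} neighbors $w$ that are not adjacent to $u$ — not ``all helper items except possibly those of $v$ itself.''

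The missing idea is a case analysis on $|J_v\cap S|$ for mediators $m_{v,k}\in S\setminus S'$. If $|J_v\cap S|\ge 2$, a \emph{second} speaking mediator $m_{v,k''}$ of the same node has $\bigcup_{w:wv\in E}H_w$ in its first block but $j_{v,k}$ in $R_v$, so together with a speaking neighbor (which strips $j'_{v,k}$ and $H_v$) the part of $j_{v,k}$ is reduced to the singleton $\{j_{v,k}\}$ and contributes $0$. If $|J_v\cap S|=1$, the part may well contribute $1$ (as in the matching example), but there is at most one such part per node, hence at most $N$ in total. That per-node bound, plus the $|S'|$ parts of mediators in $S'$ and the single residual part $T$, is what yields $|S'|+N+1$; your dichotomy (i)/(ii) in the final paragraph has no bucket that correctly captures these ``lone speaking mediator of a non-isolated node'' parts, since they are neither indexed by $S'$ nor ``determined solely by the $R_v$'s and by nodes with no speaking mediator.''
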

\begin{proof}
Each partition $\cP_{v,k}$ separates a single non-helper element $j_{v,k}$ from the other non-helper elements. Hence, $\cappart_{m_{v,k} \in S} \cP_{v,k}$ consists of at most $|S| + 1$ parts containing non-helper elements (recall that parts with only helper elements have $0$ contribution to $R(\cappart_{m_{v,k} \in S} \cP_{v,k})$, and thus, can be ignored). Let us label these parts $\{T_{v,k}\}_{m_{v,k} \in S}, T$, where $T_{v,k}$ is the part containing $j_{v,k}$ for every $m_{v,k} \in S$ and $T$ is the part containing the remaining non-helper elements. Let us upper bound the contribution of each such part to $R(\cappart_{u \in S} \cP_u)$.
\begin{itemize}
	\item The part $T$ and all the parts $\{T_{v,k}\}_{m_{v,k} \in S'}$ can contribute at most $1$ each because no part has a larger contribution.
	\item Consider a part $T_{v,k}$ obeying $m_{v,k} \in S \setminus S'$ and $|J_v \cap S| = 1$ (i.e., $m_{v, k}$ is the only mediator of $v$ belonging to $S$). This part also contribute at most $1$, but there can be at most $N$ such parts, one for every node.
	\item Finally, consider a part $T_{v,k}$ obeying $m_{v,k} \in S \setminus S'$ and $|J_v \cap S| \geq 2$ (i.e., at least two mediators of $v$ belong to $S$). By the construction of  $\cP_{v,k}$, $T_{v,k}$ can contain in addition to $j_{v,k}$ only the corresponding helper element $j'_{v,k}$ and helper elements from $\bigcup_{u \mid uv \in E} H_u$. Let us see why none of these helper elements actually belongs to $T_{v,k}$, and thus, the part $T_v$ contains only $j_{v,k}$ and contributes $0$.
	\begin{itemize}
		\item Since $m_{v,k} \not \in S'$, there exists a neighbor node $v'$ of $v$ having a mediator $m_{v',k'} \in S$. Then, the partition $\cP_{v',k'}$ separates $j'_{v,k}$ from $j_{v,k}$ and guarantees that $j'_{v,k} \not \in T_{v,k}$.
		\item Let $m_{v,k'}$ be another mediator in $J_v \cap S$ (exists since $|J_v \cap S| \geq 2$). Then, the partition $\cP_{v,k'}$ separates the helper elements of $\bigcup_{u \mid uv \in E} H_u$ from $j_{v,k}$ and guarantees that none of these helper mediators belongs to $T_{v,k}$.
	\end{itemize}
\end{itemize}
In conclusion: $R(\cappart_{m_{v,k} \in S} \cP_{v,k}) \leq |S'| + N + 1$, as claimed.
\end{proof}

We are now ready to prove Theorem~\ref{thm:general_model_reduction}.
\begin{algorithm}
\caption{\textsf{Independent Set Algorithm}} \label{alg:independent_set}
	Construct a {\DSP} instance from the independent set instance as described above.\\
	Run the $\alpha$-approximation algorithm for {\DSP} assumed by Theorem~\ref{thm:general_model_reduction} on the constructed instance, and let $S$ be the set of mediators speaking in the obtained strategy profile.\\
	Calculate the configuration $S' = \{m_{v,k} \in S \mid \forall_{m_{u,k'} \in S}\, uv \not \in E\}$.\\
	Calculate the independent set $A' = \{v \in V \mid \exists_{1 \leq k \leq \ell} \, m_{v, k} \in S'\}$.\\
	If $A'$ is non-empty output it, otherwise output an arbitrary single node.
\end{algorithm}

\begin{proof}[Proof of Theorem~\ref{thm:general_model_reduction}]
Consider Algorithm~\ref{alg:independent_set}. We would like to show that this algorithm is an $\alpha(1 + \ell^{-1}(N + 1))$-approximation algorithm for Maximum Independent Set, which proves the theorem. The {\DSP} instance constructed by Algorithm~\ref{alg:independent_set} has a strategy profile of revenue at least $\ell \cdot OPT$ by Claim~\ref{cl_rev_S_i}. Since $S$ is obtained using an $\alpha$-approximation algorithm, $R(\cappart_{m_{v, k} \in S} \cP_{v, k}) \geq \ell \cdot OPT/\alpha$. By Lemma~\ref{lemma_rev_upper_bound} we now get: 
\[
	|S'|
	\geq
	R(\cappart_{m_{v, k} \in S} \cP_{v, k}) - (N + 1)
	\geq
	\ell \cdot OPT / \alpha - (N + 1)
	\enspace.
\]

Informally, $A'$ is the set of nodes having mediators in $S'$. The independence of $A'$ follows from the construction of $S'$, which guarantees that Algorithm~\ref{alg:independent_set} outputs an independent set. Additionally, since each node has $\ell$ mediators:
	\[
		|A'|
		\geq
		\frac{|S'|}{\ell}
		\geq
		\frac{\ell \cdot OPT / \alpha - (N + 1)}{\ell}
		=
		\frac{OPT}{\alpha} - \frac{N + 1}{\ell}
		\enspace.
	\]
If $OPT \geq \alpha(1 + \ell^{-1}(N + 1))$, then:
\begin{align*}
	|A'|
	\geq{} &
	\frac{OPT}{\alpha} - \frac{N + 1}{\ell}
	=
	\frac{OPT}{\alpha(1 + \ell^{-1}(N + 1))} + \frac{[1 - (1 + \ell^{-1}(N + 1))^{-1}] \cdot OPT}{\alpha} - \frac{N + 1}{\ell}\\
	\geq{} &
	\frac{OPT}{\alpha(1 + \ell^{-1}(N + 1))}
	\enspace.
\end{align*}
On the other hand, if $OPT \leq \alpha(1 + \ell^{-1}(N + 1))$, then the solution of Algorithm~\ref{alg:independent_set} is of size at least $\frac{OPT}{\alpha(1 + \ell^{-1}(N + 1))}$ simply because it is not empty.
\end{proof}


\inlong{
\subsection{Approximation algorithms for \DSP} \label{sec_gen_algorithms}

In light of Theorem \ref{thm:general_model_reduction}, 
an efficient algorithm with a reasonable approximation guarantee for general {\DSP} is unlikely to exist when the three parameters of the problem are all ``large''. Subsection~\ref{sec_alg_gen}
gives a trivial algorithm which has a good approximation guarantee when either $n$ or $k$ is small. A more 
interesting result is given in Subsection~\ref{sec_alg_local_experts}, which proves a $5$-approximation algorithm for
\DSP under the assumption that the mediators are \emph{local experts} (as stated in Theorem \ref{thm:alg_local_experts}).
 

\subsubsection{A simple $\max\{1, \min\{n,k-1\}\}$-approximation algorithm for \DSP} \label{sec_alg_gen}

In this section we prove the following theorem.

\begin{reptheorem}{thm:simple_approximations}
There is an efficient $\max\{1, \min\{n, k - 1\}\}$-approximation algorithm for {\DSP}.
\end{reptheorem}
\begin{proof}
We show that the algorithm that simply returns the partition $\{I\}$, the joint partition corresponding to the case where all mediators are silent, has the promised approximation guarantee. For that purpose we analyze the revenue of $\{I\}$ in two different ways:
\begin{itemize}
\item
Let $\cP' = (\cP'_1, \cP'_2, \ldots, \cP'_m)$ be an arbitrary strategy profile of the instance in question. The revenue of $\cP'$ is:
\begin{align*}
		R(\cappart_{t = 1}^m \cP'_t)
		={} &
    \sum_{S \in \cappart_{t = 1}^m \cP'_t} \mspace{-9mu} \mu(S) \cdot v(S)
    \leq
    |\cappart_{t = 1}^m \cP'_t| \cdot \max_{S \in \cappart_{t = 1}^m \cP'_t} \mu(S) \cdot v(S)\\
    \leq{} &
    n \cdot \max_{S \in \cappart_{t = 1}^m \cP'_t} \mu(S) \cdot v(S)
		\leq
		n \cdot R(\{I\})
    \enspace,
\end{align*}
where the last inequality holds since, for every set $S$, $R(\{I\}) = v(I) \geq v(S) \cdot \mu(S)$. 
This shows that the approximation ratio of the trivial strategy profile $\{I\}$ provides an $n$-approximation to the optimal 
revenue. 

\item

If $k = 1$, then the revenue of any strategy profile is $0$ since we assume a second price auction. Hence, we can assume from now on $k > 1$.%
Let $\cP' = (\cP'_1, \cP'_2, \ldots, \cP'_m)$ be an arbitrary strategy profile of the instance in question. The revenue of $\cP'$ is:\begin{align*}
		R(\cappart_{t = 1}^m \cP'_t)
		={} &
    \sum_{S \in \cappart_{t = 1}^m \cP'_t} \mspace{-9mu} \mu(S) \cdot v(S)
		=
		\sum_{S \in \cappart_{t = 1}^m \cP'_t} \mspace{-9mu} \mu(S) \cdot \left(\smax_{i \in B} \frac{\sum_{j \in S} \mu(j) \cdot v_{i,j}}{\mu(S)} 
			\right)\\
		={} &
		\sum_{S \in \cappart_{t = 1}^m \cP'_t} \left(\smax_{i \in B} \sum_{j \in S} \mu(j) \cdot v_{i,j}\right)
    \enspace.
\end{align*}

For every bidder $i \in B$, let $\Sigma_i = \sum_{j \in I} \mu(j) \cdot v_{ij}$. It is easy to see that $v(I) = \smax_{i \in B} \Sigma_i$ (in other words, the second highest $\Sigma_i$ value is $v(I)$). Let $i^* \in B$ be the index maximizing $\Sigma_{i^*}$ (breaking ties arbitrary). Consider a set $S \in \cappart_{t = 1}^m \cP'_t$. The elements of $S$ contribute at least $\smax_{i \in B} \sum_{j \in S} \mu(j) \cdot v_{i,j}$ to at least two of the values: $\Sigma_1, \ldots, \Sigma_n$. Thus, they contribute at least the same quantity to the sum $\sum_{i \in B \setminus \{i^*\}} \Sigma_i$. This means that at least one of the values $\{\Sigma_i\}_{i \in B \setminus \{i^*\}}$ must be at least:
\[
	\frac{\sum_{S \in \cappart_{t = 1}^m \cP'_t} \left(\smax_{i \in B} \sum_{j \in S} \mu(j) \cdot v_{i, j}\right)}{k - 1}
	=
	\frac{R(\cappart_{t = 1}^m \cP'_t)}{k - 1} \enspace.
\]
By definition $\Sigma_{i^*}$ must also be at least that large, and therefore, \[R(\{I\}) = v(I) \geq R(\cappart_{t = 1}^m \cP'_t) / (k - 1). \qedhere\]\end{itemize}
\end{proof}

\subsubsection{A $5$-approximation algorithm for Local Expert mediators} \label{sec_alg_local_experts}

In this subsection we consider an interesting special case of {\DSP} which is henceforth shown to admit a constant factor approximation.

\begin{definition}[Local Expert mediators]
A mediator $t$ in a {\DSP} instance is a \emph{local experts} if there exists a set $I_t \subseteq I$ such that: 
$\cP_t = \{\{j\} \mid j \in I_t\} \cup \{I \setminus I_t\}$.
\end{definition}

Informally, a local expert mediator has perfect 
knowledge about a single set $I_t$---if the item belongs to $I_t$, he can tell exactly which item it is. Our objective in the rest of the section is to prove Theorem~\ref{thm:alg_local_experts}, i.e., to describe a $5$-approximation algorithm for instances of {\DSP} consisting of only local expert mediators.

We begin the proof with an upper bound on the revenue of the optimal joint strategy, which we denote by $\cP^*$. To describe this bound, we need some notation. We use $\hat{I}$ to denote the set of items that are within the experty field of some mediator (formally, $\hat{I} = \bigcup_{t \in M} I_t$). Additionally, for every item $j \in I$, $h_j$ and $s_j$ denote $\mu(j)$ times the largest value and second largest value, respectively, of $j$ for any bidder (more formally, $h_j = \mu(j) \cdot \max_{i \in B} v_{i,j}$ and $s_j = \mu(j) \cdot \smax_{i \in B} v_{i,j}$). 

Next, we need to partition the items into multiple sets. The optimal joint partition $\cP^*$ is obtained from partitions $\{\cP^*_t\}_{t \in M}$, where $\cP^*_t$ is a possible partition for mediator $t$. Each part of $\cP^*$ is the intersection of $|M|$ parts, one from each partition in $\{\cP^*_t\}_{t \in M}$. On the other hand, each part of $\cP^*_t$ is a subset of $I_t$, except for maybe a single part. Hence, there exists at most a single part $I_0 \in \cP^*$ such that $I_0 \not \subseteq I_t$ for any $t \in M$. For ease of notation, if there is no such part (which can happen when $\hat{I} = I$) we denote $I_0 = \varnothing$. To partition the items of $I \setminus I_0$, we associate each part $S \in \cP^* \setminus \{I_0\}$ with an arbitrary mediator $t$ such that $S \subseteq I_t$, and denote by $A_t$ the set of items of all the parts associated with mediator $t$. Observe that the construction of $A_t$ guarantees that $A_t \subseteq I_t$. Additionally, $\{I_0\} \cup \{A_t\}_{t \in M}$ is a disjoint partition of $I$.

A different partition of the items partitions them according to the bidder that values them the most. In other words, for every $1 \leq i \leq k$, $H_i$ is the set of items for which bidder $i$ has the largest value. If multiple bidders have the same largest value for an item, we assign it to the set $H_i$ of an arbitrary one of these bidders. Notice that the construction of $H_i$ guarantees that the sets $\{H_i\}_{i \in B}$ are disjoint.

Finally, for every set $S \subseteq I$, we use $\phi(S)$ to denote the sum of the $|B| - 1$ smaller values in $\{\sum_{j \in H_i \cap S} h_j\}_{i \in B}$, i.e., the sum of all the values except the largest one. In other words, we calculate for every bidder $i$ the sum of its values for items in $H_i \cap S$, and then add up the $|B| - 1$ smaller sums. Using all the above notation we can now state our promised upper bound on $R(\cP^*)$.

\begin{lemma} \label{lem:upper_bound}
$R(\cP^*) \leq \mu(I_0) \cdot v(I_0) + \sum_{j \in \hat{I}} s_j + \sum_{t \in M} \phi(A_t)$.
\end{lemma}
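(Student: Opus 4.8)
The plan is to decompose the revenue $R(\cP^*) = \sum_{S \in \cP^*} \mu(S) \cdot v(S)$ according to the partition $\{I_0\} \cup \{A_t\}_{t \in M}$ of the items, so that each part $S \in \cP^* \setminus \{I_0\}$ is counted against the mediator $t$ with $S \subseteq I_t$ that it was associated with, and $I_0$ contributes its own term $\mu(I_0) \cdot v(I_0)$. Thus it suffices to show, for every mediator $t \in M$, that
\[
  \sum_{\substack{S \in \cP^* \setminus \{I_0\} \\ S \subseteq A_t}} \mu(S) \cdot v(S)
  \;\leq\;
  \sum_{j \in A_t} s_j + \phi(A_t)
  \enspace,
\]
since summing this over $t$ and using $\bigcup_t A_t = \hat I \setminus I_0 \subseteq \hat I$ (and $s_j \ge 0$) yields exactly the claimed bound $\mu(I_0) v(I_0) + \sum_{j \in \hat I} s_j + \sum_t \phi(A_t)$.

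First I would reduce the per-mediator inequality to a single-part inequality: for a fixed part $S \subseteq A_t$ I want $\mu(S) \cdot v(S) \leq \sum_{j \in S} s_j + \phi(S)$, and then sum over the parts $S$ associated with $t$, noting that these parts are disjoint subsets of $A_t$ so the right-hand side telescopes into $\sum_{j \in A_t} s_j + \sum_S \phi(S) \le \sum_{j \in A_t} s_j + \phi(A_t)$ (the last step because the $H_i$-mass of $S$ is at most that of $A_t$ for each $i$, and the sum-of-all-but-the-largest functional is subadditive under splitting — this monotonicity/subadditivity fact is the one small lemma I would isolate and verify separately). So the crux is the single-part bound. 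Fix $S$; recall $v(S) = \smax_{i} v_{i,S}$ where $v_{i,S} = \frac{1}{\mu(S)} \sum_{j \in S} \mu(j) v_{ij}$, so $\mu(S) v(S)$ equals the \emph{second} largest of the quantities $\{\sum_{j \in S} \mu(j) v_{ij}\}_{i \in B}$. Now for each item $j$, $\mu(j) v_{ij} \le h_j$ always, and $\mu(j)v_{ij} \le s_j$ whenever $i$ is not the (unique chosen) top bidder for $j$, i.e. whenever $j \notin H_i$. Hence $\sum_{j \in S} \mu(j) v_{ij} \le \sum_{j \in S \cap H_i} h_j + \sum_{j \in S \setminus H_i} s_j \le \sum_{j \in S} s_j + \sum_{j \in S \cap H_i} h_j$. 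Plugging this into the "second largest" expression: if $i^\star$ is the bidder achieving the largest of the $\{\sum_{j\in S} \mu(j) v_{ij}\}$, then the second largest is at most $\max_{i \neq i^\star} \big(\sum_{j\in S} s_j + \sum_{j \in S \cap H_i} h_j\big) = \sum_{j \in S} s_j + \max_{i \neq i^\star} \sum_{j \in S \cap H_i} h_j \le \sum_{j \in S} s_j + \phi(S)$, where the final inequality holds because removing one index ($i^\star$) from $\{\sum_{j \in S \cap H_i} h_j\}_{i \in B}$ and taking the max of the rest is at most the sum of all but the largest, which is precisely $\phi(S)$. This gives $\mu(S) v(S) \le \sum_{j \in S} s_j + \phi(S)$, as needed.

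The main obstacle I anticipate is not the single-part estimate — which is essentially the computation above — but handling the bookkeeping cleanly: being careful that $I_0$ is the \emph{only} part of $\cP^*$ not contained in some $I_t$ (this is exactly the "at most one exceptional part" structure already argued in the excerpt from the fact that each $\cP^*_t$ has all but one part inside $I_t$), and verifying the subadditivity claim $\sum_S \phi(S) \le \phi(A_t)$ for a partition $\{S\}$ of (a subset of) $A_t$. For the latter: $\phi$ is defined via the functional $f(x_1,\dots,x_k) = \sum_i x_i - \max_i x_i$ applied to the nonnegative vectors $\big(\sum_{j \in H_i \cap S} h_j\big)_i$, and since these vectors add up (coordinatewise) over the parts $S$ to a vector dominated by $\big(\sum_{j \in H_i \cap A_t} h_j\big)_i$, I just need that $f$ is monotone and superadditive (equivalently, $\sum_i x_i - \max_i x_i$ grows at least as fast as the sum of the pieces' values) — which follows since $\max$ is subadditive and monotone. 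I would state this as a one-line observation rather than a separate lemma. Everything else is routine rearrangement.
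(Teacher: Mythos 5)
Your overall decomposition is the same as the paper's: charge each part $S\in\cP^*\setminus\{I_0\}$ to the mediator $t$ with $S\subseteq A_t$, bound the total over $A_t$ by $\sum_{j\in A_t}s_j+\phi(A_t)$, and handle $I_0$ separately. However, your justification of the single-part bound $\mu(S)\cdot v(S)\le\sum_{j\in S}s_j+\phi(S)$ contains a step that fails. Writing $y_i:=\sum_{j\in S}\mu(j)v_{ij}$ and $z_i:=\sum_{j\in S\cap H_i}h_j$, you take $i^\star:=\arg\max_i y_i$ and then assert that $\max_{i\ne i^\star}z_i\le\phi(S)=\sum_i z_i-\max_i z_i$. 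This holds only when $i^\star$ happens to also maximize $z_i$; otherwise it reads $\max_i z_i\le\sum_i z_i-\max_i z_i$, which is false whenever one $z_i$ dominates the rest. And $\arg\max_i y_i$ need not coincide with $\arg\max_i z_i$: take $S=\{j_1,j_2\}$ with equal probabilities, bidder $1$ valuing $(10,0)$, bidder $2$ valuing $(9,9)$, bidder $3$ valuing $(0,8)$; then $j_1\in H_1$, $j_2\in H_2$, $z=(10,9,0)\mu(j_1)$ and $\phi(S)=9\mu(j_1)$, yet $i^\star=2$ and $\max_{i\ne 2}z_i=10\mu(j_1)>\phi(S)$. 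The single-part inequality is still true here, but your chain of inequalities does not establish it.

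The repair is small: exclude $i:=\arg\max_i z_i$ instead. Since the second-largest of $\{y_i\}$ is at most $\max_{i'\ne i}y_{i'}$ for \emph{any} fixed $i$, you get $\mu(S)v(S)\le\max_{i'\ne i}\bigl(\sum_{j\in S}s_j+z_{i'}\bigr)\le\sum_{j\in S}s_j+\phi(S)$, because with this choice $\max_{i'\ne i}z_{i'}$ is the second-largest $z$-value, hence at most $\sum_{i'\ne i}z_{i'}=\phi(S)$. Your remaining step, the superadditivity $\sum_S\phi(S)\le\phi(A_t)$ over the parts partitioning $A_t$, is correct (the functional $x\mapsto\sum_i x_i-\max_i x_i$ is monotone and superadditive on nonnegative vectors since $\max$ is subadditive). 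It is worth noting that the paper's proof sidesteps both issues at once: it fixes the excluded bidder once per mediator --- the bidder whose term is not counted by $\phi(A_t)$ --- applies the per-part bound $\mu(S)v(S)\le\sum_{j\in S\cap H_i}s_j+\sum_{j\in S\setminus H_i}h_j$ with that same $i$ for every part of $A_t$, and sums directly to $\sum_{j\in A_t\cap H_i}s_j+\phi(A_t)$, so neither $\phi(S)$ nor any superadditivity argument is needed.
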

\begin{proof}
Fix an arbitrary mediator $t \in M$, and let $i$ be the bidder whose term is not counted by $\phi(A_t)$. For every part $S \in \cP^*$ associated with $t$, let $i'$ be a bidder other than $i$ that has one of the two largest bids for $S$. By definition:
\[
	\mu(S) \cdot v(S)
	=
	\smax_{i'' \in B} \sum_{j \in S} \mu(j) \cdot v_{i'',j}
	\leq
	\sum_{j \in S} \mu(j) \cdot v_{i',j}
	\leq
	\sum_{j \in S \cap H_i} s_j + \sum_{j \in S \setminus H_i} h_j
	\enspace.
\]
Summing over all parts associated with $t$, we get:
\[
	\sum_{\substack{S \in \cP^*\\S \subseteq A_t}} \mu(S) \cdot v(S)
	\leq
	\sum_{j \in A_t \cap H_i} s_j + \sum_{j \in A_t \setminus H_i} h_j
	\leq
	\sum_{j \in A_t} s_j + \phi(A_t)
	\enspace.
\]
Summing over all mediators, we get:
\[
	R(\cP^*) - \mu(I_0) \cdot v(I_0)
	\leq
	\sum_{t \in M} \left(\sum_{j \in A_t} s_j + \phi(A_t)\right)
	\leq
	\sum_{j \in \hat{I}} s_j + \sum_{t \in M} \phi(A_t)
	\enspace.
	\qedhere
\]
\end{proof}

Our next step is to describe joint partitions that can be found efficiently and upper bound the different terms in the bound given by Lemma~\ref{lem:upper_bound} (up to a constant factor). Finding such partitions for the first two terms is quite straightforward.

\begin{observation} \label{obs:silent_value}
The joint partitions where all mediators are silent $\{I\} = \cappart_{i \in B} \{I\}$ obeys:
\[
	R(\{I\})
	\geq
	\mu(I_0) \cdot v(I_0)
	\enspace.
\]
\end{observation}
\begin{proof}
\[
	R(\{I\})
	=
	\smax_{i \in B} \left(\sum_{j \in I} \mu(j) \cdot v_{i,j}\right)
	\geq
	\smax_{i \in B} \left(\sum_{j \in I_0} \mu(j) \cdot  v_{i,j}\right)
	=
	\mu(I_0) \cdot v(I_0)
	\enspace.
	\qedhere
\]
\end{proof}

\begin{observation} \label{obs:all_talk_value}
The joint partitions $\cP_S = \cappart_{t \in M} \cP_t$ where every mediators reports all his information obeys:
\[
	R(\cP_S)
	=
	R(\{\{j\}_{j \in \hat{I}}\} \cup \{I \setminus \hat{I}\})
	\geq
	\sum_{j \in \hat{I}} \mu(j) \cdot \smax_{i \in B} v_{i,j}
	=
	\sum_{j \in \hat{I}} s_j
	\enspace.
\]
\end{observation}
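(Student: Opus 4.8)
The plan is to make the joint partition $\cP_S$ completely explicit and then read off its revenue. Because every mediator $t$ is a local expert, its partition has the special form $\cP_t = \{\{j\} \mid j \in I_t\} \cup \{I \setminus I_t\}$, and in the profile $\cP_S$ every mediator reports all of $\cP_t$, so $\cP_S = \cappart_{t \in M} \cP_t$ is the common refinement of these partitions. First I would argue that this common refinement equals $\{\{j\} \mid j \in \hat I\} \cup \{I \setminus \hat I\}$: any item $j \in \hat I$ lies in $I_t$ for some $t$, hence is already a singleton in $\cP_t$ and therefore also in the refinement; conversely any item $j \notin \hat I$ lies in the residual block $I \setminus I_t$ of every mediator $t$, so such items are never separated from one another and together form the single block $I \setminus \hat I$.

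With $\cP_S$ identified, the second step is a direct evaluation of $R(\cP_S) = \sum_{S \in \cP_S} \mu(S)\cdot v(S)$. For each singleton block $\{j\}$ we have $v_{i,\{j\}} = v_{i,j}$ for every bidder $i$, so $v(\{j\}) = \smax_{i \in B} v_{i,j}$ and the block contributes $\mu(j)\cdot\smax_{i\in B} v_{i,j} = s_j$; summing over $j \in \hat I$ already gives $\sum_{j \in \hat I} s_j$. The only block left is $I \setminus \hat I$, whose contribution $\mu(I \setminus \hat I)\cdot v(I \setminus \hat I)$ is nonnegative, so discarding it only decreases the total and yields $R(\cP_S) \ge \sum_{j \in \hat I} s_j$, which is the claimed inequality (with equality in the displayed chain coming from the exact description of $\cP_S$).

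I do not expect a genuine obstacle here: the statement is essentially a bookkeeping computation. The only points that need a word of care are the degenerate case $\hat I = I$, where $I \setminus \hat I = \varnothing$ and must simply be dropped from the partition, and the elementary but necessary verification that the common refinement of the local-expert partitions is indeed as claimed rather than something coarser — this is precisely the place where the \emph{local expert} structure of the mediators is used.
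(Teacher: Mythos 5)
Your proof is correct and is exactly the direct verification the paper intends (the paper states this as an observation without proof): the local-expert structure makes the common refinement $\{\{j\}\}_{j \in \hat I} \cup \{I \setminus \hat I\}$, each singleton contributes $s_j$, and the residual block contributes a nonnegative amount. Your attention to the degenerate case $\hat I = I$ is a reasonable extra care, but nothing more is needed.
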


It remains to find a joint partition that upper bounds, up to a constant factor, the third term in the bound given by Lemma~\ref{lem:upper_bound}. If one knows the sets $\{A_t\}_{t \in M}$, then one can easily get such a partition using the method of Ghosh et al.~\cite{GNS07}. In this method, one partitions every set $A_t$ into the parts $\{A_t \cap H_i\}_{i = 1}^t$ and sort these parts according to the value of $\sum_{j \in A_t \cap H_i} h_j$. Then, with probability $\nicefrac{1}{2}$ every even part is united with the part that appears after it in the above order, and with probability $\nicefrac{1}{2}$ it is united with the part that appears before it in this order. It is not difficult to verify that if the part of bidder $i$ is not the first in the order, then with probability $\nicefrac{1}{2}$ it is unified with the part that appears before it in the order, and then it contributes $\sum_{j \in A_t \cap H_i} h_j$ to the revenue. Hence, the expected contribution to the revenue of the parts produced from $A_t$ is at least $\nicefrac{1}{2} \cdot \phi(A_t)$. 

Algorithm~\ref{alg:local_experts} can find a partition that is competitive against $\sum_{t \in M} \phi(A_t)$ without knowing the sets $\{A_t\}_{t \in M}$. The algorithm uses the notation of a \emph{cover}. We say that a set $S_j$ is a cover of an element $j \in I_t \cap H_i$ if $S_j \subseteq I_t \cap H_{i'}$ for some $i \neq i'$.

\begin{algorithm}
\caption{\textsf{Local Experts - Auxiliary Algorithm}} \label{alg:local_experts}
Let $I' \gets \hat{I}$ and $\cP \gets \{I \setminus \hat{I}\}$.\\
\While{$I' \neq \varnothing$}{
	Let $j$ be the element maximizing $h_j$ in $I'$.\\
	Find a cover $S_j \subseteq I'$ of $j$ obeying $h_j \leq \sum_{j' \in S_j} h_{j'} \leq 2h_j$, or maximizing $\sum_{j' \in S_j} h_{j'}$ if no cover of $j$ makes this expression at least $h_j$. \label{line:cover_finding}\\
	Add the part $S_j \cup \{j\}$ to $\cP$, and remove the elements of $S_j \cup \{j\}$ from $I'$.
}
\Return{$\cP$}
\end{algorithm}

Notice that the definition of cover guarantees that a part containing both $j$ and $S_j$ contributes to the revenue at least $\min\{h_j, \sum_{j' \in S_j h_{j'}}\}$. Using this observation, each iteration of Algorithm~\ref{alg:local_experts} can be viewed as trying to extract revenue from element $j$. Additionally, observe that the partition $\cP$ produced by Algorithm~\ref{alg:local_experts} can be presented as a joint partition since every part in it, except for $I \setminus \hat{I}$, contains only items that belong to a single set $I_t$ (for some mediator $t \in M$).

\begin{observation}
Algorithm~\ref{alg:local_experts} can be implemented in polynomial time.
\end{observation}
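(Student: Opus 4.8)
The plan is to bound the number of iterations of the while loop of Algorithm~\ref{alg:local_experts}, and then show that its only non-obvious step, the cover-finding on line~\ref{line:cover_finding}, admits a simple greedy implementation.

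First I would observe that each iteration of the while loop removes at least the element $j$ from $I'$, so the loop runs at most $|\hat{I}| \leq n$ times; and within a single iteration, choosing the $h_j$-maximizing element $j$, forming the part $S_j \cup \{j\}$, and updating $I'$ and $\cP$ are all clearly polynomial. Hence the whole question reduces to implementing line~\ref{line:cover_finding} in polynomial time.

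For that step, I would use the following structure. The bidder $i$ with $j \in H_i$ is uniquely determined, since the sets $\{H_{i'}\}_{i' \in B}$ partition $I$; and by the definition of a cover, any cover $S_j \subseteq I'$ of $j$ must satisfy $S_j \subseteq I_t \cap H_{i'} \cap I'$ for some mediator $t$ with $j \in I_t$ and some bidder $i' \neq i$. There are at most $m(k-1)$ such ``candidate containers'', and each is computable in $O(n)$ time. For a fixed candidate container $C = I_t \cap H_{i'} \cap I'$, I want a subset $S_j \subseteq C$ with $h_j \leq \sum_{j' \in S_j} h_{j'} \leq 2 h_j$, or all of $C$ when $\sum_{j' \in C} h_{j'} < h_j$. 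The key point is that, since $j$ maximizes $h_j$ over $I'$ and $C \subseteq I'$, every element of $C$ has $h$-value at most $h_j$; therefore, adding the elements of $C$ one at a time in an arbitrary order increases the running sum by at most $h_j$ per step, so the first prefix whose sum reaches $h_j$ (if one exists) has sum strictly below $2 h_j$ and is thus a valid cover, while if no prefix reaches $h_j$ then $C$ itself is the maximum-$h$-weight choice for this container. Line~\ref{line:cover_finding} is then implemented by scanning all $\leq m(k-1)$ candidate containers: if some container yields a sum-reaching prefix, return it (its sum automatically lies in $[h_j, 2 h_j]$); otherwise return the container of maximum total $h$-weight, which may be $\varnothing$, in which case the added part is just the singleton $\{j\}$. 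This costs $O(mkn)$ time per iteration, hence $O(m n^2 k)$ overall, which is polynomial.

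The main obstacle — a mild one — is the correctness of this greedy prefix rule: one must check carefully that, because all newly available elements have $h$-value at most $h_j$, a prefix sum cannot ``overshoot'' past $2 h_j$ without first crossing $h_j$, so stopping the instant the running sum reaches $h_j$ yields a cover in the required range, and, when the container is too light, taking all of it is optimal. Everything else — the iteration count, the enumeration of containers, and the running-time accounting — is routine bookkeeping.
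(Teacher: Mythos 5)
Your proposal is correct and matches the paper's argument: both enumerate the candidate sets $I_t \cap H_{i'} \cap I'$ over mediators $t$ with $j \in I_t$ and bidders $i' \neq i$, and both exploit the fact that $j$ maximizes $h_j$ in $I'$ (so every candidate element has $h$-value at most $h_j$) to land the sum in $[h_j, 2h_j]$ by adjusting one element at a time. The only difference is cosmetic — you build a prefix up until the sum first reaches $h_j$, while the paper takes the maximum-weight container and removes elements until the sum drops into range.
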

\begin{proof}
One can find a cover $S_j$ maximizing $\sum_{j' \in S_j} h_{j'}$ in line~\ref{line:cover_finding} of the algorithm by considering the set $I_t \cap H_{i'} \cap I'$ for every mediators $t$ and bidder $i'$ obeying $j \in I_t$ and $j \not \in H_{i'}$. Moreover, if this cover is of size larger than $2h_j$, then by removing elements from this cover one by one the algorithm must find a cover $S'_j$ obeying $h_j \leq \sum_{j' \in S'_j} h_{j'} \leq 2h_j$ because $j$ is the element maximizing $h_j$ in $I'$.
\end{proof}

The following lemma relates the revenue of the set produced by Algorithm~\ref{alg:local_experts} to $\sum_{t \in M} \phi(A_t)$.
\begin{lemma} \label{lem:greedy_step}
No iteration of the loop of Algorithm~\ref{alg:local_experts} decreases the expression $R(\cP) + \nicefrac{1}{3} \cdot \sum_{t \in M} \phi(A_t \cap I')$.\footnote{Before the algorithm terminates $\cP$ is a partial partition in the sense that some items do not belong to any part in it. However, the definition of $R(\cP)$ naturally extends to such partial partitions.}
\end{lemma}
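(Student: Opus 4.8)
The plan is to track a single iteration of the loop, in which the algorithm picks the element $j$ maximizing $h_j$ over the current $I'$, finds a cover $S_j \subseteq I'$, adds the part $S_j \cup \{j\}$ to $\cP$, and removes $S_j \cup \{j\}$ from $I'$. I would compare the change in $R(\cP)$ against the change in $\tfrac{1}{3}\sum_{t\in M}\phi(A_t\cap I')$ and show the net change is non-negative. First I would record the \emph{gain}: by the observation following the algorithm, a part containing both $j$ and all of $S_j$ contributes at least $\min\{h_j, \sum_{j'\in S_j} h_{j'}\}$ to the revenue, so $R(\cP)$ increases by at least that amount. It is convenient to split into the two cases produced by line~\ref{line:cover_finding}: either a ``good'' cover was found with $h_j \le \sum_{j'\in S_j} h_{j'} \le 2h_j$, in which case the gain is at least $h_j$; or no cover of $j$ reaches $h_j$, in which case the gain is at least $\sum_{j'\in S_j} h_{j'}$, and moreover $S_j$ is a \emph{maximum} cover of $j$, meaning every cover $S$ of $j$ inside $I'$ has $\sum_{j'\in S} h_{j'} < h_j$ and $\sum_{j'\in S}h_{j'} \le \sum_{j'\in S_j}h_{j'}$.

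Next I would bound the \emph{loss}, i.e. the decrease in $\tfrac13\sum_t \phi(A_t\cap I')$ caused by deleting $D:=S_j\cup\{j\}$ from $I'$. Here I use that each removed element $j'$ lies in $I_t\cap H_i$ for a unique mediator-bidder pair, so deleting $D$ decreases at most those sums $\sum_{j''\in H_i\cap A_t\cap I'} h_{j''}$ whose index is touched by $D$; since $\phi$ drops the \emph{largest} of the $|B|$ per-bidder sums within each $A_t$, the total decrease of $\sum_t\phi(A_t\cap I')$ is at most $\sum_{j'\in D} h_{j'} = h_j + \sum_{j'\in S_j}h_{j'}$, but in fact one can do better: within any single set $A_t$, the per-bidder contribution that $\phi$ discards is at least the largest one present, so the elements of $D$ that fall into the ``most valuable remaining bidder'' of their $A_t$ do not count toward $\phi$ at all. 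The element $j$ itself is the globally $h$-maximal remaining element, hence it is the maximal element of its own $H_i\cap A_t\cap I'$, so its removal cannot decrease $\phi(A_t\cap I')$; thus the loss is confined to $S_j$ and is at most $\tfrac13\sum_{j'\in S_j} h_{j'}$.

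Now I would close the two cases. In the good-cover case the gain is $\ge h_j \ge \tfrac12\sum_{j'\in S_j}h_{j'} \ge \tfrac13\sum_{j'\in S_j}h_{j'} \ge$ loss, so the tracked expression does not decrease. In the no-good-cover case the gain is $\ge \sum_{j'\in S_j} h_{j'} \ge \tfrac13\sum_{j'\in S_j} h_{j'} \ge$ loss, again giving the claim; here one should also handle the sub-case where $S_j$ meets several sets $A_t$ and where some $h_{j'}$ with $j'\in S_j$ happens to be the $\phi$-discarded (largest) term of its $A_t$ — but that only \emph{reduces} the loss, so the inequality is preserved. The main obstacle I anticipate is exactly this bookkeeping of how a single removed part $D$ interacts with the $\phi$-terms of \emph{several different} sets $A_t$ at once, and arguing cleanly that the maximality of $j$ (globally) and of $S_j$ (among covers of $j$, in the bad case) suffices to absorb the loss with the $\tfrac13$ slack; I would handle it by bounding the loss mediator-by-mediator and using that $\phi$ always discards the single largest per-bidder sum in each $A_t\cap I'$.
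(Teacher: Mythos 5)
Your overall strategy (gain at least $\min\{h_j,\sum_{j'\in S_j}h_{j'}\}$ per iteration versus a loss in $\nicefrac{1}{3}\cdot\sum_{t\in M}\phi(A_t\cap I')$ bounded element by element) is the same as the paper's, and your treatment of the good-cover case goes through. The gap is in the bad case $\sum_{j'\in S_j}h_{j'}<h_j$, and it sits exactly where you assert that ``$j$ is the maximal element of its own $H_i\cap A_t\cap I'$, so its removal cannot decrease $\phi(A_t\cap I')$.'' This inference is false: $\phi$ discards the bidder with the largest \emph{total} sum $\sum_{j''\in H_{i''}\cap A_{t'}\cap I'}h_{j''}$, not the bidder holding the single $h$-largest element, and even when bidder $i$'s sum is the discarded maximum, deleting $j$ can still lower $\phi(A_{t'}\cap I')$ because the maximum may switch to another bidder after the deletion. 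Concretely, if $j$ is the only element of $H_i\cap A_{t'}\cap I'$ and some bidder $i'\neq i$ has elements of $A_{t'}\cap I'$ summing to $b>0$, then removing $j$ decreases $\phi(A_{t'}\cap I')$ by $b$, whereas your accounting charges $0$. Since in the bad case the gain is only $\sum_{j'\in S_j}h_{j'}$, which may be far smaller than $h_j$, the generic bound ``deleting $j$ costs at most $h_j$'' does not close the argument either, so the case genuinely needs a sharper estimate.

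The missing idea, which the paper supplies, is to bound the loss caused by deleting $j$ by the \emph{second-largest} per-bidder sum inside $A_{t'}\cap I'$, i.e., by $\sum_{j'\in H_{i'}\cap A_{t'}\cap I'}h_{j'}$ for the maximizing $i'\neq i$, and then to observe that $H_{i'}\cap A_{t'}\cap I'$ is itself a legal cover of $j$ contained in $I'$, so by the maximality of $S_j$ in line~\ref{line:cover_finding} this quantity is at most $\sum_{j'\in S_j}h_{j'}$. The total decrease of $\sum_{t\in M}\phi(A_t\cap I')$ is then at most $2\sum_{j'\in S_j}h_{j'}$, and $\nicefrac{1}{3}\cdot 2\sum_{j'\in S_j}h_{j'}\leq\sum_{j'\in S_j}h_{j'}$ matches the gain. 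You do state the cover-maximality of $S_j$ in the bad case, but you never apply it to the $\phi$-loss incurred by removing $j$ itself; without that link the proof does not go through.
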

\begin{proof}
Fix an arbitrary iteration. There are two cases to consider. First, assume $h_j \leq \sum_{j' \in S_j} h_{j'} \leq 2h_j$. In this case the increase in $R(\cP)$ during this iteration is:
\[
	\mu(S_j \cup \{j\}) \cdot v(S_j \cup \{j\})
	\geq
	\min\left\{h_j, \sum_{j' \in S_j} h_{j'}\right\}
	=
	h_j
	\enspace.
\]
On the other hand, one can observe that, when removing an element $j'$ from $S$, the value of $\phi(S)$ can decrease by at most $h_{j'}$. Hence, the decrease in $\sum_{t \in M} \phi(A_t \cap I')$ during this iteration can be upper bounded by:
\[
	h_j + \sum_{j' \in S_j} h_{j'}
	\leq
	3h_j
	\enspace.
\]

Consider now the case $\sum_{j' \in S_j} h_{j'} < h_j$. In this case the increase in $R(\cP)$ during the iteration is:
\[
	\mu(S_j \cup \{j\}) \cdot v(S_j \cup \{j\})
	\geq
	\min\left\{h_j, \sum_{j' \in S_j} h_{j'}\right\}
	=
	\sum_{j' \in S_j} h_{j'}
	\enspace.
\]
If $j$ does not belong to $A_t$ for any mediator $t$, then by the above argument we can bound the decrease in $\sum_{t \in M} \phi(A_t \cap I')$ by $\sum_{j' \in S_j} h_{j'}$. Hence, assume from now on that there exists a mediator $t'$ and a bidder $i$ such that $j \in A_{t'} \cap H_i$.
Let $i' \neq i$ be a bidder maximizing $\sum_{j' \in H_{i'} \cap A_{t'} \cap I'} h_{j'}$. Clearly, the removal of a single element from $I'$ can decrease $\phi(A_{t'} \cap I')$ by no more than $\sum_{j' \in H_{i'} \cap A_{t'} \cap I'} h_{j'}$. Hence, the decrease in $\sum_{t \in M} \phi(A_t \cap I')$ during the iteration of the algorithm can be upper bounded by:
\[
	\sum_{j' \in H_{i'} \cap A_{t'} \cap I'} h_{j'} + \sum_{j' \in S_j} h_{j'}
	\enspace.
\]
On the other hand, $H_{i'} \cap A_{t'} \cap I'$ is a possible cover for $j$, and thus, by the optimality of $S_j$:
\[
	\sum_{j' \in H_{i'} \cap A_{t'} \cap I'} h_{j'}
	\leq
	\sum_{j' \in S_j} h_{j'}
	\enspace.
	\qedhere
\]
\end{proof}

\begin{corollary} \label{cor:alg_value}
$R(\cP_A) \geq \nicefrac{1}{3} \cdot \sum_{t \in M} \phi(A_t)$, where $\cP_A$ is the partition produced by Algorithm~\ref{alg:local_experts}.
\end{corollary}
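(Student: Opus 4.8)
The plan is to obtain the corollary directly from Lemma~\ref{lem:greedy_step} by tracking the potential
$\Phi \triangleq R(\cP) + \nicefrac{1}{3} \cdot \sum_{t \in M} \phi(A_t \cap I')$
through the execution of Algorithm~\ref{alg:local_experts}. First I would evaluate $\Phi$ immediately before the loop begins. At that moment $I' = \hat{I}$, and since every $A_t \subseteq I_t \subseteq \hat{I}$ we have $A_t \cap I' = A_t$, so the second summand equals $\nicefrac{1}{3} \cdot \sum_{t \in M} \phi(A_t)$; the first summand is $R(\{I \setminus \hat{I}\}) = \mu(I \setminus \hat{I}) \cdot v(I \setminus \hat{I}) \geq 0$. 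Hence the initial value of $\Phi$ is at least $\nicefrac{1}{3} \cdot \sum_{t \in M} \phi(A_t)$.

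Next I would evaluate $\Phi$ at termination. The loop exits precisely when $I' = \varnothing$, so $A_t \cap I' = \varnothing$ for every mediator $t$, and since $\phi(\varnothing) = 0$ the second summand vanishes; the first summand is now $R(\cP_A)$, the revenue of the partition returned by the algorithm (here we use, as in the footnote to Lemma~\ref{lem:greedy_step}, that $R$ extends naturally to the partial partition maintained during the run). Thus the final value of $\Phi$ equals $R(\cP_A)$.

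Finally, Lemma~\ref{lem:greedy_step} asserts that no iteration of the loop decreases $\Phi$, so $\Phi$ is non-decreasing along the execution and its value at termination is at least its value before the loop. Chaining the two computations above gives
$R(\cP_A) \geq \nicefrac{1}{3} \cdot \sum_{t \in M} \phi(A_t)$,
which is exactly the claim. The only point that needs a moment of care is the endpoint bookkeeping---in particular, noticing that $A_t \cap \hat{I} = A_t$ so that the initial potential really is the full sum $\sum_{t \in M} \phi(A_t)$ and not some proper restriction of it---but this is immediate from $A_t \subseteq I_t$. No genuine obstacle remains, since the real work has already been done inside Lemma~\ref{lem:greedy_step}.
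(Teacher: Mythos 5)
Your proof is correct and follows exactly the paper's argument: evaluate the potential $R(\cP) + \nicefrac{1}{3} \cdot \sum_{t \in M} \phi(A_t \cap I')$ at initialization (where it is at least $\nicefrac{1}{3} \cdot \sum_{t \in M} \phi(A_t)$ since $A_t \subseteq I_t \subseteq \hat{I}$) and at termination (where it equals $R(\cP_A)$ since $I' = \varnothing$), then invoke the monotonicity from Lemma~\ref{lem:greedy_step}. The only difference is that you spell out the endpoint bookkeeping slightly more explicitly than the paper does, which is fine.
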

\begin{proof}
After the initialization step of Algorithm~\ref{alg:local_experts} we have:
\[
	R(\cP) + \nicefrac{1}{3} \cdot \sum_{t \in M} \phi(A_t \cap I')
	\geq
	\nicefrac{1}{3} \cdot \sum_{t \in M} \phi(A_t)
	\enspace.
\]
On the other hand, when the algorithm terminates:
\[
	R(\cP) + \nicefrac{1}{3} \cdot \sum_{t \in M} \phi(A_t \cap I')
	=
	R(\cP_A)
\]
because $I' = \varnothing$. The corollary now follows from Lemma~\ref{lem:greedy_step}.
\end{proof}

We are now ready to prove Theorem~\ref{thm:alg_local_experts}.

\begin{reptheorem}{thm:alg_local_experts}
If mediators are local experts, 
there exists an efficient $5$-approx\-imation algorithm for {\DSP}.
\end{reptheorem}
\begin{proof}
Consider an algorithm that outputs the best solution out of $\{I\}$, $\cP_S$ and $\cP_A$. The following inequality shows that at least one of these joint partitions has a revenue of $R(\cP^*)/5$:
\[
	R(\{I\}) + R(\cP_S) + 3R(\cP_A)
	\geq
	\sum_{j \in \hat{I}} s_j + \mu(I_0) \cdot v(I_0) + \sum_{t \in M} \phi(A_t)
	\geq
	R(\cP^*)
	\enspace,
\]
where the first inequality holds by Observations~\ref{obs:silent_value} and~\ref{obs:all_talk_value} and Corollary~\ref{cor:alg_value}; and the second inequality uses the upper bound on $R(\cP^*)$ proved by Lemma~\ref{lem:upper_bound}.
\end{proof}
	
} 


\inlong{\section{The Strategic Problem}

This section explores the {\DSP} problem from a strategic viewpoint, 
in which the auctioneer \emph{cannot} control the signals produced by each mediator, and is, therefore, trying to solicit information from the
mediators that would yield a maximal revenue in the auction.
In other words, the objective of the auctioneer is to design a mechanism  
$\cM$ whose equilibria (i.e., the signals $\cP_1' ,\cP_2' ,\ldots , \cP_m '$ which are now chosen strategically by the mediators) 
induce maximum revenue. Our first contribution is the introduction of the \emph{Shapley mechanism}, whose definition appears in Subsection~\ref{sec_shapley}. Subsection~\ref{sec_shapley} also proves some interesting properties of the Shapley mechanism (Theorems~\ref{thm_pure_eq} and~\ref{th:better_then_silence}). Subsection~\ref{sec_poa} studies the price of anarchy and price of stability of the {\DSP} game induced by the Shapley mechanism (Theorems~\ref{thm_poa_ub_combined} and~\ref{thm_unbounded_pos}). Finally, Subsection~\ref{ssc:inevitable} shows that the Shapley mechanism is the only possible mechanism if one insists on a few natural requirements.

\subsection{The Shapley Mechanism}\label{sec_shapley}

In this subsection we describe a mechanism $\cS$ which determines the payments to the mediators as a function of the reported signals. 
Our mechanism aims to incentivize mediators to report 
useful information, with the hope that global efficiency emerges despite selfish behavior of each mediator. In the remainder of the paper 
we study the mechanism $\cS$ and the game $\DSP_\cS$ it induces. For the sake of generality, we describe $\cS$ for a game generalizing {\DSP}.

Consider a game $\cG_m$ of $m$ players where each player $t$ has a finite set $A_t$ of possible strategies, one of which $\varnothing_t \in A_t$ is called the null strategy of $t$. The value of a strategy profile in the game $\cG_m$ is determined by a value function $v : A_1 \times A_2 \times \ldots \times A_m \rightarrow \mathbb{R}$. A mechanism $M = (\Pi_1, \Pi_2, \ldots, \Pi_m)$ for $\cG_m$ is a set of payments rules. In other words, if the players choose strategies $a_1 \in A_1, a_2 \in A_2, \ldots, a_m \in A_m$, then the payment to player $t$ under mechanism $M$ is $\Pi_t(v, a_1, a_2, \ldots, a_m)$. Notice that $\DSP$ fits the definition of $\cG_m$ when $A_t = \Omega(\cP_t)$ is the set of partitions that $t$ can report for every mediator $t$, and $\varnothing_t$ is the silence strategy $\{I\}$. The appropriate value function $v$ for $\DSP$ is the function $R(\times_{t = 1}^m \cP'_t)$, where $\cP'_t \in A_t$ is the strategy of mediator $t$. In other words, the value function $v$ of a $\DSP$ game is equal to the revenue of the auctioneer.

Given a strategy profile $a = (a_1,a_2,\dotsc, a_m)$, and subset $J\in [m]$ of players, we write $a_J$ to denote a strategy profiles where the players of $J$ play their strategy in $a$, and the other players play their null strategies. We abuse notation and denote by $\varnothing$ the strategy profile $a_{\varnothing}$ where all players play their null strategies. Additionally, we write $(a'_t,a_{-t})$ to denote a strategy profile where player $t$ plays $a'_t$ and the rest of the players follow the strategy profile $a$. 
The mechanism $\cS$ we propose distributes
the increase in the value of the game (compared to $v(\varnothing)$) among the players according to their Shapley value:
it pays each player his expected marginal contribution to the value according to a uniformly random ordering of the $m$ player. Formally, the payoff for player $t$ given a strategy profile $a$ is
\begin{equation}\label{def_shapley_val_1}
 \Pi_t(a) = \frac{1}{m!} \cdot \sum_{\sigma \in S_m} \left[v\left(a_{\{\sigma^{-1}(j) \mid 1 \leq j \leq \sigma(t)\}}\right) - v\left(a_{\{\sigma^{-1}(j) \mid 1 \leq j < \sigma(t)\}}\right)\right] \enspace,
\end{equation}
which can alternatively be written as
\begin{equation}\label{def_shapley_val_2}
\Pi_t(a)  =  \sum_{J\subseteq [m] \setminus \{t\}} \gamma_J \left(v(a_{J \cup \{t\}}) - v(a_{J})\right) \enspace,
\end{equation}
where $\gamma_J = \frac{|J|! (m-|J|-1)!}{m!}$ is the probability that the players of $J$ appear before player $t$
when the players are ordered according to a uniformly random permutation $\sigma \in_R S_m$. We use both definitions~\eqref{def_shapley_val_1} and~\eqref{def_shapley_val_2}
interchangeably, as each one is more convenient in some cases than the other.

Clearly, the mechanism $\cS$ is anonymous (symmetric). 
The main feature of the Shapley mechanism is that it is efficient. In other words, the sum of the payoffs is exactly equal to the total increase in value
(in the case of {\DSP}, the surplus revenue of the auctioneer compared to the initial state):
\begin{proposition}[Efficiency property]\label{prop_revenue_payoffs}
For every strategy profile $a = (a_1, a_2, \ldots, a_m)$, \[ v(a) - v(\varnothing) = 
\sum_{t = 1}^m \Pi_t(a) \enspace.\]
\end{proposition}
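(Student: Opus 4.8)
The plan is to use the first formula~\eqref{def_shapley_val_1} for $\Pi_t$ and exploit the telescoping structure of marginal contributions along a fixed ordering of the players. Concretely, I would swap the order of summation to write
\[
\sum_{t = 1}^m \Pi_t(a) = \frac{1}{m!} \sum_{\sigma \in S_m} \sum_{t = 1}^m \left[v\left(a_{\{\sigma^{-1}(j) \mid 1 \leq j \leq \sigma(t)\}}\right) - v\left(a_{\{\sigma^{-1}(j) \mid 1 \leq j < \sigma(t)\}}\right)\right],
\]
and then argue that for every fixed permutation $\sigma$ the inner sum over $t$ already equals $v(a) - v(\varnothing)$, so that averaging over the $m!$ permutations gives the claim.

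For the key step, fix $\sigma \in S_m$ and, for $0 \leq i \leq m$, let $P_i^\sigma = \{\sigma^{-1}(j) \mid 1 \leq j \leq i\}$ denote the set of the first $i$ players in the ordering induced by $\sigma$; in particular $P_0^\sigma = \varnothing$ and $P_m^\sigma = [m]$. As $t$ ranges over all $m$ players, $\sigma(t)$ ranges bijectively over $\{1, \dots, m\}$, so substituting $i = \sigma(t)$ and using the identity $\{\sigma^{-1}(j) \mid 1 \leq j < i\} = P_{i-1}^\sigma$, the inner sum becomes $\sum_{i=1}^m \left[v(a_{P_i^\sigma}) - v(a_{P_{i-1}^\sigma})\right]$. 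This telescopes to $v(a_{P_m^\sigma}) - v(a_{P_0^\sigma}) = v(a_{[m]}) - v(a_\varnothing) = v(a) - v(\varnothing)$, where the last equality uses that $a_{[m]} = a$ and that $a_\varnothing$ is by definition the all-null profile $\varnothing$.

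Substituting back, each of the $m!$ permutations contributes exactly $v(a) - v(\varnothing)$, hence $\sum_{t=1}^m \Pi_t(a) = \tfrac{1}{m!}\cdot m! \cdot (v(a) - v(\varnothing)) = v(a) - v(\varnothing)$. There is no real obstacle here: the only points requiring care are the index substitution $i = \sigma(t)$ and the elementary identity $\{\sigma^{-1}(j) \mid 1 \leq j < i\} = \{\sigma^{-1}(j) \mid 1 \leq j \leq i-1\}$ that exposes the telescoping. (One could alternatively run the argument from formula~\eqref{def_shapley_val_2}, grouping permutations by the predecessor set of $t$ and using $\sum_{J \subseteq [m]\setminus\{t\}} \gamma_J = 1$, but the permutation-by-permutation telescoping is the cleanest route.)
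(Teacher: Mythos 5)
Your proof is correct and follows essentially the same route as the paper's: swap the order of summation in the permutation form~\eqref{def_shapley_val_1} and telescope the marginal contributions along each fixed ordering $\sigma$. Your write-up just makes the index substitution $i = \sigma(t)$ and the telescoping step more explicit than the paper does.
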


\begin{proof}
Recall that the payoff of mediator $t$ is:
\[
  \frac{1}{m!} \cdot \sum_{\sigma \in S_m} \left[v\left(a_{\{\sigma^{-1}(j) \mid 1 \leq j \leq \sigma(t)\}}\right) - v\left(a_{\{\sigma^{-1}(j) \mid 1 \leq j < \sigma(t)\}}\right)\right]
	\enspace.
\]

Summing over all mediators, we get:
\begin{align*}
    \sum_{t = 1}^m \Pi_t(\cP'_t,\cP'_{-t}) =&
    \sum_{t = 1}^m \left\{\frac{1}{m!} \cdot \sum_{\sigma \in S_m} \left[v\left(a_{\{\sigma^{-1}(j) \mid 1 \leq j \leq \sigma(t)\}}\right) - v\left(a_{\{\sigma^{-1}(j) \mid 1 \leq j < \sigma(t)\}}\right)\right]\right\}\\
    ={} &
		\frac{1}{m!} \cdot \sum_{\sigma \in S_m} \sum_{t = 1}^m \left[v\left(a_{\{\sigma^{-1}(j) \mid 1 \leq j \leq \sigma(t)\}}\right) - v\left(a_{\{\sigma^{-1}(j) \mid 1 \leq j < \sigma(t)\}}\right)\right]\\
    ={} &
    \frac{1}{m!} \cdot \sum_{\sigma \in S_m} \left[v\left(a_{\{\sigma^{-1}(j) \mid 1 \leq j \leq m\}}\right) - v\left(a_{\varnothing}\right)\right]
    =
    v(a) - v(\varnothing)
    \enspace.
		\qedhere
\end{align*}
\end{proof}

Proposition \ref{prop_revenue_payoffs} implies the following theorem. Notice that Theorem~\ref{th:better_then_silence} is in fact a restriction of this theorem to the game $\DSP_\cS$.
\begin{theorem}
For every Nash equilibrium $a$, $v(a) \geq v(\varnothing)$.
\end{theorem}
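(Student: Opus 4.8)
The proof combines two ingredients: the efficiency property (Proposition~\ref{prop_revenue_payoffs}), which tells us that $v(a) - v(\varnothing) = \sum_{t=1}^m \Pi_t(a)$, and the equilibrium condition, which we will use to argue that every single term $\Pi_t(a)$ is nonnegative. Adding the $m$ inequalities $\Pi_t(a) \geq 0$ and invoking efficiency then yields $v(a) - v(\varnothing) = \sum_t \Pi_t(a) \geq 0$, which is exactly the claim.

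\textbf{The key observation.} A player who plays his null strategy receives payoff exactly $0$ under $\cS$. Indeed, fix any strategy profile $b$ with $b_t = \varnothing_t$. For every $J \subseteq [m]\setminus\{t\}$ we have $b_{J\cup\{t\}} = b_J$, since the only difference between these two profiles is whether player $t$ plays $b_t$ or $\varnothing_t$, and these coincide. Plugging into the closed form~\eqref{def_shapley_val_2},
\[
\Pi_t(b) = \sum_{J \subseteq [m]\setminus\{t\}} \gamma_J\bigl(v(b_{J\cup\{t\}}) - v(b_J)\bigr) = 0 \enspace.
\]
In particular this applies to the profile $b = (\varnothing_t, a_{-t})$ obtained from $a$ by having player $t$ switch to his null strategy, giving $\Pi_t(\varnothing_t, a_{-t}) = 0$.

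\textbf{Putting it together.} Since $a$ is a Nash equilibrium, no player can strictly improve by a unilateral deviation; in particular, deviating to the null strategy is not profitable for player $t$, so $\Pi_t(a) \geq \Pi_t(\varnothing_t, a_{-t}) = 0$. As $t$ was arbitrary, $\Pi_t(a) \geq 0$ for all $t \in [m]$, and therefore by Proposition~\ref{prop_revenue_payoffs},
\[
v(a) - v(\varnothing) = \sum_{t=1}^m \Pi_t(a) \geq 0 \enspace,
\]
which gives $v(a) \geq v(\varnothing)$.

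\textbf{Where the work is.} There is no serious obstacle here: the entire argument hinges on the elementary fact that the null strategy guarantees a payoff of $0$, which follows immediately from the definition once one notes the collapse $b_{J\cup\{t\}} = b_J$. The only thing to be careful about is the bookkeeping of the notation $a_{J}$ versus $(\varnothing_t, a_{-t})$, i.e., making sure the profile in which player $t$ has deviated is the one to which the ``null strategy pays zero'' computation is applied. Everything else is a one-line consequence of efficiency and the equilibrium inequality.
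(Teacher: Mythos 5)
Your proof is correct and follows exactly the paper's argument: the null strategy yields zero payoff, so every player's payoff in an equilibrium is nonnegative, and summing via the efficiency property (Proposition~\ref{prop_revenue_payoffs}) gives $v(a) \geq v(\varnothing)$. Your explicit verification that $\Pi_t(\varnothing_t, a_{-t}) = 0$ via the collapse $b_{J\cup\{t\}} = b_J$ is a welcome spelling-out of a step the paper leaves implicit.
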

\begin{proof}
A player always has the option of playing his null strategy, which results in a zero payoff for him. Thus, the payoff of a player in a Nash equilibrium 
can never be negative. Hence, by Proposition~\ref{prop_revenue_payoffs}: $v(a) \geq v(\varnothing) + \sum_{i=1}^m \Pi_t(a) \geq v(\varnothing)$. 
\end{proof}

Next, let us prove Theorem~\ref{thm_pure_eq}. For convenience, we restate it below.

\begin{reptheorem}{thm_pure_eq}
Let $\G_m$ be a non-cooperative $m$-player game in which 
the payoff of each player is set according to $\cS$. Then
$\G_m$ admits a pure Nash equilibrium. Moreover, best response dynamics are guaranteed to converge to such an equilibrium.
\end{reptheorem}
\begin{proof}
We prove the theorem by showing that $\G_m$ is an exact potential game, which in turn implies all the conclusions of the theorem. Recall that an exact potential game is a game for which there exists a potential function $\Phi\colon A_1 \times A_2 \times \dots \times A_t \to \mathbb{R}$ such that every strategy profile $a$ and possible deviation $a'_t \in A_t$ of a player $t$ obey:
\begin{equation} \label{eq:potential}
	\Pi_t(a'_t, a_{-t}) - \Pi_t(a)
	=
	\Phi(a'_t, a_{-t}) - \Phi(a)
	\enspace.
\end{equation}

In our case the potential function is: $\Phi(a) = \sum_{J\subseteq [m]} \beta_J \cdot v(a_J)$, where $\beta_J = \frac{(|J| - 1)! (m-|J|)!}{m!}$. Let us prove that this function obeys~\eqref{eq:potential}. It is useful to denote by $a'$ the strategy profile $(a'_t, a_{-t})$. By definition:
\begin{equation} \label{eq_i_in_s}
		\Pi_t(a') - \Pi_t(a)
		=
    \sum_{J\subseteq [m]\setminus\{i\}} \gamma_J \left[  v(a_{J \cup \{i\}}) - v(a_J) \right]
		-
		\sum_{J\subseteq [m]\setminus\{i\}} \gamma_J \left[  v(a'_{J \cup \{i\}}) - v(a'_J) \right]
    \enspace.
\end{equation}
For $J \subseteq [m]\setminus\{i\}$, we have $a_J = a'_J$. Plugging this observation into \eqref{eq_i_in_s}, and rearranging, we get:
\begin{equation} \label{eq_geq_partial}
    \Pi_t(a') - \Pi_t(a)
		=
    \sum_{J\subseteq [m]\setminus\{i\}} \gamma_J \left[  v(a_{J \cup \{i\}}) - v(a'_{J \cup \{i\}}) \right]
    \enspace.
\end{equation}
For every $J$ containing $i$ we get: $\alpha_{J \setminus \{i\}} = \beta_J$. Using this observation and the previous observation that $a_J = a'_J$ for $J \subseteq [m]\setminus\{i\}$, \eqref{eq_geq_partial} can be replaced by:
\begin{equation*}
		\Pi_t(a') - \Pi_t(a)
		=
    \sum_{J\subseteq [m]} \beta_J (v(a'_J) - v(a_J))
		=
		\Phi(a') - \Phi(a)
    \enspace.
		\qedhere
\end{equation*}
\end{proof}

\noindent Before concluding this section, a few remarks regarding the use of $\cS$ to {\DSP} are in order:
\begin{enumerate}
\item The reader may wonder why the auctioneer cannot impose on the mediators any desired outcome $\times_{t \in M} \cP'_t$ by offering mediator $t$ a
negligible payment if he signals $\cP'_t$, and no payment otherwise. However, implementing such a mechanism requires the auctioneer to know the 
information sets $\cP_t$ of each mediator \emph{in advance}. In contrast, our mechanism requires access only to the outputs of the mediators.
\item Proposition~\ref{prop_revenue_payoffs} implies that the auctioneer distributes the entire surplus among the mediators, which seems to defeat the 
purpose of the mechanism. However, in the target application she can scale the revenue by a factor $\alpha \in (0,1]$ and only distribute the corresponding 
fraction of the surplus. As all of our results are invariant under scaling, this trick can be applied in a black box fashion. Thus, we assume,
without loss of generality, $\alpha=1$.
\item We assume mediators never report a signal which is inconsistent with the true identity of the sold element $j_R$. The main justification for this assumption is that the mediators' signals must be consistent with one another
(as they refer to a single element $j_R$). Thus, given that sufficiently many mediators are honest, 
``cheaters'' can be easily detected. 
\end{enumerate}

}


\inlong{
\subsection{The Price of Anarchy and Price of Stability of the Shapley Mechanism}\label{sec_poa}

In this section we analyze the PoS and PoA of the $\DSP_\cS$ game.
First, we note that the proof of Theorem~\ref{thm:simple_approximations} in Section~\ref{sec_gen_algorithms} shows that $R(\{I\}) \geq \max\{1, \min\{n, k- 1\}\}$. Together with Theorem~\ref{th:better_then_silence}, we get:
\begin{reptheorem}{thm_poa_ub_combined} 
The price of anarchy of $\DSP_\cS(n,k,m)$ is no more than $\max\{1, \min \{k-1, n\}\}$.
\end{reptheorem}
\noindent \textbf{Remark:} The above statement of Theorem~\ref{thm_unbounded_pos} uses a somewhat different notation than its original statement in Section~\ref{sc:our_results}, but both statements are equivalent. The same is true for the statement of Theorem~\ref{thm_unbounded_pos} below.

Naturally, the upper bound given by Theorem~\ref{thm_poa_ub_combined} applies also to the price of stability of $\DSP_\cS$. The rest of this section is devoted to proving Theorem~\ref{thm_unbounded_pos}, which shows that Theorem~\ref{thm_poa_ub_combined} is asymptotically tight.


\begin{reptheorem}{thm_unbounded_pos}
For every $n \geq 1$, there is a $\DSP_\cS(3n + 1, n + 2, 2)$ game for which the price of stability is at least $n$. Moreover, all the mediators in this game are local experts.
\end{reptheorem}

We begin the proof of Theorem~\ref{thm_unbounded_pos} by describing the $\DSP_\cS(3n + 1, n + 2, 2)$ game whose price of stability we bound. For ease of notation, let us denote this game by $\DSP_n$. 
\begin{description}
	\item[items:] The $3n + 1$ items of $\DSP_n$ all have equal probabilities. It is convenient to denoted them by $\{a_\ell\}_{\ell = 1}^n$, $\{b_\ell\}_{\ell = 1}^n$, $\{c_\ell\}_{\ell = 1}^n$ and $d$.
	\item[bidders:] The $n + 2$ bidders of $\DSP_n$ can be partitioned into $3$ types. One bidder, denoted by $i_G$, has a bid of $\eps$ for the items of $\{b_\ell\}_{\ell = 1}^n$ and a bid of $1$ for all other items, where $\eps \in (0, 1)$ is a value that will be defined later. One bidder, denoted by $i_O$ has a bid of $1$ for item $d$ and a bid of $0$ for all other items. Finally, the other $n$ bidders are denoted by $\{i_\ell\}_{\ell = 1}^n$. Each bidder $i_\ell$ has a bid of $1$ for item $b_\ell$ and a bid of $0$ for all other items.
	\item[mediators:] The two mediators of $\DSP_n$ are denoted by $t_1$ and $t_2$. Both mediators are local experts whose partitions are defined by the sets $\cI_1 = \{a_\ell, b_\ell\}_{\ell = 1}^n$ and $\cI_2 = \{b_\ell, c_\ell\}_{\ell = 1}^n$, respectively.
\end{description}

A graphical sketch of $\DSP_n$ is given by Figure~\ref{fig:dsp_n}. Intuitively, getting a high revenue in $\DSP_n$ requires pairing $b$ items with $a$ or $c$ items. Unfortunately, one mediator can pair the $b$ items with $a$ items, and the other mediator can pair them with $c$ items, thus, creating ``tension'' between the mediators. The main idea of the proof is to show that both mediators are incentivized to report partitions pairing the $b$ items, which results in a joint partition isolating all the $b$ items.

\newcommand\overmat[2]{\makebox[0pt][l]{$\smash{\color{white}\hspace{-0.4cm}\overbrace{\hspace{0.4cm}\phantom{\begin{matrix}#2\end{matrix}}\hspace{0.4cm}}^{\text{\textcolor{black}{#1}}}}$}#2}
\newcommand\bovermat[2]{\makebox[0pt][l]{$\smash{\overbrace{\phantom{\begin{matrix}#2\end{matrix}}}^{\text{#1}}}$}#2}
\setcounter{MaxMatrixCols}{20}
\newcommand\myBox{\makebox[0pt][l]{\smash{\hspace{-0.12cm}\raisebox{-0.08cm}{\text{\scalebox{1.5}{$\Box$}}}}}\phantom{1}}
\newcommand\pBox{\phantom{\myBox}}
\newcommand\myDots{\text{$\cdots$}}
\newcommand\pDots{\phantom{\myDots}}
\begin{figure}
\[
V \;\; = \;\;
\left(
\begin{matrix}
	\bovermat{$a$ items}{1 & 1 & \ldots & 1} & \  &			\bovermat{$b$ items}{\eps & \eps & \ldots & \eps} & \  &			\bovermat{$c$ items}{1 & 1 & \ldots & 1} &	\  &			\overmat{$d$}{1} \\
	0 & 0 & \cdots & 0 &&																		0 & 0 & \cdots & 0 &&																						0 & 0 & \cdots & 0 &&																		1 \\
	0 & 0 & \cdots & 0 &&																		1 & 0 & \cdots & 0 &&																						0 & 0 & \cdots & 0 &&																		0 \\
	0 & 0 & \cdots & 0 &&																		0 & 1 & \cdots & 0 &&																						0 & 0 & \cdots & 0 &&																		0 \\
	\vdots & \vdots && \vdots &&														\vdots & \vdots && \vdots &&																		\vdots & \vdots && \vdots &&														0 \\
	0 & 0 & \cdots & 0 &&																		0 & 0 & \cdots & 1 &&																						0 & 0 & \cdots & 0 &&																		0 \\
\end{matrix}
\right)
\begin{array}{cc}
	\leftarrow & i_G \\
	\leftarrow & i_O \\
	\leftarrow & i_1 \\
	\leftarrow & i_2 \\
	& \vdots \\
	\leftarrow & i_n \\
\end{array}
\]
\[
	\cP_1 \;\; = \;\;
	\begin{matrix}
		\myBox & \myBox & \myDots & \myBox & \ & \myBox & \myBox & \myDots & \myBox & \ & \pBox & \pBox & \pDots & \pBox & \ & \pBox \\
	\end{matrix}
	\mspace{78mu}
\]
\[
	\cP_2 \;\; = \;\;
	\begin{matrix}
		\pBox & \pBox & \pDots & \pBox & \ & \myBox & \myBox & \myDots & \myBox & \ & \myBox & \myBox & \myDots & \myBox & \ & \pBox \\
	\end{matrix}
	\mspace{78mu}
\]
\caption{A graphical representation of $\DSP_n$. $\cP_1$ and $\cP_2$ are the partitions of mediators $t_1$ and $t_2$, respectively.} \label{fig:dsp_n}
\end{figure}

The following observation simplifies many of our proofs.

\begin{observation} \label{obs:values}
The contribution $\mu(S) \cdot v(S)$ of a part $S$ to the revenue $R(\cP)$ of a partition $\cP \ni S$ is:
\begin{compactitem}
	\item $(3n + 1)^{-1}$ if $S$ contains $d$.
	\item $(3n + 1)^{-1}$ if $S$ contains an item of $\{b_\ell\}_{\ell = 1}^n$ and $|S| \geq 2$.
	\item $\eps / (3n + 1)$ if $S$ contains only a single item, and this item belongs to $\{b_\ell\}_{\ell = 1}^n$.
	\item $0$ otherwise.
\end{compactitem}
\end{observation}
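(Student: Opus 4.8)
The plan is to reduce the statement to a direct evaluation of second-highest bids and then split into the obvious cases. Since all $3n+1$ items are equiprobable, for every part $S$ one has
\[
	\mu(S) \cdot v(S)
	=
	\smax_{i \in B} \sum_{j \in S} \mu(j) \cdot v_{ij}
	=
	\frac{1}{3n+1} \cdot \smax_{i \in B} \sigma_i(S)
	\enspace,
	\qquad \text{where } \sigma_i(S) := \sum_{j \in S} v_{ij} \enspace,
\]
so it suffices to compute $\smax_{i \in B} \sigma_i(S)$ for each type of $S$. First I would record, for a part $S$ containing $x$ items of $\{a_\ell\}$, $y$ items of $\{b_\ell\}$, $z$ items of $\{c_\ell\}$, and $w \in \{0,1\}$ copies of $d$, that $\sigma_{i_G}(S) = x + z + w + \eps y$, that $\sigma_{i_O}(S) = w$, and that $\sigma_{i_\ell}(S)$ equals $1$ when $b_\ell \in S$ and $0$ otherwise.

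The key structural observation to extract next is that $\sigma_{i_O}(S) \le 1$ and $\sigma_{i_\ell}(S) \le 1$ for every $\ell$, so $i_G$ is the unique bidder whose bid can exceed $1$; hence $\smax_{i \in B} \sigma_i(S) \le 1$ always, and it equals $1$ precisely when at least two bidders bid at least $1$. Given this, the case analysis is short. If $d \in S$, then $\sigma_{i_O}(S) = 1$ and $\sigma_{i_G}(S) \ge 1$, so two bids reach $1$ and the contribution is $(3n+1)^{-1}$. If $d \notin S$, $S$ contains some $b_\ell$, and $|S| \ge 2$, then either $y \ge 2$ (two bidders $i_\ell$ bid $1$) or $y = 1$ together with $x + z \ge 1$ (so $\sigma_{i_G}(S) \ge 1 + \eps$ while one $i_\ell$ bids $1$); again two bids reach $1$ and the contribution is $(3n+1)^{-1}$. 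If $d \notin S$ and $S = \{b_\ell\}$, the only nonzero bids are $\sigma_{i_\ell}(S) = 1$ and $\sigma_{i_G}(S) = \eps < 1$, so the second-highest is $\eps$ and the contribution is $\eps/(3n+1)$. In the remaining case $d \notin S$ and either $S$ has no $b$-item or $S$ is a singleton $a$- or $c$-item, and then $\sigma_{i_G}(S)$ is the only possibly-positive bid, making the contribution $0$.

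I do not expect any genuine obstacle in this argument — it is pure bookkeeping over the four item classes — and the only place a small amount of care is needed is the preliminary observation that $i_G$ is the sole bidder capable of bidding above $1$, since this is exactly what fixes the second-highest bid at $1$ in the two regimes where the contribution equals $(3n+1)^{-1}$.
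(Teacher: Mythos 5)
Your proof is correct and follows essentially the same route as the paper's: both arguments rest on the observation that $i_G$ is the only bidder valuing more than one item (so the second-highest bid on $S$ is at most $1/|S|$, equivalently your $\smax_i \sigma_i(S) \le 1$), and then identify in each case which second bidder, if any, matches that bound. The normalization to $\sigma_i(S)$ and the $(x,y,z,w)$ bookkeeping are just a slightly more explicit presentation of the same case analysis.
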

\begin{proof}
Notice that the only bidder that values more than one item is $i_G$, and thus, $i_G$ is the single bidder that can have a bid larger than $1/|S|$ for $S$. Hence, $v(S)$ is upper bounded by $1/|S|$, and one can bound the contribution of $S$ by:
\[
	\mu(S) \cdot v(S)
	\leq
	\frac{|S|}{3n + 1} \cdot \frac{1}{|S|}
	=
	(3n + 1)^{-1}
	\enspace.
\]
If the part $S$ contains the item $d$, then both bidders $i_G$ and $i_O$ have bids of at least $1/|S|$ for it. Hence, the above upper bound on the contribution of $S$ is in fact tight. Similarly when $S$ contains an item $b_\ell$ and some other item $j$, it has a contribution of $(3n + 1)^{-1}$ since two bidders have a bid of at least $1/|S|$ for it, bidder $i_{\ell}$ and a second bidder that depends on $j$:
\begin{compactitem}
	\item If $j = b_{\ell'}$ for some $\ell' \neq \ell$, bidder $i_{\ell'}$.
	\item If $j = d$, $j = a_{\ell'}$ or $j = c_{\ell'}$, bidder $i_G$.
\end{compactitem}

If the part $S$ contains only a single item $b_\ell$, then it gets none zero bids only from two bidders: a bid of $1$ from $i_\ell$ and a bid of $\eps$ from bidder $i_G$. Hence, $\mu(S) \cdot v(S) = (3n + 1)^{-1} \cdot \eps$. Finally, if $S$ does not fall into any of the cases considered above, then it must contain only items of $\{a_\ell, c_\ell\}_{\ell = 1}^n$. Such a part receives a non-zero bid only from bidder $i_G$, and thus, $v(S) = 0$.
\end{proof}

The next lemma gives a lower bound on the optimal revenue of $\DSP_n$.

\begin{lemma} \label{lem:opt_revenue}
The optimal revenue of $\DSP_n$ is at least $(n + 1)/(3n + 1)$.
\end{lemma}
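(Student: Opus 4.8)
The plan is to exhibit a single explicit strategy profile whose revenue is at least $(n+1)/(3n+1)$, which suffices since the optimal revenue dominates the revenue of any particular profile. The natural candidate is to have mediator $t_1$ report the partition that pairs each $b_\ell$ with its companion $a_\ell$ (i.e.\ mediator $t_1$ reports exactly $\cP_1$), while mediator $t_2$ stays silent (reports $\{I\}$). I would then compute the joint partition $\cP_1 \cappart \{I\} = \cP_1$ and evaluate $R(\cP_1)$ using Observation~\ref{obs:values}.

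Concretely, $\cP_1$ consists of the $n$ pairs $\{a_\ell, b_\ell\}_{\ell=1}^n$ together with the single leftover part $\{c_1,\dots,c_n,d\}$. By Observation~\ref{obs:values}, each of the $n$ pairs $\{a_\ell,b_\ell\}$ contributes $(3n+1)^{-1}$ to the revenue, since such a part contains an item of $\{b_\ell\}_{\ell=1}^n$ and has size $2$. The remaining part $\{c_1,\dots,c_n,d\}$ contains the item $d$, and hence contributes $(3n+1)^{-1}$ as well (by the first bullet of Observation~\ref{obs:values}). Summing, $R(\cP_1) = (n+1)\cdot(3n+1)^{-1} = (n+1)/(3n+1)$. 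Since this profile is feasible ($\cP_1$ is exactly the information partition of the local expert $t_1$, and the empty partition $\{I\}$ is always reportable by $t_2$), the optimal revenue of $\DSP_n$ is at least this value, proving the lemma.

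There is essentially no obstacle here; the only point requiring a moment of care is confirming that the part $\{c_1,\dots,c_n,d\}$ is correctly classified by Observation~\ref{obs:values} — it falls under the ``contains $d$'' case, so its contribution is $(3n+1)^{-1}$ regardless of the $c$ items (which on their own would contribute $0$). A symmetric alternative, in which $t_2$ reports $\cP_2$ and $t_1$ is silent, gives the same bound and could be mentioned as well, but one profile is enough.
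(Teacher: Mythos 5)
Your proof is correct and takes essentially the same route as the paper's: the paper likewise exhibits the profile $O_1 = \{\{a_\ell,b_\ell\}\}_{\ell=1}^n \cup \{\{c_\ell\}_{\ell=1}^n \cup \{d\}\}$ for $t_1$ with $t_2$ silent and evaluates it part by part via Observation~\ref{obs:values}. One small notational caution: the partition you describe is a \emph{coarsening} of $t_1$'s information partition $\cP_1$ (which, since $t_1$ is a local expert, consists of singletons on $\cI_1$ plus $\{c_1,\dots,c_n,d\}$), so the parenthetical claim that ``$t_1$ reports exactly $\cP_1$'' is a mislabel --- though your actual computation uses the correct paired partition, so the argument stands.
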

\begin{proof}
Consider a scenario in which $t_1$ pairs the $b$ items with $a$ items, and $t_2$ is silent. Formally, the two mediators use the following strategies $O_1 = \{\{a_\ell, b_\ell\}\}_{\ell = 1}^n \cup \{\{c_\ell\}_{\ell = 1}^n \cup \{d\}\}$ and $O_2 = \{I\}$, respectively. Observe that these are indeed feasible strategies for $t_1$ and $t_2$, respectively. By Observation~\ref{obs:values}, the revenue of $\DSP_n$ given these strategies is:
\begin{align*}
	R(O_1 \cappart O_2)
	={} &
	R(O_1)
	=
	\sum_{\ell = 1}^n \mu(\{a_\ell, b_\ell\}) \cdot v(\{a_\ell, b_\ell\}) + \mu(\{c_\ell\}_{\ell = 1}^n \cup \{d\}) \cdot v(\{c_\ell\}_{\ell = 1}^n \cup \{d\})\\
	={} &
	\sum_{\ell = 1}^n \frac{1}{3n + 1} + \frac{1}{3n + 1}
	=
	\frac{n + 1}{3n + 1}
	\enspace.
	\qedhere
\end{align*}
\end{proof}

Our next objective is to get an upper bound on the revenue of any Nash equilibrium of $\DSP_n$. We say that an item $a_\ell$ is \emph{redundant} in a strategy $\cP'_1$ of $t_1$ if the part $S \in \cP'_1$ containing $a_\ell$ obeys one of the following:
\begin{compactitem}
	\item $S$ contains the item $d$ or an item $a_{\ell'}$ for some $\ell' \neq \ell$..
	\item $S$ contains no items of $\{b_\ell\}_{\ell = 1}^n$.
\end{compactitem}
The next lemma shows that a redundant item is indeed redundant in the sense that removing it does not change the contribution to the revenue of parts containing it.

\begin{lemma} \label{lem:redundent_no_loss}
If $a_\ell$ is \emph{redundant} in a strategy $\cP'_1$ of $t_1$ and $S \in \cP'_1$ is the part containing $a_\ell$, then $\mu(S) \cdot v(S) = \mu(S \setminus \{a_\ell\}) \cdot v(S \setminus \{a_\ell\})$. Moreover, for every possible strategy $\cP'_2$ of $t_2$, if $S' \in \cP'_1 \cappart \cP'_2$ is the part of $\cP'_1 \cappart \cP'_2$ containing $a_\ell$, then $\mu(S') \cdot v(S') = \mu(S' \setminus \{a_\ell\}) \cdot v(S' \setminus \{a_\ell\})$
\end{lemma}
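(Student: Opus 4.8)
The plan is to analyze the local structure of the part containing $a_\ell$ using Observation~\ref{obs:values}, which fully characterizes the contribution of a part to the revenue. Observe that $a_\ell$ receives a non-zero bid only from $i_G$, and $i_G$ values many items, so $a_\ell$ can never be the ``pivotal'' second-highest bidder in any part unless that part is very small in a specific way — this is exactly what the redundancy hypothesis rules out.

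First I would split into the two cases of the redundancy definition. In the first case, the part $S$ contains $d$ or another item $a_{\ell'}$. By Observation~\ref{obs:values}, if $S$ contains $d$ then its contribution is $(3n+1)^{-1}$, and $S \setminus \{a_\ell\}$ still contains $d$ so it also contributes $(3n+1)^{-1}$; similarly if $S$ contains $a_{\ell'}$ together with some $b$-item, the contribution is $(3n+1)^{-1}$ before and after removing $a_\ell$ (removing one $a$-item leaves at least one non-$b$ item besides the $b$-item, so $|S\setminus\{a_\ell\}|\geq 2$ and it still contains a $b$-item). If $S$ contains $a_{\ell'}$ but no $b$-item, then $S$ consists only of $a,c$-items, so its contribution is $0$ both before and after removal. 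In the second case, $S$ contains no $b$-item at all; then $S$ consists only of $a,c$-items (and possibly $d$), and Observation~\ref{obs:values} gives contribution $(3n+1)^{-1}$ if $d\in S$ and $0$ otherwise — in either case removing $a_\ell$ (which is neither $d$ nor a $b$-item) does not change which bullet of the observation applies. This establishes the first claim, $\mu(S)\cdot v(S) = \mu(S\setminus\{a_\ell\})\cdot v(S\setminus\{a_\ell\})$.

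For the ``moreover'' part, the key observation is that the part $S' \in \cP'_1 \cappart \cP'_2$ containing $a_\ell$ satisfies $S' \subseteq S$, since $\cP'_1 \cappart \cP'_2$ refines $\cP'_1$. I would argue that $a_\ell$ is still ``redundant'' relative to $S'$ in the sense needed: since $a_\ell \in I_1 = \{a_\ell,b_\ell\}_{\ell=1}^n$ but $a_\ell \notin I_2 = \{b_\ell,c_\ell\}_{\ell=1}^n$, the element $a_\ell$ lies in the ``catch-all'' part $I\setminus I_2$ of every strategy $\cP'_2$ of $t_2$; more importantly, what matters is that the inherited part $S'$ still falls under the redundancy conditions. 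If $S$ satisfied the second bullet ($S$ has no $b$-item), then $S'\subseteq S$ also has no $b$-item, so the same case-analysis applies to $S'$. If $S$ satisfied the first bullet via containing $a_{\ell'}$ or $d$, one must check whether that witness survives in $S'$; if it does not, then $S'$ either has no $b$-item (handled) or contains a $b$-item but no $d$ and no other $a$-item, and then the characterization still gives contribution $(3n+1)^{-1}$ provided $|S'|\geq 2$, which holds since $S'$ contains both $a_\ell$ and the $b$-item — and $|S'\setminus\{a_\ell\}|\geq 1$ with that $b$-item still present, keeping the contribution $(3n+1)^{-1}$.

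The main obstacle I anticipate is the ``moreover'' clause: one must be careful that passing to the finer joint partition does not create a part where $a_\ell$ becomes genuinely pivotal. The cleanest way to handle this uniformly is to prove a small helper fact first: for \emph{any} set $S' \ni a_\ell$ with $a_\ell \notin H_i$ for the relevant bidder, $\mu(S')\cdot v(S') = \mu(S'\setminus\{a_\ell\})\cdot v(S'\setminus\{a_\ell\})$ whenever $S'$ either contains $d$, contains two or more $b$-items, contains a $b$-item and another non-$b$ item, or contains no $b$-item — i.e., in every case except ``$S'$ is exactly one $b$-item plus some $a,c$-items with $a_\ell$ among them,'' and then show the redundancy hypothesis on $S$ (together with $S'\subseteq S$) excludes that exceptional case. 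I expect this reduces to a short enumeration using Observation~\ref{obs:values}, with no real computation.
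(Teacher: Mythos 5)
Your overall strategy (case analysis on the redundancy witness, evaluated through Observation~\ref{obs:values}) is the same as the paper's, and your treatment of the first claim is fine. However, the ``moreover'' clause --- which you correctly identify as the crux --- is where your argument breaks. In the branch where you suppose the witness ($d$ or $a_{\ell'}$) does \emph{not} survive into $S'$, you claim that if $S'$ contains a $b$-item then the contribution stays at $(3n+1)^{-1}$ after removing $a_\ell$ because ``$|S'\setminus\{a_\ell\}|\geq 1$ with that $b$-item still present.'' This is false when $S' = \{a_\ell, b_{\ell'}\}$: then $S'\setminus\{a_\ell\}$ is a singleton $b$-item, whose contribution by Observation~\ref{obs:values} is $\eps/(3n+1)$, not $(3n+1)^{-1}$. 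Your proposed helper fact has the same flaw --- the clause ``contains a $b$-item and another non-$b$ item'' admits exactly this bad set, since the other non-$b$ item may be $a_\ell$ itself --- and the ``exceptional case'' you propose to exclude is not the right one (a part consisting of one $b$-item, $a_\ell$, and some $c$-items is perfectly harmless; only $S'=\{a_\ell,b_{\ell'}\}$ is problematic).

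The missing step, which is how the paper closes the argument, is that the witness \emph{always} survives, so the hypothetical branch you mishandle never occurs: $d$ and every item $a_{\ell'}$ lie outside $I_2$, hence they all sit in the single catch-all part $I\setminus I_2$ of $\cP_2$ together with $a_\ell$, and therefore in the same part of any coarsening $\cP'_2$. Consequently, if $d\in S$ then $d\in S'$, and if $a_{\ell'}\in S$ then $a_{\ell'}\in S'$; combined with the no-$b$-item case (which passes to subsets trivially), Observation~\ref{obs:values} gives the equality directly, and the first claim follows by taking $\cP'_2=\{I\}$. You state the relevant structural fact ($a_\ell\in I\setminus I_2$ for every $\cP'_2$) but never apply it to the witnesses, which is precisely the step that makes the lemma go through.
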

\begin{proof}
We prove the second part of the lemma. The first part follows from it since one possible choice of $\cP'_2$ is $\{I\}$. Since $a_\ell$ is redundant in $P_1$, one of the following three cases must hold:
\begin{itemize}
	\item The first case is when $d \in S$. Observe that $d$ must share a part with $a_\ell$ in $\cP'_2$, and thus, $d$ belongs also to $S'$. Hence, by Observation~\ref{obs:values}, $\mu(S') \cdot v(S') = (3n + 1)^{-1} = \mu(S' \setminus \{a_\ell\}) \cdot v(S' \setminus \{a_\ell\})$.
	\item The second case is when $(\{b_\ell\}_{\ell = 1}^n \cup \{d\}) \cap S = \varnothing$. Since $S'$ is a subset of $S$, we get also $(\{b_\ell\}_{\ell = 1}^n \cup \{d\}) \cap S' = \varnothing$, and thus, by Observation~\ref{obs:values}, $\mu(S') \cdot v(S') = 0 = \mu(S' \setminus \{a_\ell\}) \cdot v(S' \setminus \{a_\ell\})$.
	\item The third case is when there exits $\ell' \neq \ell$ for which $a_{\ell'} \in S$. Since $a_\ell$ and $a_{\ell'}$ must share a part in $\cP_2$, they both belong also to $S'$. Using Observation~\ref{obs:values} one can verify that removing $a_\ell$ from a set $S'$ containing $a_{\ell'}$ can never change $\mu(S') \cdot v(S')$. \qedhere
\end{itemize}
\end{proof}

A strategy $\cP'_1$ of $t_1$ is called \emph{$a$-helped} if every part in it that contains an item of $\{b_\ell\}_{\ell = 1}^n$ contains also an item of $\{a_\ell\}_{\ell = 1}^n$. The next lemma shows that every strategy of $t_1$ is dominated by an $a$-helped one.
\begin{lemma} \label{lem:make_helper}
If $\cP'_1$ is a strategy of $t_1$, then there exists an $a$-helped strategy $\cP''_1$ of $t_1$ such that: $\Pi_1(\cP''_1, \cP'_2) \geq \Pi_1(\cP'_1, \cP'_2)$ for every strategy $\cP'_2$ of $t_2$. Moreover, for every pair of items from $\{b_\ell\}_{\ell = 1}^n \cup \{d\}$, $\cP''_1$ separates this pair (i.e., each item of the pair appear in a different part of $\cP''_1$) if and only if $\cP'_1$ does.
\end{lemma}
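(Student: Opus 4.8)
The plan is to construct $\cP''_1$ from $\cP'_1$ by a local modification that moves $a$-items where they are needed and uses redundancy (Lemma~\ref{lem:redundent_no_loss}) to show nothing is lost. First I would classify the parts of $\cP'_1$: call a part \emph{deficient} if it contains some $b_\ell$ but no $a$-item, and call an $a$-item \emph{free} if it is redundant in $\cP'_1$ (by the definition of redundancy, an $a$-item sitting in a part with $d$, or with another $a$-item, or with no $b$-item, is redundant). The idea is to repair deficient parts by pulling a free $a$-item out of wherever it currently sits and inserting it into a deficient part. A counting step is needed here: since $t_1$ is a local expert for $\cI_1 = \{a_\ell, b_\ell\}_{\ell=1}^n$, a part that is not one of the singletons $\{a_\ell\}$ or $\{b_\ell\}$ and not the big ``leftover'' part cannot exist---so every $b_\ell$ appearing without an $a$-item sits in a singleton $\{b_\ell\}$ or in a part of the leftover form; in either case at most one deficient part per $b$-index arises, and there are exactly $n$ $a$-items total. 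I would argue that whenever a deficient part exists there is a free $a$-item available to fix it (e.g.\ an $a$-item in the leftover part, or two $a$-items sharing a part, one of which is then free); iterating, we reach an $a$-helped strategy $\cP''_1$.

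Next I would verify the two claimed properties. For the payoff inequality $\Pi_1(\cP''_1,\cP'_2) \ge \Pi_1(\cP'_1,\cP'_2)$, recall from~\eqref{def_shapley_val_2} that $\Pi_1$ is a nonnegative combination of marginal contributions $v(a_{J\cup\{1\}}) - v(a_J)$ over $J \subseteq \{1,2\}\setminus\{1\} = \{2\}$, i.e.\ $J \in \{\varnothing, \{2\}\}$; in both cases the term $v(a_J)$ does not involve $t_1$'s strategy, so it suffices to show $v$ does not decrease when $t_1$ switches from $\cP'_1$ to $\cP''_1$, for both $\cP'_2 = \{I\}$ and the actual $\cP'_2$. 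Each single move---removing a redundant $a$-item from its current part and adding it to a deficient part---does not decrease the revenue: removal costs nothing by Lemma~\ref{lem:redundent_no_loss} (the $a$-item was redundant, and redundancy is preserved under intersecting with any $\cP'_2$, which is exactly the ``moreover'' clause of that lemma), while insertion into a deficient part $S$ containing some $b_\ell$ strictly turns a part of contribution $\eps/(3n+1)$ (a singleton $\{b_\ell\}$) or $0$ into a part with a $b$-item and $|S|\ge 2$, hence contribution $(3n+1)^{-1}$, by Observation~\ref{obs:values}; and after intersecting with $\cP'_2$ the moved $a_\ell$ stays in the part of its own $b_\ell$ (since $\cP_2$'s field is $\cI_2 = \{b_\ell,c_\ell\}$, it never separates $a_\ell$ from $b_\ell$), so the gain survives in the joint partition. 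For the ``moreover'' clause about pairs from $\{b_\ell\}_{\ell=1}^n \cup \{d\}$: moving $a$-items never merges or splits two $b$-items or a $b$-item and $d$, since we only relocate $a$-items, so the separation pattern on $\{b_\ell\}\cup\{d\}$ is literally unchanged.

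The main obstacle I anticipate is the availability argument---showing that a free (redundant) $a$-item always exists as long as a deficient part remains. This needs the local-expert structure of $\cP'_1$ together with a pigeonhole on the $n$ $a$-items versus the deficient parts: one must rule out the case where all $n$ $a$-items are ``pinned'' (each alone with exactly one distinct $b$-item, none redundant) while a deficient part still exists, which cannot happen because a pinned $a$-item is already helping a $b$-item, so if all $a$-items are pinned then every $b$-item that shares with an $a$-item is already helped, and any remaining deficient part would need a $b$-item with no $a$-item and no available $a$-item to give it---but then that $b$-item contributes only $\eps/(3n+1)$ and we can instead relocate any pinned $a_{\ell'}$ whose $b_{\ell'}$ currently contributes $(3n+1)^{-1}$, which requires a slightly more careful exchange argument (swap rather than move) to guarantee monotonicity. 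I would handle this by allowing the repair step to be a swap (pull $a_{\ell'}$ from $\{a_{\ell'},b_{\ell'}\}$, which drops that pair's contribution but $b_{\ell'}$ can be re-paired, or merged with the now-deficient chain) and tracking a potential equal to the number of helped $b$-items, showing it strictly increases while revenue is nondecreasing, so the process terminates at an $a$-helped strategy.
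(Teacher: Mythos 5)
Your overall strategy---iteratively relocate a redundant $a$-item into a part that has a $b$-item but no $a$-item, invoking Lemma~\ref{lem:redundent_no_loss} for the removal and Observation~\ref{obs:values} for the insertion, and noting that only $a$-items move so the separation pattern on $\{b_\ell\}_{\ell=1}^n\cup\{d\}$ is untouched---is exactly the paper's proof (an induction on the number of ``deficient'' parts). However, two steps of your writeup are wrong and one is left open. First, a strategy of a local expert is any \emph{coarsening} of $\cP_1=\{\{a_\ell\},\{b_\ell\}\}_{\ell=1}^n\cup\{I\setminus\cI_1\}$, so its parts are arbitrary unions of these cells; your claim that every part of $\cP'_1$ must be a singleton $\{a_\ell\}$, $\{b_\ell\}$ or the leftover block is false, and the counting you base on it does not apply. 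Second, $\cP'_2$ \emph{can} separate $a_\ell$ from $b_\ell$ (the $b$-items are singleton cells of $\cP_2$, while the $a$-items sit in $I\setminus\cI_2$), so your justification that ``the moved $a_\ell$ stays in the part of its own $b_\ell$'' is incorrect; fortunately the conclusion you need there follows without it, since by Observation~\ref{obs:values} adding an item to any part of the joint partition never decreases its contribution.

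The more serious issue is the availability argument, which you correctly identify as the crux but do not close: you end by proposing an unverified swap-plus-potential-function scheme. No such machinery is needed. If some part contains a $b$-item but no $a$-item, then it is impossible that every part containing an $a$-item contains exactly one $a$-item and at least one $b$-item---for then there would be $n$ pairwise disjoint such parts, absorbing all $n$ items of $\{b_\ell\}_{\ell=1}^n$ and leaving none for the deficient part. Hence some part contains two $a$-items, or some part contains an $a$-item and no $b$-item; in either case that $a$-item is redundant by definition, and the single ``move'' step applies. This pigeonhole is all the paper uses, and with it your induction (on the number of deficient parts, which drops by one at each step) goes through.
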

\begin{proof}
Let $D(\cP'_1)$ be the number of parts in $\cP'_1$ that contain an item of $\{b_\ell\}_{\ell = 1}^n$ but no items of $\{a_\ell\}_{\ell = 1}^n$. We prove the lemma by induction on $D(\cP'_1)$. If $D(\cP'_1) = 0$, then $\cP'_1$ is $a$-helped and we are done. It remains to prove the lemma for $D(\cP'_1) > 0$ assuming that it holds for every strategy $\hat{\cP}'_1$ for which $D(\cP'_1) > D(\hat{\cP}'_1)$.

Since $D(\cP'_1) > 0$ and the number of $\{a_\ell\}_{\ell = 1}^n$ items and $\{b_\ell\}_{\ell = 1}^n$ items is equal, there must be in $\cP'_1$ either a part that contains at least two items of $\{a_\ell\}_{\ell = 1}^n$ or a part that contains an item of $\{a_\ell\}_{\ell = 1}^n$ and no items of $\{b_\ell\}_{\ell = 1}^n$. Both options imply the existence of a redundant item $a_{\ell'}$ in $\cP'_1$. Additionally, let $S$ be one of the parts of $\cP'_1$ counted by $D(\cP'_1)$, i.e., a part that contains an item of $\{b_\ell\}_{\ell = 1}^n$ but no items of $\{a_\ell\}_{\ell = 1}^n$.

Consider the strategy $\tilde{\cP}'_1$ obtained from $\cP'_1$ by moving $a_{\ell'}$ to the part $S$. By Lemma~\ref{lem:redundent_no_loss} the contribution to the revenue of the part containing $a_{\ell'}$ in $\cP'_1$ does not decrease following the removal of $a_{\ell'}$ from it. On the other hand, Observation~\ref{obs:values} implies that adding items to a part can only increase its contribution. Hence, the contribution of $S$ to the revenue is at least as large in $\tilde{\cP}'_1$ as in $\cP'_1$. Combining both arguments, we get: $R(\tilde{\cP}'_1) \geq R(\cP'_1)$. An analogous argument can be used to show also that $R(\tilde{\cP}'_1 \cappart \cP'_2) \geq R(\cP'_1 \cappart \cP'_2)$ for every strategy $\cP'_2$ of $t_2$. Thus, by definition, $\Pi_1(\tilde{\cP}'_1, \cP'_2) \geq \Pi_1(\cP'_1, \cP'_2)$.

Observe that the construction of $\tilde{\cP}'_1$ from $\cP'_1$ guarantees that $D(\tilde{\cP}'_1) = D(\cP'_1) - 1$. Hence, by applying the induction hypothesis to $\tilde{\cP}'_1$ we get a strategy $\cP''_1$ which, for every strategy $\cP'_2$ of $t_2$, obeys the inequality:
\[
	\Pi_1(\cP''_1, \cP'_2)
	\geq
	\Pi_1(\tilde{\cP}'_1, \cP'_2)
	\geq
	\Pi_1(\cP'_1, \cP'_2)
\]
Moreover, the construction of $\tilde{\cP}'_1$ from $\cP'_1$ does not move any items of $\{b_\ell\}_{\ell = 1}^n \cup \{d\}$, thus, $\tilde{\cP}'_1$ separates pairs from this set if and only if $\cP'_1$ does. The lemma now follows since the induction hypothesis guarantees that $\cP''_1$ separates pairs from the above set if and only if $\tilde{\cP}'_1$ does.
\end{proof}

Using the previous lemma we can prove an important property of not strictly dominated strategies of $t_1$.

\begin{lemma} \label{lem:isolation}
If $\cP'_1$ is not a strictly dominated strategy of $t_1$, then $\cP'_1$ isolates the items of $\{b_\ell\}_{\ell = 1}^n \cup \{d\}$ from each other.
\end{lemma}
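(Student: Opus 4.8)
The plan is to show that if a strategy $\cP'_1$ of $t_1$ fails to isolate some pair from $\{b_\ell\}_{\ell=1}^n \cup \{d\}$, then $t_1$ has a strictly better strategy, contradicting the hypothesis. By Lemma~\ref{lem:make_helper} we may assume without loss of generality that $\cP'_1$ is $a$-helped, since passing to an $a$-helped dominating strategy does not change which pairs from $\{b_\ell\}_{\ell=1}^n \cup \{d\}$ are separated; so it suffices to prove the claim for $a$-helped $\cP'_1$. Suppose then that some part $S \in \cP'_1$ contains two items $x,y \in \{b_\ell\}_{\ell=1}^n \cup \{d\}$. I would split off one of these items (say move $y$ into its own singleton part, or more carefully redistribute the contents of $S$) to obtain a new strategy $\tilde{\cP}'_1$, and argue that for every response $\cP'_2$ of $t_2$ we have $R(\tilde{\cP}'_1 \cappart \cP'_2) \geq R(\cP'_1 \cappart \cP'_2)$, with strict inequality for at least one $\cP'_2$ (e.g.\ $\cP'_2 = \{I\}$), which by the definition of the Shapley payoff $\Pi_1$ (an average over orderings of marginal contributions, each a difference of revenues) gives $\Pi_1(\tilde{\cP}'_1,\cP'_2) \geq \Pi_1(\cP'_1,\cP'_2)$ for all $\cP'_2$ and strictly for one — i.e.\ $\cP'_1$ is strictly dominated.

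The revenue comparison is where Observation~\ref{obs:values} does the work. The key point is accounting: a part can contribute at most $(3n+1)^{-1}$, it contributes exactly that as soon as it contains a $b$-item together with any other item (or contains $d$ with any other item), and an isolated $b$-item still contributes $\eps/(3n+1) > 0$. So a part $S$ packing together several items of $\{b_\ell\}_{\ell=1}^n \cup \{d\}$ is "wasteful": it yields only $(3n+1)^{-1}$ where, after separating, the pieces can collectively yield more. Concretely, if $S$ contains $b_\ell$ and $b_{\ell'}$, splitting $b_{\ell'}$ off (possibly together with some $a$-item to keep things $a$-helped, which is always possible since $\cP'_1$ is $a$-helped so there is an $a$-item paired with the $b$'s — wait, here I must be careful about which part keeps an $a$-item) turns one part worth $(3n+1)^{-1}$ into two parts, one still worth $(3n+1)^{-1}$ (it retains a $b$-item and an $a$-item) and one worth either $(3n+1)^{-1}$ (if the split-off piece still contains $\geq 2$ items including a $b$-item, or contains $d$) or at least $\eps/(3n+1)$ (if it becomes an isolated $b$-item). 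Either way the total is strictly larger. If instead $S$ contains $d$ and some $b_\ell$, the same split applies. The same inequalities hold after intersecting with any $\cP'_2$, since $\cP'_2$ only refines parts further and, by Observation~\ref{obs:values}, the contribution of a refined piece containing a $b$-item or $d$ is governed by the same case analysis.

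The main obstacle I anticipate is the bookkeeping around the $a$-helped condition: when I split a part $S$ containing two items of $\{b_\ell\}_{\ell=1}^n \cup \{d\}$, I need the resulting strategy to still be a legitimate partition of $I$ and, ideally, to remain $a$-helped so that Lemma~\ref{lem:make_helper}'s hypotheses stay available for any further induction — and I must make sure that \emph{every} $b$-containing part on both sides of the split still has an $a$-item available, which requires tracking that the number of $a$-items equals the number of $b$-items and that the split doesn't strand a $b$-part without an $a$-partner. A cleaner route may be to not insist on preserving $a$-helpedness during the split, but instead argue directly: first reduce to $a$-helped via Lemma~\ref{lem:make_helper}, then observe that in an $a$-helped strategy each $b$-part already contains an $a$-item, so moving one $b$-item (or $d$) out into a singleton costs nothing on the part it left (still $(3n+1)^{-1}$ by Observation~\ref{obs:values} since a $b$- or $d$-item together with its $a$-companion remains) and gains $\eps/(3n+1) > 0$ on the new singleton — and then apply Lemma~\ref{lem:make_helper} once more to re-$a$-help if needed, invoking its "separates the same pairs" clause to preserve the progress. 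This reduces the whole argument to a single clean application of Observation~\ref{obs:values} plus two invocations of Lemma~\ref{lem:make_helper}.
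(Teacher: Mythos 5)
There is a genuine gap, and it is exactly at the step you flag as the ``cleaner route.'' Your plan rests on the claim that after splitting a non-isolated item of $\{b_\ell\}_{\ell=1}^n \cup \{d\}$ off into its own part, one has $R(\tilde{\cP}'_1 \cappart \cP'_2) \geq R(\cP'_1 \cappart \cP'_2)$ for \emph{every} $\cP'_2$, because ``$\cP'_2$ only refines parts further.'' This is false: refinement by $\cP'_2$ can strand the items left behind. Concretely, take $S = \{b_{\ell'}, b_{\ell''}, a_j\} \in \cP'_1$ and a $\cP'_2$ that keeps $b_{\ell'}$ together with $a_j$ (recall $a_j \in I \setminus \cI_2$) but separates $b_{\ell''}$. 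Then $\cP'_1 \cappart \cP'_2$ contains the part $\{b_{\ell'}, a_j\}$, worth $(3n+1)^{-1}$; after you move $b_{\ell'}$ to a singleton, this becomes $\{a_j\}$, worth $0$, and the new singleton $\{b_{\ell'}\}$ recovers only $\eps/(3n+1)$. The joint revenue \emph{drops} by roughly $(3n+1)^{-1}$, while $R(\tilde{\cP}'_1)$ gains only $\eps/(3n+1)$, so with $\eps$ small the Shapley payoff $\Pi_1 = \tfrac12[R(\cP'_1)-R(\{I\})] + \tfrac12[R(\cP'_1 \cappart \cP'_2)-R(\cP'_2)]$ strictly \emph{decreases} against this $\cP'_2$. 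Your deviation is therefore not even weakly improving, let alone strictly dominating.

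The paper's proof is built precisely to absorb this loss. It first extracts a \emph{redundant} $a$-item $a_{\ell''}$ (whose removal from its own part costs nothing, by Lemma~\ref{lem:redundent_no_loss}) and places it \emph{together with} $b_{\ell'}$ in the new part, so that the new part contributes a full $(3n+1)^{-1}$ to $R(\tilde{\cP}'_1)$ rather than $\eps/(3n+1)$. It then analyzes the \emph{sum} $R(\tilde{\cP}'_1) + R(\tilde{\cP}'_1 \cappart \cP'_2)$, not each term separately: the gain of $(3n+1)^{-1}$ in the first term cancels the worst-case loss of $(3n+1)^{-1}$ in the second, and the strict surplus $A>0$ comes from the new part's own (refined) contribution to the joint revenue, which is at least $\eps/(3n+1)$ in every case. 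A secondary issue: you conclude ``strictly dominated'' from weak improvement against all $\cP'_2$ plus strictness against one; that is only weak dominance, which does not exclude $\cP'_1$ from a Nash equilibrium, so even with the revenue inequalities repaired you would need strictness against every $\cP'_2$ (which the paper's combined-sum bound delivers).
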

\begin{proof}
Assume towards a contradiction that the lemma does not hold, and let $\cP'_1$ be a counter example. In other words, $\cP'_1$ does not isolate the items of $\{b_\ell\}_{\ell = 1}^n \cup \{d\}$ from each other, and yet there exists a strategy $\cP'_2$ of $t_2$ such that every strategy $\hat{\cP}'_1$ of $t_1$ obeys $\Pi_1(\hat{\cP}'_1, \cP'_2) \leq \Pi_1(\cP'_1, \cP'_2)$. By Lemma~\ref{lem:make_helper} we may assume that $\cP'_1$ is an $a$-helped strategy (otherwise, we can replace $\cP'_1$ with the strategy whose existence is guaranteed by this lemma).

Let $b_{\ell'}$ be an item that is not isolated by $\cP'_1$ from some other item of $\{b_\ell\}_{\ell = 1}^n \cup \{d\}$. The existence of $b_{\ell'}$ implies the existence of a redundant item $a_{\ell''}$ in $\cP'_1$ because one of the following must be true:
\begin{itemize}
	\item $b_{\ell'}$ shares a part in $\cP'_1$ with another item of $\{b_\ell\}_{\ell = 1}^n$. Since the number of $\{a_\ell\}_{\ell = 1}^n$ items is equal to the number of $\{b_\ell\}_{\ell = 1}^n$ items, there must be either a part of $\cP'_1$ that contains an item of $\{a_\ell\}_{\ell = 1}^n$ but no items of $\{b_\ell\}_{\ell = 1}^n$ or a part of $\cP'_1$ that contains two items of $\{a_\ell\}_{\ell = 1}^n$.
	\item $b_{\ell'}$ shares a part in $P'_1$ with the item $d$. Since $\cP'_1$ is $a$-helped, this part must contain also an item of $\{a_\ell\}_{\ell = 1}^n$ (which is redundant).
\end{itemize}

Consider the strategy $\tilde{\cP}'_1$ obtained from $\cP'_1$ by removing the items $a_{\ell''}$ and $b_{\ell'}$ from their original parts and placing them together in a new part. Let us analyze $R(\tilde{\cP}'_1) + R(\tilde{\cP}'_1 \cappart \cP'_2)$. Since $a_{\ell''}$ is redundant, its removal from its original part $\tilde{\cP}'_1$ does not decrease the contribution of this part to either revenue by Lemma~\ref{lem:redundent_no_loss}. Additionally, the removal of $a_{\ell''}$ leaves $b_{\ell'}$ either sharing a part in $P'_1$ with $d$ or with another item of $\{a_\ell\}_{\ell = 1}^n$ and another item of $\{b_\ell\}_{\ell = 1}^n$. In both cases, the removal of $b_{\ell'}$ does not affect the contribution of its part to $R(\tilde{\cP}'_1)$ by Observation~\ref{obs:values}. On the other hand, the removal of $b_{\ell'}$ can decrease the contribution of its part to the revenue $R(\tilde{\cP}'_1 \cappart \cP'_2)$. However, Observation~\ref{obs:values} guarantees that this decrease is at most $(3n + 1)^{-1}$. Finally, the contribution of the new part $\{a_{\ell''}, b_{\ell'}\}$ to $R(\tilde{\cP}'_1)$ is $(3n + 1)^{-1}$. Combining all these observations, we get:
\begin{align} \label{eq:change}
	R(\tilde{\cP}'_1) + R(\tilde{\cP}'_1 \cappart \cP'_2)
	\geq{} &
	R(\cP'_1) + R(\cP'_1 \cappart P_2) - (3n + 1)^{-1} + (3n + 1)^{-1} + A \\ \nonumber
	={} &
	R(\cP'_1) + R(\cP'_1 \cappart P_2) + A
	\enspace,
\end{align}
where $A$ is the contribution of parts of $\tilde{\cP}'_1 \cappart \cP'_2$ that are subsets of $\{a_{\ell''}, b_{\ell'}\}$ to $R(\tilde{\cP}'_1 \cappart \cP'_2)$. To get a contradiction it is enough to show that $A > 0$, i.e., that at least one of these parts has a positive contribution to $R(\tilde{\cP}'_1 \cappart \cP'_2)$. There are two cases to consider:
\begin{itemize}
	\item If $b_{\ell'}$ shares a part with $a_{\ell''}$ in $\cP'_2$, then the part $\{a_{\ell''}, b_{\ell'}\}$ appears in $\tilde{\cP}'_1 \cappart \cP'_2$ and contributes $(3n + 1)^{-1}$ to $R(\tilde{\cP}'_1 \cappart P_2)$.
	\item If $\cP'_2$ separates the item $b_{\ell'}$ from $a_{\ell''}$, then the part $\{b_{\ell'}\}$ appears in $\tilde{\cP}'_1 \cappart \cP'_2$ and contributes $\eps/(3n + 1)$ to $R(\tilde{\cP}'_1 \cappart P_2)$. \qedhere
\end{itemize}
\end{proof}

\begin{corollary} \label{cor:isolation}
If $\cP'_1$ and $\cP'_2$ are strategies for $t_1$ and $t_2$ that form a Nash equilibrium, then both $\cP'_1$ and $\cP'_2$ isolate the items of $\{b_\ell\}_{\ell = 1}^n \cup \{d\}$ from each other.
\end{corollary}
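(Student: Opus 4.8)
The plan is to deduce the corollary directly from Lemma~\ref{lem:isolation}, using the symmetry of the instance $\DSP_n$ to handle the second mediator. First I would recall that in any Nash equilibrium $(\cP'_1,\cP'_2)$ the strategy $\cP'_1$ is a best response to $\cP'_2$; in particular there exists a strategy of $t_2$ (namely $\cP'_2$ itself) against which no deviation of $t_1$ strictly increases its payoff, so $\cP'_1$ is \emph{not} a strictly dominated strategy of $t_1$ in the sense used in the statement of Lemma~\ref{lem:isolation}. Applying that lemma immediately gives that $\cP'_1$ isolates the items of $\{b_\ell\}_{\ell=1}^n\cup\{d\}$ from each other.

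It then remains to obtain the same conclusion for $\cP'_2$. The cleanest route is to observe that $\DSP_n$ admits an automorphism exchanging the two mediators: relabel $a_\ell\leftrightarrow c_\ell$ for every $\ell$ while keeping all $b$ items, the item $d$, and all bidders fixed. Since bidder $i_G$ has value $1$ on every $a$ item and on every $c$ item (and the remaining bidders value neither), the valuation matrix $V$ is invariant under this relabeling; the uniform prior $\mu$ is of course also invariant; and $\cI_1$ is mapped onto $\cI_2$ and vice versa. Hence the relabeling maps the strategy space of $t_1$ onto that of $t_2$ and preserves every revenue $R(\cappart_{t}\cP'_t)$, and therefore every Shapley payoff $\Pi_t(\cdot)$. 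Consequently Lemma~\ref{lem:isolation}, together with the auxiliary Lemmas~\ref{lem:redundent_no_loss} and~\ref{lem:make_helper} on which it rests, holds verbatim with the roles of $t_1$ and $t_2$ (and of the $a$ and $c$ items) interchanged. Because $\cP'_2$ is a best response to $\cP'_1$ in the equilibrium, it is likewise not strictly dominated, so the mirrored version of Lemma~\ref{lem:isolation} shows that $\cP'_2$ also isolates the items of $\{b_\ell\}_{\ell=1}^n\cup\{d\}$, which completes the proof.

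The argument is essentially immediate given Lemma~\ref{lem:isolation}, so there is no substantial obstacle; the only point that needs a little care is the passage from $t_1$ to $t_2$. One must make sure that the $a\leftrightarrow c$ relabeling really is a game isomorphism, so that both ``not strictly dominated'' and ``isolates $\{b_\ell\}_{\ell=1}^n\cup\{d\}$'' transfer across it. If one prefers to avoid invoking symmetry explicitly, the alternative is to simply repeat the proofs of Lemmas~\ref{lem:redundent_no_loss}--\ref{lem:isolation} with $c$ items in place of $a$ items and $t_2$ in place of $t_1$, which is entirely routine.
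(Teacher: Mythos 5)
Your proposal is correct and follows the paper's own argument exactly: a Nash equilibrium strategy is a best response to the opponent's equilibrium strategy, hence not strictly dominated in the sense required by Lemma~\ref{lem:isolation}, and the conclusion for $\cP'_2$ follows by the $a_\ell\leftrightarrow c_\ell$ symmetry of $\DSP_n$. The paper simply writes ``by symmetry'' where you spell out the automorphism; that extra care is welcome but not a different approach.
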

\begin{proof}
Since $\cP'_1$ is a part of a Nash equilibrium, it is not strictly dominated. Hence, by Lemma~\ref{lem:isolation}, it must isolate the items of $\{b_\ell\}_{\ell = 1}^n \cup \{d\}$ from each other. The corollary holds also for $\cP'_2$ by symmetry.
\end{proof}

We are now ready to analyze the price of stability of $\DSP_n$.

\begin{theorem} \label{thm:dsp_n_stability}
The price of stability of $\DSP_n$ is at least $(n + 1)/(n\eps + 1)$. Hence, for $\eps = 1/n^2$, the price of stability of $\DSP_n$ is at least $n$.
\end{theorem}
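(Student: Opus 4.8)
The plan is to determine the revenue of \emph{every} Nash equilibrium of $\DSP_n$ exactly and then divide the lower bound on the optimal revenue from Lemma~\ref{lem:opt_revenue} by it. The workhorse is Corollary~\ref{cor:isolation}: in any Nash equilibrium $(\cP'_1, \cP'_2)$, both reported partitions isolate the items of $\{b_\ell\}_{\ell=1}^n \cup \{d\}$ from one another. Before using this I would observe that a Nash equilibrium exists at all, so that the price of stability is well defined; this is immediate from Theorem~\ref{thm_pure_eq}, since the Shapley mechanism induces an exact potential game.

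Next I would analyze the joint partition $\cP'_1 \cappart \cP'_2$ in an arbitrary equilibrium. Since $t_1$ is a local expert with field $\cI_1 = \{a_\ell, b_\ell\}_{\ell=1}^n$, the atoms of $\cP_{t_1}$ are the singletons $\{a_\ell\}$, $\{b_\ell\}$ and the single base set $\{c_1, \dots, c_n, d\}$, and $\cP'_1$ is obtained by merging atoms. The base set contains $d$, so by isolation no $b_\ell$ may be merged into it, and by isolation no two $b$-items may be merged together; hence in $\cP'_1$ the part holding $b_\ell$ has the form $\{b_\ell\} \cup A$ with $A \subseteq \{a_1, \dots, a_n\}$. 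Symmetrically, in $\cP'_2$ the part holding $b_\ell$ has the form $\{b_\ell\} \cup C$ with $C \subseteq \{c_1, \dots, c_n\}$. Intersecting these two parts, as the joint partition does, leaves exactly the singleton $\{b_\ell\}$. Likewise $d$ sits with all of $c_1, \dots, c_n$ in $\cP'_1$ (it is in the base atom of $\cP_{t_1}$, which can only be merged, never split) and with all of $a_1, \dots, a_n$ in $\cP'_2$, and no $b$-item shares its $\cP'_1$-part; so the part of $\cP'_1 \cappart \cP'_2$ containing $d$ contains $d$ but no $b$-item, while every remaining part consists solely of $a$- and $c$-items. Invoking Observation~\ref{obs:values}: each of the $n$ singletons $\{b_\ell\}$ contributes $\eps/(3n+1)$, the part containing $d$ contributes $(3n+1)^{-1}$, and all other parts contribute $0$, so $R(\cP'_1 \cappart \cP'_2) = (n\eps + 1)/(3n+1)$ for \emph{every} equilibrium.

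Finally I would combine the bounds. By Lemma~\ref{lem:opt_revenue} the optimal revenue $R(\cP^*)$ is at least $(n+1)/(3n+1)$, so for any equilibrium the ratio $R(\cP^*)/R(\cP'_1 \cappart \cP'_2)$ is at least $\frac{(n+1)/(3n+1)}{(n\eps+1)/(3n+1)} = \frac{n+1}{n\eps+1}$, and taking the minimum over equilibria yields the claimed lower bound on the price of stability. Substituting $\eps = 1/n^2$ gives $\frac{n+1}{n\cdot n^{-2}+1} = \frac{n+1}{n^{-1}+1} = n$.

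I do not expect a real obstacle: the difficult part, namely forcing \emph{both} mediators to isolate the $b$-items in equilibrium, has already been handled in Lemmas~\ref{lem:redundent_no_loss}--\ref{lem:isolation} and Corollary~\ref{cor:isolation}. The only care needed is the bookkeeping of admissible merges — in particular that $b_\ell$ cannot join the base atom once $d$ is pinned there — together with the observation that intersecting a part of the form $\{b_\ell\}\cup A$ (with $A$ an $a$-set) with a part of the form $\{b_\ell\}\cup C$ (with $C$ a $c$-set) collapses to the singleton $\{b_\ell\}$. It is precisely this collapse that drives the equilibrium revenue all the way down to $(n\eps+1)/(3n+1)$, rather than it merely being bounded by the value of the silent profile.
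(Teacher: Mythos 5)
Your proposal is correct and follows the paper's own proof essentially verbatim: it invokes Corollary~\ref{cor:isolation} to force every $b_\ell$ into a singleton part of the joint partition in any equilibrium, computes the equilibrium revenue $(n\eps+1)/(3n+1)$ via Observation~\ref{obs:values}, and divides the bound of Lemma~\ref{lem:opt_revenue} by it. Your extra bookkeeping (the explicit $\{b_\ell\}\cup A$ versus $\{b_\ell\}\cup C$ intersection argument, and the remark on equilibrium existence via Theorem~\ref{thm_pure_eq}) only spells out details the paper leaves implicit.
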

\begin{proof}
Consider an arbitrary Nash equilibrium $(\cP'_1, \cP'_2)$ of $\DSP_n$. By Corollary~\ref{cor:isolation}, both $\cP'_1$ and $\cP'_2$ must isolate the items of $\{b_\ell\}_{\ell = 1}^n \cup \{d\}$ from each other. However, every other item of $I$ must share a part with $d$ in at least one of these partitions, and thus, every item of $\{b_\ell\}_{\ell = 1}^n$ has a singleton part in $\cP'_1 \cappart \cP'_2$. Hence,
\[
	R(\cP'_1 \cappart \cP'_2)
	=
	\frac{n\eps + 1}{3n + 1}
	\enspace.
\]
Combining the last observation with Lemma~\ref{lem:opt_revenue}, we get that the price of stability of $\DSP_n$ is at least:
\[
	\frac{(n + 1)/(3n + 1)}{(n\eps + 1)/(3n + 1)}
	=
	\frac{n + 1}{n\eps + 1}
	\enspace.
	\qedhere
\]
\end{proof}

Note that Theorem~\ref{thm_unbounded_pos} follows immediately from Theorem~\ref{thm:dsp_n_stability}.

\subsection{The weaknesses of $\cS$ are inevitable} \label{ssc:inevitable}

Theorem \ref{thm_unbounded_pos} asserts that the revenue of the best equilibrium can be about $n$ times worse than the optimal revenue. 
This discouraging result raises the question of whether alternative payment rules can improve the revenue guarantees of the auctioneer. 
Unfortunately, Shapley's uniqueness theorem answers this question negatively, assuming one requires the mechanism to have some natural properties. 

We consider a family of games of the type considered in Subsection~\ref{sec_shapley}. Formally, let $\cF_m$ denote a family of $m$-player games where each player $t$ has the same finite set $A_t$ of possible strategies in all the games, one of which $\varnothing_t \in A_t$ is called the null strategy of $t$. Each game in the family is determined by an arbitrary value function $v : A_1 \times A_2 \times \ldots \times A_m \rightarrow \mathbb{R}$, and each possible such value function induces a game in $\cF_m$. Recall that a mechanism $M = (\Pi_1, \Pi_2, \ldots, \Pi_m)$ is a set of payments rules. In other words, if the players choose strategies $a_1 \in A_1, a_2 \in A_2, \ldots, a_m \in A_m$, then the payment of player $t$ under mechanism $M$ is $\Pi_t(v, a_1, a_2, \ldots, a_m)$.\footnote{Shapley's theorem is stated for cooperative games where players in the coalition can reallocate their payments within the coalition. In our non-cooperative setup, we assume side payments can be only introduced by the mechanism.}

\begin{theorem}[Uniqueness of Shapley Mechanism, cf.~\cite{Shapley53}]\label{thm_shapley_unique} 
Let $\cF_m$ be a family of games as described above. Then, the Shapley value mechanism $\cS$ is the only mechanism satisfying the following axioms:
\begin{enumerate}
\item (Normalization) For every player $t$, $\Pi_t(a) = 0$ whenever $a_t = \varnothing_t$.
\item (Anonymity) If $\cG_m \in \cF_m$ is a game with a strategy profile $a^*$ such that $v(a) = v(\varnothing)$ (recall that $\varnothing$ denotes the strategy profile $(\varnothing_1, \varnothing_2, \ldots, \varnothing_m)$) for every strategy profile $a \neq a^*$, then for every strategy profile $a$ and player $t$:
\[
	\Pi_t(a)
	=
	\begin{cases}
		0 & \text{if } a_t = \varnothing_t \enspace, \\
		\frac{v(a) - v(\varnothing)}{|\{t \in [m] \mid a_t \neq \varnothing_t \}|} & \text{otherwise} \enspace.
	\end{cases}
\]
\item (Additivity) If $\cG_m , \cH_m \in \cF_m$ are two games with value functions $v_g$ and $v_h$, then $\Pi_t^{v_g + v_h}(a) = \Pi_t^{v_g}(a) + \Pi_t^{v_h}(a)$ (where $\Pi_t^v(a)$ stands for the payment of player $t$ given strategy profile $a$ in the game defined by the value function $v$).
\end{enumerate}
\end{theorem}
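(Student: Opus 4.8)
The plan is to adapt the classical proof of Shapley's uniqueness theorem for cooperative games to the present non-cooperative framework. The argument splits into three parts: first, verify that the Shapley mechanism $\cS$ does satisfy the three axioms; second, exhibit a spanning set of ``simple'' value functions on which the axioms alone pin down the payments; third, propagate uniqueness from these generators to all games via Additivity.

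For the first part, Normalization is immediate from~\eqref{def_shapley_val_2}: if $a_t = \varnothing_t$ then $a_{J \cup \{t\}} = a_J$ for every $J \subseteq [m] \setminus \{t\}$ (adding a player who plays his null strategy does not change the profile), so every summand vanishes. Additivity of $\cS$ holds because~\eqref{def_shapley_val_2} is a fixed linear functional of $v$. The Anonymity property is the only part that needs real computation: for a game whose value equals $v(\varnothing)$ except at a single profile $a^*$, one substitutes into~\eqref{def_shapley_val_1} (averaging over random orderings) or~\eqref{def_shapley_val_2} and checks that the weighted marginal contributions collapse to the claimed expression; here the combinatorics of the weights $\gamma_J$ must be handled carefully, and one must confirm the resulting formula holds at \emph{every} strategy profile, not merely at $a^*$.

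For the second part, view value functions on the finite product $\mathcal{A} = A_1 \times \dots \times A_m$ as vectors in $\R^{\mathcal{A}}$. For each profile $a^* \neq \varnothing$ let $e_{a^*}$ be the value function equal to $1$ at $a^*$ and $0$ elsewhere; together with the constant function $\mathbf 1$, these $|\mathcal{A}|$ vectors are linearly independent (evaluate a vanishing combination first at $\varnothing$, then at each $a^*$) and hence a basis, and an arbitrary $v$ decomposes as
\[
	v = v(\varnothing) \cdot \mathbf 1 + \sum_{a^* \neq \varnothing} \bigl(v(a^*) - v(\varnothing)\bigr)\, e_{a^*} \enspace.
\]
Each summand is of the special shape to which the Anonymity axiom applies: the constant term, and, for each $a^*$, a value function that equals its value at $\varnothing$ off a single profile — the scalar $v(a^*) - v(\varnothing)$ is irrelevant, since the axiom's hypothesis constrains only the shape of the value function, not its magnitude. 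Consequently any mechanism $M$ satisfying Normalization and Anonymity is forced to take a prescribed value on every summand, and by the first part these prescribed values coincide with those of $\cS$.

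Finally, applying Additivity finitely many times along the displayed decomposition shows that $M$'s payment vector on $v$ equals the sum of its payment vectors on the summands, which equals the corresponding sum for $\cS$, which by additivity of $\cS$ is exactly $\cS$'s payment on $v$; hence $M = \cS$. I expect the main obstacle to be the Anonymity bookkeeping for $\cS$ together with making sure the chosen generators genuinely satisfy the axiom's hypothesis and that the axiom determines the payment at every profile, so that the additivity step can then be invoked without gaps.
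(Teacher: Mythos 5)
Your three-part skeleton (verify the axioms for $\cS$; pin the mechanism down on a spanning family of single-spike games via Normalization and Anonymity; propagate by Additivity) is the standard route for a Shapley-style uniqueness theorem, and it is essentially all the paper offers as well --- the paper gives no proof of this statement, only a remark that Shapley's cooperative-game argument ``goes through'' once the Anonymity axiom replaces fairness and efficiency. Your decomposition $v = v(\varnothing)\cdot\mathbf 1 + \sum_{a^*\neq\varnothing}(v(a^*)-v(\varnothing))\,e_{a^*}$ is correct, each summand does satisfy the hypothesis of the Anonymity axiom, and the uniqueness half of your argument is sound as far as it goes.

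The genuine gap is exactly the step you deferred: $\cS$ does \emph{not} satisfy the Anonymity axiom as written, so your Part 1 cannot be completed. Take $m=2$, $A_1=\{\varnothing_1,x\}$, $A_2=\{\varnothing_2,y\}$, and $v$ equal to $1$ at $a^*=(x,\varnothing_2)$ and $0$ elsewhere. At the profile $a=(x,y)$ the axiom demands $\Pi_1(a)=\tfrac{1}{2}(v(a)-v(\varnothing))=0$, whereas~\eqref{def_shapley_val_1} gives $\Pi_1(a)=\tfrac12[v(x,\varnothing_2)-v(\varnothing)]+\tfrac12[v(x,y)-v(\varnothing_1,y)]=\tfrac12$. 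The failure occurs at every profile that strictly extends $a^*$, i.e.\ agrees with $a^*$ on its non-null coordinates but activates additional players. Worse, if you carry your own uniqueness argument through, the mechanism it characterizes is the \emph{equal-split} rule $\Pi_t(a)=\frac{v(a)-v(\varnothing)}{|\{s\,:\,a_s\neq\varnothing_s\}|}$ for active $t$ (which is linear in $v$ and satisfies all three axioms verbatim), not $\cS$. So the statement needs repair before it can be proved: either Anonymity must be read as constraining payments only at the distinguished profile $a^*$ --- but then your spanning argument no longer determines $M$ at the other profiles of each generator $e_{a^*}$ and uniqueness is lost --- or one must use a different axiomatization, e.g.\ Shapley's symmetry/efficiency/carrier axioms applied, at each fixed profile $a$, to the induced coalitional game $K\mapsto v(a_K)-v(\varnothing)$. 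Your instinct that the ``Anonymity bookkeeping for $\cS$'' is the main obstacle was right; it is not bookkeeping but an actual inconsistency.
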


We note that the second axiom (``Anonymity'') is ubiquitous in market design, and is typically enforced by market regulations prohibiting
discrimination among clients. Intuitively, this axiom says that whenever there is only one strategy profile $a^*$ which produces a value other than $v(\varnothing)$, then, when $a^*$ is played, the mechanism is required to \emph{equally} distribute the surplus $v(a^*) - v(\varnothing)$ among the participants playing a non-null 
strategy in $a^*$. Observe that violating this axiom would require private contracts with (at least some of) the players (mediators). Implementing such contracts 
defeats one of the main purposes of our mechanism, namely that it can be easily implemented in a dynamic environment having an unstable mediators population.

The above theorem was originally proved in a \emph{cooperative} setting (where players may either join a coalition or not), under slightly different
axioms. The ``anonymity'' axiom of Theorem~\ref{thm_shapley_unique} replaces the fairness and efficiency axioms of the original theorem, and is sufficient for the uniqueness proof to go through in a non-cooperative setup such as the $\DSP$ game.

}


\section{Discussion} \label{sec:discussion}

In this paper we have considered computational and strategic aspects of auctions involving third party information mediators. Our main result for the computational point of view shows that it is NP-hard to get a reasonable approximation ratio when the three parameters of the problem are all ``large''. For the parameters $n$ and $k$ this is tight in the sense that there exists an algorithm whose approximation ratio is good when either one of these parameters is ``small''. However, we do not know whether a small value for the parameter $m$ allows for a good approximation ratio. More specifically, even understanding the approximation ratio achievable in the case $m = 2$ is an interesting open problem. Observe that the case $m = 2$ already captures (asymptotically) the largest possible price of stability and price of anarchy in the strategic setup, and thus, it is tempting to assume that this case also captures all the complexity of the computational setup.

Unfortunately, most of our results, for both the computational and strategic setups, are quite negative. The class of local experts we describe is a natural mediators class allowing us to bypass one of these negative result and get a constant approximation ratio algorithm for the computational setup. An intriguing potential avenue for future research is finding additional natural classes of mediators that allow for improved results, either under the computational or the strategic setup.


\bibliographystyle{alpha}
\bibliography{refs}

\end{document}